%% file: main.tex
\documentclass[11pt,a4paper]{article}

\input{preamble}
\input{macros}
\title{A Deterministic $0.401$-Approximation for Non-monotone
Submodular Maximization under Matroid Constraints}
\author[1,2]{Shengminjie Chen} 
\author[1,2]{Kaifeng Lin}
\author[1,2]{Xiaoming Sun}
\author[1,2]{Jialin Zhang}

\affil[1]{State Key Lab of Processors,
	Institute of Computing Technology,
	Chinese Academy of Sciences,
	Beijing 100190, China}

\affil[2]{School of Computer Science and Technology,
	University of Chinese Academy of Sciences,
	Beijing 100049, China}

\begin{document}
\maketitle
\begin{abstract}
Submodular maximization arises in a wide range of combinatorial optimization problems and has been extensively studied from the perspective of approximation algorithms. For maximizing nonnegative, non-monotone submodular functions subject to a matroid constraint, randomized algorithms achieve a $0.401$-approximation, whereas obtaining comparable deterministic guarantees remains challenging. In this paper, we present a deterministic $0.401$-approximation algorithm in the exact value and independence oracle model, improving upon the previous deterministic guarantee of $(0.385-\varepsilon)$ \cite{ChenGLSZ2026}. 

Our algorithm builds on the extended multilinear extension and provides a deterministic implementation of the recursive guided framework underlying the randomized $0.401$-approximation algorithm. The main technical ingredient is a direction oracle supported on at most two partitioned matroid bases. This oracle simultaneously satisfies the growth condition required by the extended-multilinear update and an approximate stationarity condition needed to construct the recursive guides. The resulting finite-support representation enables exact evaluation of the relevant expectations and marginal values without random sampling. The algorithm uses polynomially many value- and independence-oracle queries.
\end{abstract}

\input{sections/01-introduction}
\input{sections/02-preliminaries}
\input{sections/03-algorithm}
\input{sections/04-analysis}
\input{sections/04-complexity}
\input{sections/05-conclusion}

\bibliographystyle{plain}
\bibliography{references}

\end{document}

%% file: preamble.tex
\usepackage[T1]{fontenc}
\usepackage{natbib}
\usepackage{authblk}
\usepackage{lmodern}
\usepackage[margin=1in]{geometry}
\usepackage{amsmath,amssymb,amsthm}
\usepackage{microtype,needspace,etoolbox}
\usepackage{enumitem}
\usepackage[ruled,vlined,linesnumbered]{algorithm2e}
\usepackage{flafter}
\usepackage[
  hidelinks,
  pdftitle={A Deterministic 0.401-Approximation Algorithm for Non-monotone Submodular Maximization under a Matroid Constraint},
  pdfauthor={Anonymous manuscript},
  pdfsubject={Deterministic approximation algorithms for submodular maximization}
]{hyperref}
\usepackage{cleveref}

\allowdisplaybreaks
\setlist{nosep}
\newtheorem{theorem}{Theorem}
\newtheorem{lemma}[theorem]{Lemma}
\newtheorem{corollary}[theorem]{Corollary}

\theoremstyle{definition}

\BeforeBeginEnvironment{lemma}{\Needspace{5\baselineskip}}
\BeforeBeginEnvironment{theorem}{\Needspace{5\baselineskip}}
\BeforeBeginEnvironment{corollary}{\Needspace{4\baselineskip}}

\SetKw{Return}{return}
\SetKw{Continue}{continue}
\SetKw{Break}{break}
\SetKwInput{KwIn}{Input}
\SetKwInput{KwOut}{Output}
\SetAlgoCaptionSeparator{.}
\SetAlCapFnt{\small\bfseries}
\SetAlCapNameFnt{\small\bfseries}
\SetAlFnt{\small}
\DontPrintSemicolon

%% file: macros.tex
\newcommand{\E}{\mathbb E}

\newcommand{\OPT}{\operatorname{OPT}}

\newcommand{\calZ}{\mathcal Z}
\newcommand{\calT}{\mathcal T}

\DeclareMathOperator{\Mar}{Mar}

\DeclareMathOperator{\supp}{supp}
\DeclareMathOperator{\fracnum}{frac}

\newcommand{\R}{\mathbb{R}}
\newcommand{\Prb}{\mathbb{P}}
\newcommand{\1}{\mathbf{1}}
\newcommand{\calI}{\mathcal{I}}
\newcommand{\calM}{\mathcal{M}}
\newcommand{\rank}{\operatorname{rank}}

%% file: sections/01-introduction.tex
\section{Introduction}
\label{sec:introduction}
Submodularity, which formalizes the principle of diminishing returns, naturally arises in a broad range of combinatorial optimization problems, including maximum cut, facility location, and maximum entropy sampling~\cite{LeeMNS2010}. Consequently, the maximization of submodular functions subject to various combinatorial constraints has become a fundamental problem in operations research and theoretical computer science. Since submodular maximization is $\mathcal{NP}$-hard in general, approximation algorithms have been a central focus.

The seminal work of Nemhauser, Wolsey, and Fisher~\cite{NemhauserWF1978} established the $(1-1/e)$-approximation guarantee of the greedy algorithm for monotone submodular maximization under a cardinality constraint. This optimal approximation ratio was subsequently extended to a knapsack constraint by Sviridenko~\cite{Sviridenko2004} using partial enumeration, and to a matroid constraint by C{\u a}linescu et al.~\cite{CalinescuCPV2011} through the continuous greedy framework. For the matroid constraint, Filmus and Ward~\cite{filmus2014} later developed a combinatorial approach based on non-oblivious local search, providing an alternative to continuous-relaxation methods. More recently, Buchbinder and Feldman~\cite{BuchbinderF2024Deterministic} obtained a deterministic $(1-1/e-\varepsilon)$-approximation, matching the best randomized guarantee up to an arbitrarily small loss.

For non-monotone submodular objectives, adding an element to a set may decrease its function value, posing additional challenges for approximation algorithm design. Feldman, Naor, and Schwartz~\cite{feldman2011} developed the measured continuous greedy algorithm, which achieves a $1/e$-approximation by controlling the coordinate-wise growth of a fractional solution in the multilinear relaxation. Ene and Nguyen~\cite{ene2016} subsequently improved the approximation guarantee to $0.372$ by restricting the growth of fractional coordinates and combining the continuous greedy approach with unconstrained submodular maximization. Buchbinder and Feldman~\cite{buchbinder2019} further improved the approximation ratio to $0.385$ through a non-symmetric technique that uses a fractional local-search solution to guide the measured continuous greedy algorithm. Recently, Buchbinder and Feldman~\cite{BuchbinderF2024DR} achieved a $0.401$-approximation by establishing sharper bounds for DR-submodular functions and developing a recursive framework based on Frank--Wolfe-guided measured continuous greedy.\footnote{Very recently, Aggarwal~\cite{Aggarwal2026Geometry} obtained an offline $(4/9-\varepsilon)$-approximation for maximizing a nonnegative, non-monotone DR-submodular function over a down-closed convex set. This result is established in a continuous optimization setting with first-order and projection oracles, whereas we consider a discrete set-function model with value and matroid-independence oracles.}

While randomized algorithms have progressively improved the approximation guarantees for non-monotone submodular maximization, the design of deterministic algorithms remains of particular interest, both from a theoretical perspective and for applications where reproducibility, worst-case guarantees, or the avoidance of random sampling is desirable. A central question in submodular maximization is whether deterministic algorithms can match the approximation guarantees obtained using randomization. This has motivated a parallel line of work seeking deterministic approximation algorithms that retain competitive approximation guarantees while maintaining low computational and oracle complexity.

For unconstrained maximization, Feige, Mirrokni, and Vondr{\'a}k~\cite{FeigeMV2011} gave a deterministic local-search algorithm with an approximation guarantee of $1/3-\varepsilon$. Buchbinder and Feldman~\cite{BuchbinderF2018} subsequently obtained the optimal deterministic $1/2$-approximation by derandomizing the double-greedy algorithm of Buchbinder et al.~\cite{BuchbinderFNS2015}. Their method explicitly maintains a distribution over algorithmic states and uses extreme-point solutions of linear programs to control its support. The inequalities needed by the approximation analysis are preserved in expectation over this distribution, rather than imposed on every state individually. Buchbinder and Feldman~\cite{BuchbinderF2018} also showed that the same distributional method yields a deterministic $1/e$-approximation under a cardinality constraint. Chen et al.~\cite{ChenNPK2024} later developed guided combinatorial algorithms with deterministic implementations that achieve $0.385-\varepsilon$ under cardinality constraint. They also obtained a deterministic $(0.377-\varepsilon)$-approximation in nearly linear time. For a single knapsack constraint, unequal element costs introduce an additional difficulty. Gupta et al.~\cite{GuptaRST2010} obtained an early deterministic $1/6$-approximation, and Sun et al.~\cite{SunZZZ2024} improved the ratio to $1/4$ using a twin-greedy approach. More recently, Chen et al.~\cite{ChenGLSZ2026} obtained a deterministic $(1/e-O(\varepsilon))$-approximation. A related line of work, including that of~\cite{LiuZ2026,PhamTHT2023,Pham2025}, focuses on reducing value-oracle complexity.

For a general matroid constraint, the deterministic approximation ratio improved through a sequence of combinatorial developments. The local-search framework of Lee et al.~\cite{LeeMNS2010} gives a $(1/4-\varepsilon)$-approximation for a single matroid. Han et al.~\cite{HanCCW2020} introduced Twin Greedy, which constructs two disjoint candidate solutions and achieves a $1/4$-approximation in $O(nr)$ time, where $n$ is the ground-set size and $r$ is the matroid rank. Their threshold-based variant achieves $1/4-\varepsilon$ in nearly linear time. Sun et al.~\cite{SunZZZ2024} improved the deterministic ratio to $0.283-o(1)$, and Chen et al.~\cite{ChenNPK2024} subsequently obtained $0.305-\varepsilon$ through guided combinatorial algorithms.

Further progress came from derandomizing the continuous relaxation. Algorithms based on the standard multilinear extension typically use sampling because exact evaluation can require exponentially many value-oracle queries. Buchbinder and Feldman~\cite{BuchbinderF2025Extension} introduced the extended multilinear extension (EME), which represents unions of independently activated sets. By maintaining a bounded number of nonzero set coordinates and rounding deterministically, they obtained a deterministic $(1/e-\varepsilon)$-approximation under a matroid constraint. Chen et al.~\cite{ChenGLSZ2026} combined this framework with a discrete local-search guide to obtain $0.385-\varepsilon$. These developments motivate studying how the EME can support 
the recursive guide construction of Buchbinder and
Feldman~\cite{BuchbinderF2024DR} without sampling.


\begin{table}[t]
\centering
\small
\renewcommand{\arraystretch}{1.12}
\caption{Deterministic approximation guarantees for non-negative
non-monotone submodular maximization under a general matroid constraint.}
\label{tab:deterministic-matroid-results}
\begin{tabular}{p{0.28\linewidth} c p{0.43\linewidth}}
\hline
Reference & Approximation & Main approach \\
\hline
Lee et al.~\cite{LeeMNS2010}
    & $1/4-\varepsilon$
    & Local search \\

Han et al.~\cite{HanCCW2020}
    & $1/4$
    & Twin greedy \\

Sun et al.~\cite{SunZZZ2024}
    & $0.283-o(1)$
    & Derandomization of Random Greedy \\

Chen et al.~\cite{ChenNPK2024}
    & $0.305-\varepsilon$
    & Guided combinatorial algorithm \\

Buchbinder and Feldman~\cite{BuchbinderF2025Extension}
    & $1/e-\varepsilon$
    & Extended multilinear extension \\

Chen et al.~\cite{ChenGLSZ2026}
    & $0.385-\varepsilon$
    & EME with a local-search guide \\

This work (Theorem \ref{thm:main-approximation})
    & $0.401-\varepsilon$
    & Recursive guided EME \\
\hline
\end{tabular}
\end{table}


Let $\mathcal N$ be a finite ground set, let $f:2^{\mathcal N}\to\mathbb R_{\ge 0}$ be a non-monotone submodular function, and let $\mathcal M=(\mathcal N,\mathcal I)$ be a matroid with independent set family $\mathcal I$. We consider
$$
  \max\{f(S):S\in\mathcal I\}.
$$
We write $\mathrm{OPT}$ for an optimal independent set. We assume access to an exact value oracle for $f$ and an independence oracle for $\mathcal M$. The objective need not be normalized; in particular, $f(\varnothing)$ may be positive. In this model, there is a natural question:




\textit{Can the recursive guided framework used in the randomized
$0.401$-approximation algorithm be implemented deterministically?}

Although the EME provides a possible way to derandomize the $0.401$-approximation algorithm~\cite{BuchbinderF2024DR}, implementing its recursive framework is not a direct application of the EME: derandomizing each
call must preserve both its growth guarantee and the guide certificate used by
the recursive analysis.  Our main technical contribution is a finite-support
direction oracle that preserves both properties without sampling.

\paragraph{Our contribution.} We answer this question affirmatively by giving a deterministic
$0.401$-approximation algorithm in the exact value- and
independence-oracle model. The formal statement is as follows, and the comparison with existing results is shown in Table \ref{tab:deterministic-matroid-results}.

\begin{theorem}
\label{thm:main-approximation}
Given a non-negative submodular
function through a value oracle and a matroid through an independence oracle, there is a deterministic algorithm that 
returns an independent set $S$ satisfying
\[
    f(S)\ge0.401f(\OPT).
\]
The algorithm uses polynomially many value- and independence-oracle queries.
\end{theorem}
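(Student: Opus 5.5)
The plan is to follow the randomized $0.401$ framework of Buchbinder and Feldman~\cite{BuchbinderF2024DR} step by step, replacing each object that needs sampling by a finitely supported object over the extended multilinear extension (EME) of~\cite{BuchbinderF2025Extension}. Three ingredients recur. First, the measured continuous greedy update, discretized with step $\delta=1/\mathrm{poly}$, which moves the current fractional point toward a direction built from matroid bases. Second, a Frank--Wolfe-type approximate local optimum that serves as a guide. Third, the recursion: the output of one guided run becomes the guide for the next level, with constant depth.

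The analytical inequalities of~\cite{BuchbinderF2024DR} only use two kinds of facts. One is DR-submodularity of the relaxation, for which the EME is a valid substitute since it is concave along nonnegative directions and coordinate-wise submodular. The other is the two certificates produced at each step. So the core of the proof will be a lemma of roughly the following form. Given the current EME point $y$, with support polynomially bounded, and the guide $z$, we can deterministically find one or two bases $B_1,B_2$, with weights $\lambda,1-\lambda$, that satisfy two conditions:
\begin{enumerate}
\item a \emph{growth} inequality $\sum_i\lambda_i\,\partial_{B_i}F(y)\ge F(y\vee \1_{\OPT}\cdot(\cdot))-\dots$, matching the randomized step;
\item an \emph{approximate stationarity} inequality relative to the guide, $\langle \nabla F(z),\1_{\OPT}-z\rangle\le\varepsilon f(\OPT)$ in the EME sense.
\end{enumerate}

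To build these bases, I would compute marginal weights $w(e)$ exactly. This is possible because the EME at a point with polynomial support is a finite sum over the activated-set combinations, once the support is kept bounded via the truncation of~\cite{BuchbinderF2025Extension}. A greedy algorithm then finds a max-weight basis for the growth weights, and a second one handles the stationarity weights. When the two objectives conflict, I would solve the LP over the matroid polytope that asks for both linear constraints. An extreme point of the face cut out by one extra linear constraint lies on an edge of the base polytope, so it is a convex combination of two adjacent bases. This is exactly the "at most two bases" support, and the same idea is used by Buchbinder and Feldman~\cite{BuchbinderF2018}.

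Given this lemma, I would do the following in order:
\begin{enumerate}
\item Rerun the potential-function analysis of each recursion level, keeping the additive errors as $O(\delta)$ plus $\varepsilon f(\OPT)$.
\item Invoke the numerical optimization of the recursive parameters from~\cite{BuchbinderF2024DR}. This gives a ratio slightly above $0.401$, since their constant is $0.401$ up to rounding, so a $-\varepsilon$ loss still leaves $\ge 0.401$. If the slack is insufficient, I would state $0.401-\varepsilon$.
\item Round the final EME point to an independent set without loss, using the deterministic rounding from~\cite{BuchbinderF2025Extension} (pipage/swap over the finitely many sets), and take the best of the candidates.
\item Handle non-normalized $f$ and the guess of $f(\OPT)$ by enumerating over a polynomial grid.
\end{enumerate}
Query complexity is polynomial because the support grows by at most two bases per step, stays truncated, and the number of steps and levels is polynomial in $1/\delta$ and constant respectively.

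The main obstacle will be the direction oracle. The stationarity certificate in the randomized algorithm comes from a Frank--Wolfe run whose guarantee is an average over iterations and over random samples. To get it deterministically and simultaneously with the growth condition, we must show that the required inequalities are linear in the chosen basis distribution and that a feasible distribution exists, for which $\1_{\OPT}$ itself is a witness. A vertex argument then gives support two. Verifying that both conditions are genuinely linear in the EME setting, with its union semantics, is the delicate part I expect to take most of the work.
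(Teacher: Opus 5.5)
\textbf{There is a genuine gap: the direction oracle.} You correctly identify it as the crux, but the mechanism you propose for it cannot work.

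\emph{Stationarity.} The condition you impose, $\langle\nabla F(z),\1_{\OPT}-z\rangle\le\varepsilon f(\OPT)$, involves only the current guide $z$, so it constrains nothing about $B_1,B_2$. What the recursion needs is stationarity of the \emph{new} direction, because that direction becomes the child guide when growth fails. In the paper's discrete form this is $\E[f(T\cup\OPT)+f(T\cap\OPT)-2f(T)]\le2\varepsilon f(\OPT)$ over the chosen bases $T$.

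This expression is linear in the distribution, but its coefficients depend on the unknown $\OPT$. So it cannot be a constraint of an LP that the algorithm solves; the fact that $T=\OPT$ satisfies it only shows existence. The way the paper certifies it without knowing $\OPT$ is local optimality of $T$ itself: summing one-swap inequalities along a Brualdi matching bounds the expression against every base. Local optimality is a fixed-point condition whose coefficients depend on $T$. Neither a max-weight base for precomputed weights nor a vertex of (base polytope $\cap$ one hyperplane) produces it.

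\emph{The missing idea is a Lagrangian local search.} The paper does the following:
\begin{enumerate}
\item Run exchange/reassignment local search on $H_\lambda(\calT)=f(q(\calT))+\lambda W(\calT)$.
\item Prove that every $\eta$-local optimum satisfies one inequality combining the stationarity term with $\lambda$ times the growth slack.
\item Bisect on $\lambda$ to bracket the target block gain $b$.
\item Mix the two bracketing local optima so that $\E W=b$ exactly.
\end{enumerate}
When $g(\OPT)\ge V$, the $\lambda$-term is nonnegative and can be dropped. The two-base support comes from this bracketing, not from adjacent bases on an edge of the polytope. Stationarity holds only for the mixture, not for either base separately.

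\textbf{Two further pieces are missing.}

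\emph{Bounded EME support.} Polynomial support does not make the EME computable. Evaluating $F$ on a list of $m$ sets costs $2^m$ queries, and rounding costs $2^{\fracnum(y)}$, so each state may carry only $O_\varepsilon(1)$ sets. This forces each direction to be a base split into $\ell$ blocks, appended as whole sets with probability $\delta$. The growth is then $W(\calT)=\sum_b(g(T_b)-g(\varnothing))$, which submodularity bounds \emph{above}, not below, by the sum of singleton marginals. So elementwise weights plus greedy or LP do not certify growth either. This is why the paper's local search needs block-reassignment moves and incurs the $(1\pm1/\ell)$ losses.

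\emph{The box-maximization candidate.} Your plan omits it entirely. The guided-greedy bound alone has negative coefficients on $\E f(\OPT\cap Z)$ and $\E f(\OPT\cup Z)$. The ratio $0.401$ appears only after a convex combination with the box bound $\frac{2\rho f(Q)+f(\varnothing)+\rho^2f(Z)}{(\rho+1)^2}$, applied to the same $\varepsilon$-stationary guide; stationarity is what converts $\E f(Z)$ into the intersection and union terms. Doing this deterministically requires the guide to be an explicit distribution over bases coupled to the EME states. The same coupling lets pre-switch growth append $T_b\setminus Z_\omega$ state by state. It also requires a derandomized $\rho$-weighted double greedy on each support base.
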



Our algorithm retains the two complementary candidates in the randomized framework of Buchbinder and Feldman~\cite{BuchbinderF2024DR}: a box candidate, obtained by a weighted deterministic double-greedy procedure, and a candidate obtained by guided EME growth. Their guarantees apply in complementary regimes and must hold for the same guide distribution. The new direction oracle described below provides the common representation needed by both analyses.

The main technical contribution is a deterministic direction oracle supported on at most two \emph{partitioned bases}. A partitioned base is a tuple of disjoint blocks whose union is a matroid base, after augmentation with dummy elements. The blocks specify an EME update, while their union can serve as a recursive guide. The oracle uses a weighted local search, inspired by the deterministic matroid algorithm~\cite{BuchbinderF2024Deterministic}, whose objective combines the value of the underlying base with the aggregate block gain. Varying the weight and mixing at most two locally optimal outputs yields the required growth bound and approximate stationarity inequality whenever the target value satisfies the condition used in the analysis. Stationarity holds for the induced guide distribution; neither support base is required to satisfy it individually.

To implement guided growth, we explicitly maintain states consisting of a guide base and a list of EME blocks with their activation probabilities. For fixed accuracy, the number of states and the length of each list are bounded independently of the ground-set size. Every expectation and marginal value can therefore be evaluated as a finite sum. Deterministic EME rounding~\cite{BuchbinderF2025Extension} converts the terminal states into independent sets without decreasing their certified values. The box procedure uses the derandomized double-greedy~\cite{BuchbinderF2018}.

The recursive analysis uses the potential
$$
  \mathbb E_{Z\sim\mathcal Z}[f(\mathrm{OPT}\cup Z)]
$$
for an explicit guide distribution $\mathcal Z$. Starting from an approximately stationary guide, each call either produces the required growth candidate or generates an approximately stationary child guide whose potential decreases by at least $\varepsilon f(\mathrm{OPT})$. The algorithm explores all generated guides and need not identify the child certified by the analysis. Non-negativity and the initial upper bound of $2f(\mathrm{OPT})$ imply that a successful call is reached within bounded depth. At that call, the box and growth bounds hold for the same guide distribution and combine to yield $0.401$. Thus, the new direction oracle allows the recursive analysis in~\cite{BuchbinderF2024DR} to be implemented within the EME framework without sampling.

\paragraph{Organization.}
Section~\ref{sec:preliminaries} fixes the model and parameters.
Section~\ref{sec:algorithm} presents the deterministic subroutines and
their guarantees. Section~\ref{sec:analysis} proves the approximation
guarantee and analyzes the deterministic implementation and oracle
complexity. Section~\ref{sec:conclusion} concludes the paper.

%% file: sections/02-preliminaries.tex
\section{Preliminaries}
\label{sec:preliminaries}

This section fixes the notation needed by the algorithm and its analysis.
The EME framework and the deterministic rounding
theorem used below are due to Buchbinder and Feldman~\cite{BuchbinderF2025Extension}.

\subsection{Notation and submodular functions}

Let $\mathcal N_0$ be the finite input ground set of size $n$.  To index blocks and
rounds, for a positive integer $k$ let $[k]=\{1,\ldots,k\}$, and set
$[0]=\varnothing$.  To describe one-element insertions and deletions in
exchange arguments, for $S\subseteq\mathcal N_0$ and $u\in\mathcal N_0$ write
\[
  S+u=S\cup\{u\}
  \qquad\text{and}\qquad
  S-u=S\setminus\{u\}.
\]
The analysis measures the change caused by such updates through marginal
values.  For $f:2^{\mathcal N_0}\to\R$ and $A,S\subseteq\mathcal N_0$, define
\[
  f(u\mid S)=f(S+u)-f(S)
  \qquad\text{and}\qquad
  f(A\mid S)=f(S\cup A)-f(S).
\]
In particular, $f(u\mid S)=0$ whenever $u\in S$.

A set function $f$ is \emph{submodular} if
\[
  f(u\mid A)\ge f(u\mid B)
  \qquad
  \text{for all }A\subseteq B\subseteq\mathcal N_0
  \text{ and }u\in\mathcal N_0\setminus B.
\]
Equivalently,
\[
  f(A)+f(B)\ge f(A\cup B)+f(A\cap B)
  \qquad\text{for all }A,B\subseteq\mathcal N_0.
\]
Unless stated otherwise, the objective function is non-negative and
submodular, but it need not be monotone or normalized. Thus, marginal values
may be negative and $f(\varnothing)$ need not be zero.

To combine independent activation probabilities, for $a,b\in[0,1]$ define
their \emph{probabilistic sum} by
\[
  a\oplus b=1-(1-a)(1-b).
\]
This operation has a direct Bernoulli interpretation.  If
$X\sim\operatorname{Bernoulli}(a)$ and
$Y\sim\operatorname{Bernoulli}(b)$ are independent, then
\[
  \max\{X,Y\}\sim\operatorname{Bernoulli}(a\oplus b),
\]
because $\Prb[\max\{X,Y\}=0]=(1-a)(1-b)$.
More generally, for a finite index set $J$ and numbers
$\{a_j\}_{j\in J}\subseteq[0,1]$, define
\[
  \bigoplus_{j\in J}a_j=1-\prod_{j\in J}(1-a_j),
\]
where the empty probabilistic sum is zero.  For vectors in a unit cube,
$\oplus$ is applied coordinatewise.  To pass between sets and matroid
polytopes, we use $\1_S\in\{0,1\}^{\mathcal N_0}$ for the characteristic vector
of $S$; vector inequalities such as $x\le y$ are coordinatewise.

All algorithms are considered in the standard oracle model. A value-oracle
query returns $f(S)$ for a specified set $S\subseteq\mathcal N_0$, and an
independence-oracle query determines whether a specified set is independent
in the input matroid.

\subsection{Matroids}

We first recall bases, then augment the input matroid with dummy elements so
that every feasible solution has base form, and finally introduce the rank
and polyhedral notation used in the exchange and feasibility arguments.  We
use standard matroid facts from~\cite[Chapter~1]{Oxley2011} and
standard matroid-polytope facts from~\cite[Chapter~40]{Schrijver2003}.

A \emph{matroid} is a pair $\calM=(\mathcal N,\calI)$, where
$\calI\subseteq2^{\mathcal N}$ satisfies the following axioms:
\begin{enumerate}[label=(\roman*)]
  \item $\varnothing\in\calI$;
  \item if $A\subseteq B$ and $B\in\calI$, then $A\in\calI$;
  \item if $A,B\in\calI$ and $|A|<|B|$, then there exists
  $u\in B\setminus A$ such that $A+u\in\calI$.
\end{enumerate}
The members of $\calI$ are the \emph{independent sets}. An inclusionwise
maximal independent set is a \emph{base}. All bases have the same cardinality,
called the \emph{rank} of $\calM$.

\paragraph{Dummy elements.}
We use dummy elements to let every independent set be treated as a base without
changing its objective value.  Write
$\calM_0=(\mathcal N_0,\mathcal I_0)$ for the original input matroid, let $r$
be its rank, and let $D_{\mathrm{dum}}$ be a set of $r$ new elements.  The augmented matroid
$\calM=(\mathcal N,\calI)$ has ground set $\mathcal N=\mathcal N_0\cup D_{\mathrm{dum}}$ and
independent sets
\[
  \mathcal I
  =\{S\subseteq\mathcal N_0\cup D_{\mathrm{dum}}:
       S\cap\mathcal N_0\in\mathcal I_0,\ |S|\le r\}.
\]
The set $D_{\mathrm{dum}}$ is a base, and every independent set of the original matroid can
be extended to a base using elements of $D_{\mathrm{dum}}$. We extend every set function by
ignoring dummy elements, so that $f(S)=f(S\cap\mathcal N_0)$, and use the same
convention for auxiliary functions. Hence adding or removing dummy elements
does not change any function value.  Throughout the rest of the manuscript,
$\calM_0$ and $\mathcal N_0$ always denote the original input objects, whereas
$\calM$ and $\mathcal N$ denote their augmented counterparts;
$n=|\mathcal N_0|$ remains the number of original elements.  Dummy elements
are removed from every reported solution.  From this point onward, the set,
marginal, and characteristic-vector notation introduced above is applied to
the augmented ground set $\mathcal N$; the extension convention makes every
such expression agree with its restriction to $\mathcal N_0$.

To state feasibility arguments for arbitrary subsets, we use the rank
operator
\[
  \rank_{\calM}(S)
  =\max\{|I|:I\subseteq S,\ I\in\calI\}.
\]
When the matroid is clear from context, we omit the subscript $\calM$.

The following standard matching form of base exchange, also known as
Brualdi's lemma~\cite{Brualdi1969}, will be used when several one-element
exchanges are summed in a proof.

\begin{lemma}[Brualdi~\cite{Brualdi1969}]
\label{lem:base-exchange-matching}
Let $A$ and $B$ be bases of a matroid. There exists a bijection
$h:A\to B$ such that $h(u)=u$ for every $u\in A\cap B$ and
$B-h(u)+u$ is a base for every $u\in A$.
\end{lemma}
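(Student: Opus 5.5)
This is Brualdi's lemma: for bases $A$ and $B$, find a bijection $h:A\to B$ with $h(u)=u$ on $A\cap B$ and $B-h(u)+u$ a base for every $u\in A$. I plan to prove it with Hall's theorem on a bipartite exchange graph.

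\textbf{Setting up the graph.} Since $A\cap B$ is mapped identically and $B-u+u=B$ is a base, it suffices to build a bijection $h:A\setminus B\to B\setminus A$. The two sets have equal size because all bases have the same cardinality. Build the bipartite graph $G$ with left side $A\setminus B$ and right side $B\setminus A$. Join $u$ to $v$ whenever $B-v+u$ is a base.

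\textbf{Fundamental circuits.} For $u\in A\setminus B$, the set $B+u$ is dependent because $B$ is maximal and $u\notin B$. It contains a unique circuit $C(u)$, and $u\in C(u)$. For $v\in B$, the set $B-v+u$ is independent, and hence a base, exactly when $v\in C(u)$. Now $C(u)\not\subseteq A$, since $A$ is independent. Therefore the neighbourhood of $u$ in $G$ is $N(u)=C(u)\cap (B\setminus A)$, which is nonempty. More generally $C(u)-u\subseteq B$, and $C(u)-u\subseteq \operatorname{cl}(N(u)\cup(A\cap B))$. In particular $u\in\operatorname{cl}\bigl(N(u)\cup (A\cap B)\bigr)$.

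\textbf{Hall's condition.} I will verify it by a rank argument. Take $X\subseteq A\setminus B$ and put $N(X)=\bigcup_{u\in X}N(u)$. Every $u\in X$ lies in the closure of $N(X)\cup(A\cap B)$, so
\[
  |X|+|A\cap B|=\rank\bigl(X\cup(A\cap B)\bigr)\le \rank\bigl(N(X)\cup(A\cap B)\bigr)\le |N(X)|+|A\cap B|.
\]
The equality on the left holds because $X\cup(A\cap B)\subseteq A$ is independent. Hence $|N(X)|\ge|X|$. Hall's theorem then gives a matching saturating $A\setminus B$, and by the equal cardinalities it is a perfect matching. Extending it by the identity on $A\cap B$ gives the required $h$.

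\textbf{Main obstacle.} The delicate step is the closure claim, namely that $u$ lies in the span of $N(u)\cup(A\cap B)$. It follows because $C(u)-u\subseteq B$ splits into $N(u)$ and part of $A\cap B$, and a circuit element is spanned by the rest of its circuit. With that fact checked, the rest is standard.
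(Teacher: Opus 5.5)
Your proof is correct. Restricting Hall's theorem to $A\setminus B$ versus $B\setminus A$, while keeping $A\cap B$ on both sides of the rank inequality, is exactly what is needed: elements of $C(u)\cap A\cap B$ are valid exchange partners for $u$, but they must be reserved for the identity part of $h$. The paper gives no proof of its own and simply imports the lemma from Brualdi, so your fundamental-circuit and Hall-theorem argument, which is essentially the classical proof, supplies the self-contained justification the paper omits.
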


The \emph{matroid independence polytope} is
\[
  P(\calM)=\operatorname{conv}\{\1_I:I\in\calI\}.
\]
It is downward closed: if $y\in P(\calM)$ and $0\le x\le y$, then
$x\in P(\calM)$. The \emph{base polytope} is
\[
  B(\calM)=\operatorname{conv}\{\1_B:B\text{ is a base of }\calM\}.
\]
Equivalently, $B(\calM)$ is the face of $P(\calM)$ consisting of the vectors
$x$ satisfying $x(\mathcal N)=\rank(\mathcal N)$, where
$x(S)=\sum_{u\in S}x_u$.

Throughout the analysis, $\OPT$ denotes a maximum-value base of the augmented
matroid.  Such a base exists because any optimal independent set of
$\calM_0$ can be completed with dummy elements without changing its value.
Thus $f(\OPT)$ is also the optimum value of the original instance.

\subsection{The extended multilinear extension}

The EME encodes unions of independently activated sets.  We first define it
probabilistically, then give the sparse list representation used by the
algorithm, introduce the induced elementwise marginals, and finally invoke
deterministic rounding.

The extended multilinear extension (EME) associates one coordinate with every
subset of the ground set. For $y\in[0,1]^{2^{\mathcal N}}$ and
$A\subseteq\mathcal N$, let $R(y,A)$ be equal to $A$ with probability $y_A$
and to $\varnothing$ otherwise, independently for all $A$. Define
\[
  R(y)=\bigcup_{A\subseteq\mathcal N}R(y,A).
\]
The \emph{extended multilinear extension} of $f$ is the function
$F:[0,1]^{2^{\mathcal N}}\to\R$ given by
\begin{equation}
  F(y)=\E[f(R(y))].
  \label{eq:eme-probabilistic-definition}
\end{equation}
Equivalently,
\begin{equation}
  F(y)=
  \sum_{\mathcal A\subseteq2^{\mathcal N}}
  f\left(\bigcup_{A\in\mathcal A}A\right)
  \prod_{A\in\mathcal A}y_A
  \prod_{A\in2^{\mathcal N}\setminus\mathcal A}(1-y_A).
  \label{eq:eme-algebraic-definition}
\end{equation}
For $A\subseteq\mathcal N$, let $e_A$ denote the standard basis vector whose
$A$-coordinate is one. In particular, $F(e_A)=f(A)$.

To express the union of independently generated EME sets, for EME vectors
$y,z\in[0,1]^{2^{\mathcal N}}$ define
\[
  (y\vee z)_A=\max\{y_A,z_A\},
  \qquad
  (y\oplus z)_A=y_A\oplus z_A
  \quad(A\subseteq\mathcal N).
\]
If all activation trials used to construct $R(y)$ and $R(z)$ are mutually
independent, then $R(y\oplus z)$ and $R(y)\cup R(z)$ have the same
distribution.

\paragraph{List representation of EME}
The algorithms need not materialize all $2^{|\mathcal N|}$ coordinates of an
EME vector.  A list
\[
  \mathcal L=((A_1,p_1),\ldots,(A_m,p_m)),
  \qquad A_j\subseteq\mathcal N,\quad p_j\in[0,1],
\]
represents the experiment in which every listed set $A_j$ is activated
independently with probability $p_j$, and the activated sets are then united.
The EME vector induced by $\mathcal L$ is
\begin{equation}
  y_A^{\mathcal L}
  =\bigoplus_{j:A_j=A}p_j
  =1-\prod_{j:A_j=A}(1-p_j)
  \qquad(A\subseteq\mathcal N).
  \label{eq:eme-list-to-vector}
\end{equation}
Thus repeated occurrences of the same set may be merged into a single entry
without changing the distribution of the realized union. Conversely, every
EME vector $y$ has the canonical list representation
\[
  \mathcal L(y)=((A,y_A):A\in\supp(y)),
  \qquad
  \supp(y)=\{A\subseteq\mathcal N:y_A>0\}.
\]
Concatenating two such lists corresponds to taking the coordinatewise
probabilistic sum of their induced EME vectors. The list representation is
exact and sparse rather than an approximation of the EME vector.

The running time of EME rounding depends on the number of fractional
coordinates.  We denote this number by
\[
  \fracnum(y)
  =|\{A\subseteq\mathcal N:0<y_A<1\}|.
\]
When $y$ is represented by a list with $m$ entries, $F(y)$ can be
evaluated exactly using at most $2^m$ value-oracle queries by enumerating all
activation patterns. This bound counts list entries before repeated sets are
merged; using the canonical representation gives the corresponding bound in
terms of $|\supp(y)|$.

\paragraph{Marginal vector.}
Matroid feasibility of an EME vector is expressed through its elementwise
inclusion probabilities.  Accordingly, the marginal vector of $y$ is
$\operatorname{Mar}(y)\in[0,1]^{\mathcal N}$, where
\begin{equation}
  \operatorname{Mar}_u(y)
  =\Prb[u\in R(y)]
  =\bigoplus_{A\subseteq\mathcal N:\,u\in A}y_A
  =1-\prod_{A\subseteq\mathcal N:\,u\in A}(1-y_A).
  \label{eq:eme-marginal-vector}
\end{equation}
For a list $\mathcal L=((A_1,p_1),\ldots,(A_m,p_m))$, the same
quantity is computed directly as
\[
  \operatorname{Mar}_u(y^{\mathcal L})
  =\bigoplus_{j:u\in A_j}p_j
  =1-\prod_{j:u\in A_j}(1-p_j).
\]
In particular, feasibility of an EME solution under a matroid constraint will
be expressed through the condition $\operatorname{Mar}(y)\in P(\calM)$.

The deterministic EME rounding theorem of Buchbinder and
Feldman~\cite[Theorem~1.2]{BuchbinderF2025Extension} converts any such feasible EME solution into
an ordinary independent set without decreasing its objective value.

\begin{theorem}[Buchbinder and Feldman~\cite{BuchbinderF2025Extension}, Theorem~1.2]
\label{thm:eme-rounding}
Let $y$ be an EME vector satisfying
$\operatorname{Mar}(y)\in P(\calM)$.  There is a deterministic algorithm that
returns an independent set $S\in\calI$ such that
\[
  f(S)\ge F(y).
\]
It uses $O(n^2 2^{\fracnum(y)})$ value-oracle queries and
$O(n^5\log^2 n)$ independence-oracle queries.
\end{theorem}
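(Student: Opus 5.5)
Since this is the cited rounding theorem of Buchbinder and Feldman, my plan is to follow a pipage-rounding template adapted to set coordinates. I would repeatedly reduce the number of fractional coordinates of $y$ while keeping two invariants: $\operatorname{Mar}(y)\in P(\calM)$, and $F(y)$ never decreases. When no fractional coordinate remains, $R(y)$ is deterministic and equals $U=\bigcup_{A:\,y_A=1}A$. In that case $\operatorname{Mar}(y)=\1_U\in P(\calM)$. The integral points of $P(\calM)$ are exactly the characteristic vectors of independent sets, so $U\in\calI$ and $f(U)=F(y)$ is at least the initial value.

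Two elementary facts drive the steps. First, by \eqref{eq:eme-algebraic-definition}, $F$ is multilinear in the set coordinates. For a fixed fractional $A$ this gives
\[
F(y)=y_AF(y|_{y_A=1})+(1-y_A)F(y|_{y_A=0}).
\]
Second, $\operatorname{Mar}(y)$ is coordinatewise non-decreasing in every $y_A$. Together with downward closure of $P(\calM)$, this means that lowering any coordinate preserves feasibility.

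\emph{Single-coordinate step.} If $F(y|_{y_A=0})\ge F(y)$, I set $y_A=0$; this removes one fractional coordinate for free. If $F(y|_{y_A=1})\ge F(y)$ and raising $y_A$ to $1$ keeps $\operatorname{Mar}$ feasible, I set $y_A=1$. Otherwise $y_A$ can only be raised until some tight set $T$ of the matroid polytope is reached, that is, until $\operatorname{Mar}(y)(T)=\rank(T)$.

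\emph{Pairwise step.} Take two fractional coordinates $A,B$ whose marginals both touch the same tight set. I would move along a curve that raises $y_A$ and lowers $y_B$ while keeping $\operatorname{Mar}(y)(T)$ constant, and stop at the first point where a coordinate becomes integral or a new set becomes tight. Restricted to the two coordinates, $F$ is bilinear, and its cross coefficient is
\[
F_{11}-F_{10}-F_{01}+F_{00}\le 0.
\]
This follows from submodularity, since the union operation is submodular in the pair of activations. With the signs $(+,-)$, the restriction of $F$ is then a convex quadratic in the step parameter. Convexity guarantees that one of the two endpoints does not decrease $F$.

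Each step either makes a coordinate integral or increases the dimension of the tight face. Hence, by the standard uncrossing and laminar-family argument, the number of steps is polynomially bounded. Each step needs only evaluations of $F$, and each evaluation costs at most $2^{\fracnum(y)}$ value queries, since $\fracnum$ only decreases along the way. Feasibility checks reduce to polynomially many independence-oracle queries through submodular-minimization-style separation over $P(\calM)$. This matches the stated $O(n^22^{\fracnum(y)})$ and polynomial independence bounds.

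\emph{Main obstacle.} The hardest step is the pairwise move, because $\operatorname{Mar}$ is not linear in $y$. Moving $y_A$ and $y_B$ linearly does not keep $\operatorname{Mar}(y)(T)$ constant, and a set $A$ may contribute to many elements of $T$. My first attempt would be to reparametrize by $-\log(1-y_A)$: in these variables each $\log(1-\operatorname{Mar}_u)$ is linear. I would then check that convexity of $F$ survives along the resulting curve. If this fails, I would fall back to splitting coordinates so that each fractional set's marginal contribution is controlled elementwise, and perform the exchange elementwise using Lemma~\ref{lem:base-exchange-matching}.
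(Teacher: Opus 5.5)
\textbf{Your proposal has a genuine gap, and it is structural.} Every move you allow changes only the values $y_A$ for $A\in\supp(y)$. The procedure can therefore only ever output a union of sets from $\supp(y)$, and that is not enough. Take the rank-one uniform matroid on $\{a,b\}$, $f(S)=|S|$, and $y_{\{a,b\}}=1/2$. Then $\Mar(y)=(1/2,1/2)\in P(\calM)$ and $F(y)=1$. However, the feasible values of the single coordinate form $[0,1/2]$, so the only integral feasible vector on this support has value $f(\varnothing)=0$. Your procedure stalls immediately: lowering loses value, the coordinate is already blocked by the tight set $\{a,b\}$, and there is no partner for a pairwise step. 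The required output $\{a\}$ is not a union of sets in $\supp(y)$; its guarantee comes from $f(\{a\})+f(\{b\})\ge f(\varnothing)+f(\{a,b\})$.

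So a correct rounding must refine sets into smaller pieces. Refinement by itself is easy to justify. Replace a coordinate $A$ of value $p$ by independent singleton coordinates $\{u\}$, $u\in A$, each of value $p$. This preserves $\Mar(y)$ and does not decrease $F$: condition on the remaining activations $X$ and use concavity of the multilinear extension of $S\mapsto f(X\cup S)$ along $\1_A$. Splitting every coordinate gives $F(y)\le G(\Mar(y))$ for the ordinary multilinear extension $G$, and ordinary pipage then yields an independent $S$ with $f(S)\ge F(y)$. This settles existence, but each split raises $\fracnum$ by up to $|A|$, so the $O(n^2 2^{\fracnum(y)})$ query bound is lost. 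Achieving the rounding at evaluation cost $2^{\fracnum(y)}$ is the actual content of the cited theorem, which the paper imports from Buchbinder and Feldman without proof. Neither your main line nor your fallback supplies it.

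\textbf{The pairwise step is also unsound as stated.} First, the cross coefficient need not be non-positive. Submodularity applied to $X\cup A$ and $X\cup B$, whose intersection is $X\cup(A\cap B)$, only gives
\[
F_{11}-F_{10}-F_{01}+F_{00}\le\E[f(X)-f(X\cup(A\cap B))],
\]
which can be positive for non-monotone $f$. For example, if $u\in A\cap B$ and $f(S)=1-[u\in S]$, then $F(y)=\prod_{C\ni u}(1-y_C)$ is strictly concave along $e_A-e_B$, so both endpoints can lose value. Nothing in your procedure keeps the two chosen sets disjoint.

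Second, a move in $(y_A,y_B)$ has one degree of freedom but changes $\Mar$ on all of $A\cup B$, so it can hold at most one tight constraint fixed. Any other already-tight set that meets $A$ but not $B$ is violated at once, not merely made ``newly tight''. The uncrossing argument of ordinary pipage works because mass moves between two elements of a minimal tight set, so every tight set contains both or neither. It has no analogue for overlapping sets. The substitution $-\log(1-y_A)$ linearizes $\log(1-\Mar_u)$ but not $\Mar(T)$. Consequently, neither the existence of an admissible move nor the polynomial step count is established.
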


We use the independence-polytope version of the theorem; a statement restricted
to the base polytope would not suffice for the EME vectors constructed later.
In particular, Theorem~\ref{thm:eme-rounding} implies
$F(y)\le f(\OPT)$ whenever
$\operatorname{Mar}(y)\in P(\calM)$.

\paragraph{A lower bound determined by marginal probabilities.}
To compare a random-set value using only its marginal vector, we use the
Lov\'asz extension.  For $\varphi:2^{\mathcal N}\to\R$, define
\[
  \widehat\varphi(x)
  =\int_0^1\varphi(\{u\in\mathcal N:x_u>t\})\,dt,
  \qquad x\in[0,1]^{\mathcal N}.
\]
Chen et al.~\cite[Lemma~10]{ChenGLSZ2026} prove the following inequality from the
Lov\'asz-extension representation.  We invoke their result directly rather
than repeat its proof; in particular, it does not require independence among
the elements of the random set.

\begin{lemma}[Chen et al.~\cite{ChenGLSZ2026}, Lemma~10]
\label{lem:lovasz-marginal-lower-bound}
Let $X\subseteq\mathcal N$ be a random set with marginal vector
$x_u=\Prb[u\in X]$. If $\varphi$ is submodular, then
\[
  \E[\varphi(X)]\ge\widehat\varphi(x).
\]
No non-negativity or independence assumption is required.
\end{lemma}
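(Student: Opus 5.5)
The plan is to prove the inequality $\E[\varphi(X)]\ge\widehat\varphi(x)$ by reducing it to a chain of nested sets and then averaging over the Lov\'asz threshold. The key tool is the following submodular fact: for any finite family of sets $S_1,\dots,S_m$ (a realization-weighted multiset), the sum $\sum_i\varphi(S_i)$ does not increase when two sets $A,B$ are replaced by $A\cup B$ and $A\cap B$. This is exactly the inequality $\varphi(A)+\varphi(B)\ge\varphi(A\cup B)+\varphi(A\cap B)$. The uncrossing replacement also preserves every elementwise count $\sum_i\1[u\in S_i]$.

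First reduce to finitely many outcomes with rational probabilities. Since $\mathcal N$ is finite, $X$ takes finitely many values. Both sides are continuous in the probability vector: the left side is linear in it, and the right side is continuous because $\widehat\varphi$ is piecewise linear, hence continuous, and $x$ is linear in the probabilities. So it suffices to treat rational probabilities. Writing them with a common denominator $N$, $X$ becomes the uniform distribution on a multiset $S_1,\dots,S_N$, and then $x_u=|\{i:u\in S_i\}|/N$.

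Next, repeatedly uncross any pair $S_i,S_j$ that is not nested, replacing it by $S_i\cup S_j$ and $S_i\cap S_j$. For termination, use the potential $\sum_i|S_i|^2$. Uncrossing keeps $|S_i|+|S_j|$ fixed while spreading the two sizes further apart, so this potential strictly increases. Since it is bounded, the process ends with a chain $T_1\supseteq\cdots\supseteq T_N$. Along the way the average $\frac1N\sum_i\varphi(S_i)$ never increases, and the counts, hence $x$, are unchanged. For a chain whose marginals are $x$, the set $T_i$ must equal $\{u:x_u\ge i/N\}$. Consequently
\[
\frac1N\sum_{i=1}^N\varphi(T_i)=\int_0^1\varphi(\{u:x_u>t\})\,dt=\widehat\varphi(x),
\]
because for $t\in[(i-1)/N,i/N)$ the level set $\{u:x_u>t\}$ equals $\{u:x_u\ge i/N\}=T_i$, using that every $x_u$ is a multiple of $1/N$. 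Combining the two facts gives $\E[\varphi(X)]\ge\widehat\varphi(x)$.

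Nothing in this argument uses non-negativity of $\varphi$ or any independence structure of $X$. The only step needing care is the continuity argument for the rational reduction, which I expect to be routine. Alternatively, one can avoid it by performing the uncrossing directly on the distribution, moving mass $\min$ of the two probabilities at a time.
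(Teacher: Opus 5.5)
Your proof is correct. The paper does not prove this lemma itself; it imports it from Chen et al., who derive it from the Lov\'asz-extension representation. A standard argument from that representation is the dual of yours.

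Sort $x_{\sigma(1)}\ge\cdots\ge x_{\sigma(n)}$, let $S_i=\{\sigma(1),\ldots,\sigma(i)\}$, and set $w_{\sigma(i)}=\varphi(\sigma(i)\mid S_{i-1})$. Abel summation of the defining integral gives $\widehat\varphi(x)=\varphi(\varnothing)+\sum_u w_u x_u$. Adding the elements of an arbitrary set $S$ in $\sigma$-order and applying submodularity gives the modular minorant $\varphi(S)\ge\varphi(\varnothing)+\sum_{u\in S}w_u$. Taking expectations and using linearity, $\E[\varphi(X)]\ge\varphi(\varnothing)+\sum_u w_u x_u=\widehat\varphi(x)$.

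You instead work on the primal side, uncrossing the distribution of $X$ into the level-set chain. Your route is purely combinatorial and shows concretely that the level-set coupling is the worst distribution with marginals $x$. The certificate route is a few lines and handles real probabilities directly, with no rational reduction, continuity argument, or termination potential. It also makes the absence of any sign or independence assumption immediate.

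Two small points of rigor in your version:
\begin{itemize}
\item ``Piecewise linear, hence continuous'' is not a valid inference on its own, so justify continuity directly. For example, $|\widehat\varphi(x)-\widehat\varphi(x')|\le 2\max_S|\varphi(S)|\,\|x-x'\|_1$, because the level sets of $x$ and $x'$ differ only for $t$ in a set of measure at most $\|x-x'\|_1$.
\item Your alternative of moving real-valued mass needs its own termination argument, since $\sum_S p_S|S|^2$ is then no longer integer-valued. Choosing, by compactness, a minimizer of $\E[\varphi]$ that maximizes this potential fixes it.
\end{itemize}
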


\subsection{Parameters}

We collect here the numerical parameters used by the algorithm; the
corresponding algorithmic objects are defined in
Section~\ref{sec:algorithm}.  Let $r=\rank_{\calM}(\mathcal N)$, and let
$S_{\max}$ maximize $f$ among $\varnothing$ and the feasible non-dummy
singletons.  We assume $r\ge1$, $f(S_{\max})>0$, and
$0<\varepsilon\le1/10$.

Set
\begin{equation}
  \ell=\left\lceil\frac{1}{\varepsilon}\right\rceil,
  \qquad
  \delta=\frac{1}{\left\lceil 2\ell^2/\varepsilon\right\rceil},
  \qquad
  t_s=0.3682,
  \qquad
  i_s=\left\lceil\frac{t_s}{\delta}\right\rceil.
  \label{eq:parameters}
\end{equation}
Here $\ell$ is the number of blocks in a direction, $\delta$ is both the
activation probability and the time-step size, and $i_s$ is the round at
which guided greedy changes phases.  These choices satisfy
\[
  \frac{1}{\ell}\le\varepsilon,
  \qquad
  2\ell^2\delta\le\varepsilon,
  \qquad
  0<t_s<i_s\delta<\frac{1}{2}.
\]

The local-search tolerance is
$\eta=\varepsilon f(S_{\max})/(4r)$, the recursion depth is at most
$\lceil2/\varepsilon\rceil+1$, and the box routine uses $\rho=2.22$.
Submodularity and non-negativity give
$f(\OPT)\le n f(S_{\max})$, which will be used to bound the additive errors.
The final approximation guarantee uses $\varepsilon=10^{-8}$, while
$O_{\varepsilon}(\cdot)$ suppresses factors depending only on $\varepsilon$.
All tie-breaking, exchange, and enumeration orders are fixed in advance.

%% file: sections/03-algorithm.tex
\section{Algorithm}
\label{sec:algorithm}

\subsection{Overview}

We begin by recalling the high-level structure of the randomized
$0.401$-approximation of Buchbinder and Feldman~\cite{BuchbinderF2024DR}.  In that
framework, a fractional solution $z$ in a down-closed relaxation serves as a
\emph{fractional guide vector}.  The algorithm uses $z$ in two complementary
ways.  Box maximization searches for a valuable solution dominated by $z$, whereas
Frank--Wolfe guided measured continuous greedy starts from the zero vector
and constructs a second solution through a sequence of feasible fractional
growth directions, using $z$ to guide the initial part of the process.

The guided greedy analysis has two possible outcomes.  If the selected
fractional growth directions provide sufficient progress throughout the
process, its final solution and
the box candidate together yield the desired approximation
guarantee.  Otherwise, one of the fractional growth directions is itself a
better fractional guide vector in the sense required by the recursive
analysis.  The algorithm cannot identify this vector because the criterion
defining such an improvement
depends on an unknown optimum.  It therefore recursively applies the same two
procedures to every generated fractional growth direction, treating each as a
new fractional guide vector.
Along an appropriate recursion path, every unsuccessful call improves a
bounded potential that measures the relation between the fractional guide
vector and the optimum by a fixed additive amount; hence, a successful call is eventually
reached.  Combining the complementary guarantees of box maximization and the
guided greedy procedure at that call gives the $0.401$ approximation.

\subsection{Main procedure}

The deterministic algorithm retains the recursive structure described above,
but it cannot manipulate an implicitly sampled fractional guide.  Moreover,
the analysis needs to evaluate the guide's intersection and union with other
sets and to run box maximization on the guide itself.  We therefore replace a
fractional guide by a finite weighted collection of bases.  The direction
oracle constructed later always produces a collection supported on at most
two bases, so this representation is also preserved when a direction becomes
the guide of a recursive call.

\paragraph{Guide distributions.}
A guide distribution is represented explicitly as
\[
    \calZ=\{(Z_1,\theta),(Z_2,1-\theta)\},
    \qquad \theta\in[0,1],
\]
where $Z_1$ and $Z_2$ are matroid bases.  A guide distribution supported on a
single base is included by allowing $\theta\in\{0,1\}$.  For any function $h$
defined on sets, we write the corresponding weighted average as
\[
    \E_{Z\sim\calZ}[h(Z)]
    =
    \theta h(Z_1)+(1-\theta)h(Z_2).
\]
The algorithm does not sample $Z$; it stores both bases and their weights and
evaluates this expression directly.

We call a guide distribution $\calZ$ \emph{$\varepsilon$-stationary} if
\begin{equation}
    \E_{Z\sim\calZ}
    \bigl[f(Z\cup\OPT)+f(Z\cap\OPT)-2f(Z)\bigr]
    \le 2\varepsilon f(\OPT).
    \label{main:guide-stationarity}
\end{equation}
This is an analytical property: it measures whether the weighted guide is
adequate for the box bound or whether the recursive analysis must replace it
by an improved guide.  It is distinct from $\eta$-local optimality of the
local-search subroutine, which means that no allowed local move has tested
gain greater than $\eta$.

With the guide representation fixed, we can state the main recursion using
three deterministic subroutines.  One constructs the initial guide
distribution, one returns a collection of feasible sets by box maximization,
and one returns both feasible sets and improved guide distributions through
guided greedy.  These subroutines are specified later.

\begin{algorithm}[tb]
\DontPrintSemicolon
\KwIn{A non-negative submodular function $f$, a matroid $\calM_0$, and an
accuracy parameter $0<\varepsilon\le1/10$.}
\KwOut{An independent set.}

Compute the rank $r$ and $S_{\max}$ as in the preliminaries\;
\If{$r=0$ or $f(S_{\max})=0$}{
    \Return{$\varnothing$}\;
}
Construct the augmented matroid $\calM$ from $\calM_0$ as in the preliminaries\;
$\calZ_0\gets\operatorname{InitialGuideDistribution}(f,\calM)$\;
$d_{\max}\gets\lceil2/\varepsilon\rceil+1$\;

\SetKwFunction{Explore}{Explore}
\SetKwProg{Proc}{Procedure}{:}{end}

\Proc{\Explore{$\calZ,d$}}{
    $\mathcal C_{\mathrm{box}}
      \gets\operatorname{DeterministicBoxCandidates}(\calZ,\rho)$\;

    $(\mathcal C_{\mathrm{greedy}},\mathcal G)
      \gets\operatorname{DeterministicGuidedGreedy}
      (f,\calM,\calZ)$\;

    $\mathcal C
      \gets\mathcal C_{\mathrm{box}}
      \cup\mathcal C_{\mathrm{greedy}}$\;

    \If{$d<d_{\max}$}{
        \ForEach{$\calZ'\in\mathcal G$}{
            $\mathcal C
              \gets\mathcal C
              \cup\{\Explore(\calZ',d+1)\}$\;
        }
    }

    \Return{a maximum-value set in $\mathcal C$}\;
}

$S\gets\Explore(\calZ_0,0)$\;
\Return{$S$ with dummy elements removed}\;

\caption{Main deterministic algorithm}
\label{alg:main-deterministic}
\end{algorithm}

\subsection{Deterministic box maximization}
\label{box:section}

The box maximization algorithm of Buchbinder and
Feldman~\cite{BuchbinderF2024DR} is derived from the double-greedy algorithm of
Buchbinder et al.~\cite{BuchbinderFNS2015}.  We implement it using a $\rho$-weighted
adaptation of the deterministic double-greedy construction of Buchbinder and
Feldman~\cite{BuchbinderF2018}.

Fix a base $Z$ in the support of the guide distribution.  The corresponding
box maximization problem is to find a
high-value set among all subsets of $Z$:
\[
    \max_{S\subseteq Z} f(S).
\]
The algorithm processes the elements of $Z$ in a fixed order.  As in
standard double-greedy, it maintains a lower set $X$ and an upper set
$Y$ satisfying
\[
    X\subseteq Y\subseteq Z.
\]
Elements in $X$ have been accepted, elements outside $Y$ have been
rejected, and the elements of $Y\setminus X$ remain undecided.  The
initial state is $(X,Y)=(\varnothing,Z)$.

When processing an element $u\in Y\setminus X$, define its addition
and deletion gains by
\[
    a=f(X+u)-f(X),
    \qquad
    b=f(Y-u)-f(Y).
\]
There are two possible updates:
\[
    (X,Y)\longrightarrow(X+u,Y)
    \qquad\text{and}\qquad
    (X,Y)\longrightarrow(X,Y-u).
\]
A randomized double-greedy algorithm chooses between these updates
randomly.  The deterministic construction instead maintains an
explicit weighted family of states and retains both updates whenever
they receive positive weight.

Formally, at the beginning of an iteration the algorithm stores
\[
    \mathcal D=\{(w_j,X_j,Y_j):j\in J\},
\]
where
\[
    w_j\ge 0,\qquad
    \sum_{j\in J}w_j=1,\qquad
    X_j\subseteq Y_j\subseteq Z.
\]
For the next element $u$, let
\[
    a_j=f(X_j+u)-f(X_j),
    \qquad
    b_j=f(Y_j-u)-f(Y_j).
\]
For every state, the algorithm assigns a number $p_j\in[0,1]$.
The state is then replaced by
\[
    (w_jp_j,X_j+u,Y_j)
 \qquad
 \text{and}
 \qquad    (w_j(1-p_j),X_j,Y_j-u),
\]
with zero-weight states discarded. The values $p_j$ are chosen to satisfy
\begin{align}
 \sum_{j\in J}w_j
 \left(
    \frac{p_ja_j}{2\rho}
    +\frac{\rho(1-p_j)b_j}{2}
 \right)
 &\ge
 \sum_{j\in J}w_jp_jb_j,
 \label{box:add-constraint}\\
 \sum_{j\in J}w_j
 \left(
    \frac{p_ja_j}{2\rho}
    +\frac{\rho(1-p_j)b_j}{2}
 \right)
 &\ge
 \sum_{j\in J}w_j(1-p_j)a_j.
 \label{box:delete-constraint}
\end{align}
Here, $\rho>0$ is the imbalance parameter fixed in the preliminaries.
These two inequalities ensure that the weighted progress of the
maintained states compensates for either possible change in a
comparison solution.

A feasible vector $(p_j)_{j\in J}$ can be found deterministically.
One first uses the statewise choices
\[
 p_j=
 \begin{cases}
  0,
      & a_j\le 0,\\
  1,
      & a_j>0\text{ and }b_j\le 0,\\
  \dfrac{a_j}{a_j+b_j},
      & a_j>0\text{ and }b_j>0.
 \end{cases}
\]
Submodularity gives $a_j+b_j\ge 0$.  The sign cases satisfy both constraints
directly.  When $a_j,b_j>0$, both right-hand sides contributed by state $j$
equal $a_jb_j/(a_j+b_j)$, while its contribution to the common left-hand
side is
\[
    \frac{a_j^2/\rho+\rho b_j^2}{2(a_j+b_j)}
    \ge\frac{a_jb_j}{a_j+b_j}.
\]
Thus these choices satisfy both aggregate inequalities.  Both constraint
slacks are affine functions of $p$.  If more than two coordinates are
fractional, the two-row coefficient matrix has a nonzero null direction
supported on those coordinates.  Moving along this direction preserves both
slacks, and one may continue until another coordinate reaches zero or one.
Repeating this operation
leaves at most two fractional coordinates.  Consequently, processing
one element increases the number of stored states by at most two, so
the state support remains polynomially bounded.

After every element of $Z$ has been processed, each surviving state
satisfies $X_j=Y_j$.  The algorithm returns a maximum-value set among
these final states.  We denote the returned set by
\[
    \operatorname{DeterministicBox}(Z,\rho).
\]
All element orders, nullspace choices, and tie-breaking rules are
fixed in advance, so the procedure is deterministic.

\begin{algorithm}[tb]
\DontPrintSemicolon
\KwIn{A non-negative submodular function $f$, a base $Z$, and an
imbalance parameter $\rho>0$.}
\KwOut{A set $S\subseteq Z$.}

$\mathcal D\gets\{(1,\varnothing,Z)\}$\;

\For{each $u\in Z$ in the fixed order}{
    Write $\mathcal D=\{(w_j,X_j,Y_j):j\in J\}$\;
    $a_j\gets f(X_j+u)-f(X_j)$ and
    $b_j\gets f(Y_j-u)-f(Y_j)$ for every $j\in J$\;

    \ForEach{$j\in J$}{
        $p_j\gets 0$ if $a_j\le 0$;
        $p_j\gets 1$ if $a_j>0$ and $b_j\le 0$;
        otherwise $p_j\gets a_j/(a_j+b_j)$\;
    }

    $I\gets\{j\in J:0<p_j<1\}$\;
    \While{$|I|>2$}{
        Choose a nonzero vector $d\in\mathbb R^J$, supported on $I$,
        such that replacing $p$ by $p+\lambda d$ preserves the slack in
        both \eqref{box:add-constraint} and
        \eqref{box:delete-constraint} for every $\lambda$\;
        $\lambda\gets\max\{t>0:p+td\in[0,1]^J\}$\;
        $p\gets p+\lambda d$ and
        $I\gets\{j\in J:0<p_j<1\}$\;
    }

    $\mathcal D\gets
      \{(w_jp_j,X_j+u,Y_j):j\in J,\ p_j>0\}
      \cup
      \{(w_j(1-p_j),X_j,Y_j-u):j\in J,\ p_j<1\}$\;
}

\Return{a maximum-value set among
$\{X_j:(w_j,X_j,Y_j)\in\mathcal D\}$}\;

\caption{$\operatorname{DeterministicBox}(Z,\rho)$}
\label{alg:deterministic-box}
\end{algorithm}

The guarantee is obtained by following a fixed comparison set through the
state distribution.  The two update constraints charge every loss in this
comparison value to increases in the lower- and upper-set potentials.  After
these charges are summed over the processed elements, the comparison set and
both potentials have the coefficients in the following lemma.

\begin{lemma}
\label{box:guarantee}
For every base $Z$, every set $Q\subseteq Z$, and every $\rho>0$, the
set $S=\operatorname{DeterministicBox}(Z,\rho)$ satisfies
\[
    f(S)
    \ge
    \frac{
        2\rho f(Q)+f(\varnothing)+\rho^2f(Z)
    }{(\rho+1)^2}.
\]
\end{lemma}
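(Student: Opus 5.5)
We follow a fixed comparison set through the evolving weighted family of states, in the style of the double-greedy analysis. For each state $j$ define the comparison set
\[
  O_j=(Q\cup X_j)\cap Y_j .
\]
Initially the family is $\{(1,\varnothing,Z)\}$, so $O=Q$ because $Q\subseteq Z$. After all elements of $Z$ are processed, $X_j=Y_j$, and hence $O_j=X_j$. We track three weighted potentials: $\Phi_X=\sum_j w_jf(X_j)$, $\Phi_Y=\sum_j w_jf(Y_j)$, and $\Phi_O=\sum_jw_jf(O_j)$. Initially $\Phi_X=f(\varnothing)$, $\Phi_Y=f(Z)$ and $\Phi_O=f(Q)$. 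At termination all three equal $\sum_jw_jf(X_j)$, which is at most $f(S)$ because $S$ has maximum value among the surviving $X_j$.

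\textbf{Per-step inequality.} We will show that processing one element $u$ gives
\[
  \frac{1}{2\rho}\Delta\Phi_X+\frac{\rho}{2}\Delta\Phi_Y\ge -\Delta\Phi_O ,
\]
where $\Delta$ denotes the change over the step.

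The increments of the first two potentials are immediate. We have $\Delta\Phi_X=\sum_jw_jp_ja_j$, and $\Delta\Phi_Y=\sum_jw_j(1-p_j)b_j$, so the left side is exactly the common left-hand side of \eqref{box:add-constraint}--\eqref{box:delete-constraint}.

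For the loss in $O_j$ we split by whether $u\in Q$.

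If $u\in Q$, then $u\in O_j$ already. Adding $u$ to $X_j$ leaves $O_j$ unchanged. Removing $u$ from $Y_j$ turns $O_j$ into $O_j-u$, whose loss is $f(O_j)-f(O_j-u)$. Since $O_j-u\subseteq Y_j-u$, submodularity bounds this loss by $f(Y_j)-f(Y_j-u)=-b_j\le b_j'$. The precise comparison needs care; the standard bound is loss $\le -b_j$ when $b_j\le0$ and loss $\le a_j$ in general, from $X_j\subseteq O_j-u$.

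If $u\notin Q$, then only adding $u$ to $X_j$ changes $O_j$, to $O_j+u$. Its loss is $f(O_j)-f(O_j+u)\le f(Y_j-u)-f(Y_j)=b_j$, using $O_j\subseteq Y_j-u$.

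Thus the expected loss is at most $\sum_jw_j(1-p_j)a_j$ in the first case and at most $\sum_jw_jp_jb_j$ in the second. These are the right-hand sides of \eqref{box:delete-constraint} and \eqref{box:add-constraint}, both of which the chosen $p$ satisfies. The null-space modification preserves both slacks, so the inequalities still hold after it.

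\textbf{Main obstacle.} The one subtle point is the first case. The loss must be bounded by $a_j$, via $X_j\subseteq O_j-u$ and submodularity: $f(O_j)-f(O_j-u)\le f(X_j+u)-f(X_j)=a_j$. Bounding it by $-b_j$ would not match the constraint. With the $a_j$ bound, the loss exactly matches the constraint's right-hand side $(1-p_j)a_j$. I expect this bookkeeping, and making sure the sign cases for $p_j\in\{0,1\}$ are covered by the aggregate constraints rather than statewise, to be the main check.

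\textbf{Telescoping.} Summing the per-step inequality over all elements gives
\[
  \frac{1}{2\rho}\bigl(f(S)-f(\varnothing)\bigr)+\frac{\rho}{2}\bigl(f(S)-f(Z)\bigr)\ge f(Q)-f(S),
\]
where we use that the final potentials are all at most $f(S)$. Since the coefficients of the final potentials are positive, replacing each final potential by $f(S)$ is valid.

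Multiplying through by $2\rho$ yields
\[
  (1+\rho^2+2\rho)f(S)\ge 2\rho f(Q)+f(\varnothing)+\rho^2f(Z),
\]
which is the claimed bound after dividing by $(\rho+1)^2$.
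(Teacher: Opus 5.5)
Your proof is correct and is essentially the paper's argument. It uses the same comparison sets $(Q\cup X_j)\cap Y_j$, the same loss bounds ($a_j$ when $u\in Q$, $b_j$ when $u\notin Q$) charged against the two aggregate constraints, and the same telescoping to $(\rho+1)^2F_{\mathrm{out}}\ge 2\rho f(Q)+f(\varnothing)+\rho^2 f(Z)$. Delete the stray claim in the $u\in Q$ case that $O_j-u\subseteq Y_j-u$ bounds the loss above by $-b_j$: that inclusion actually gives the lower bound $f(O_j)-f(O_j-u)\ge -b_j$, and $b_j'$ is undefined. The $a_j$ bound you settle on is the right one.
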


\begin{proof}
Fix $Q\subseteq Z$ for the analysis.  For every maintained state
$(X_j,Y_j)$, define
\[
    Q_j=(Q\cup X_j)\cap Y_j.
\]
Initially $Q_j=Q$, and after every element has been processed,
$Q_j=X_j=Y_j$.

Consider an iteration that processes $u$.  If $u\notin Q$, deleting $u$
from $Y_j$ leaves $Q_j$ unchanged, whereas adding $u$ to $X_j$ replaces
$Q_j$ by $Q_j+u$.  Since $Q_j\subseteq Y_j-u$, submodularity gives
\[
    f(Q_j)-f(Q_j+u)
    \le f(Y_j-u)-f(Y_j)
    =b_j.
\]
Hence, the aggregate decrease in the value of the comparison sets is at
most $\sum_jw_jp_jb_j$, which is bounded by the left-hand side of
\eqref{box:add-constraint}.

If $u\in Q$, adding $u$ to $X_j$ leaves $Q_j$ unchanged, whereas deleting
$u$ from $Y_j$ replaces $Q_j$ by $Q_j-u$.  Since
$X_j\subseteq Q_j-u$, submodularity gives
\[
    f(Q_j)-f(Q_j-u)
    \le f(X_j+u)-f(X_j)
    =a_j.
\]
The aggregate decrease is therefore at most
$\sum_jw_j(1-p_j)a_j$, which is bounded by the left-hand side of
\eqref{box:delete-constraint}.

Let
\[
    F_X=\sum_jw_jf(X_j),
    \qquad
    F_Y=\sum_jw_jf(Y_j),
    \qquad
    F_Q=\sum_jw_jf(Q_j)
\]
denote the three weighted values at the beginning of an iteration.  The
increase in $F_X$ during this iteration is
$\sum_jw_jp_ja_j$, and the increase in $F_Y$ is
$\sum_jw_j(1-p_j)b_j$.  Combining the preceding case analysis with
\eqref{box:add-constraint} and \eqref{box:delete-constraint} shows that
the decrease in $F_Q$ is at most
\[
    \frac{1}{2\rho}\,\Delta F_X
    +\frac{\rho}{2}\,\Delta F_Y.
\]

Initially,
\[
    (F_X,F_Y,F_Q)
    =\bigl(f(\varnothing),f(Z),f(Q)\bigr).
\]
At termination, $X_j=Y_j=Q_j$ in every state, so all three weighted
values are equal; denote their common value by $F_{\mathrm{out}}$.
Summing the per-iteration inequality therefore yields
\[
    f(Q)-F_{\mathrm{out}}
    \le
    \frac{F_{\mathrm{out}}-f(\varnothing)}{2\rho}
    +\frac{\rho}{2}
       \bigl(F_{\mathrm{out}}-f(Z)\bigr).
\]
Multiplying by $2\rho$ and rearranging gives
\[
    (\rho+1)^2F_{\mathrm{out}}
    \ge
    2\rho f(Q)+f(\varnothing)+\rho^2f(Z).
\]
The algorithm returns a maximum-value set in the final state support, whose
value is at least $F_{\mathrm{out}}$.  This proves the lemma.
\end{proof}

The comparison set $Q$ is used only in the analysis and is not given
to the algorithm.  Later, we apply the lemma with
$Q=\OPT\cap Z$.  For a guide distribution
\[
    \calZ=\{(Z_1,\theta),(Z_2,1-\theta)\},
\]
we run $\operatorname{DeterministicBox}$ independently on every positive-weight
base $Z_i$ and retain all returned sets as candidates.  We denote the resulting
family by $\operatorname{DeterministicBoxCandidates}(\calZ,\rho)$.  In
particular, this routine does not sample a base from $\calZ$.

\subsection{Deterministic direction oracle}
\label{dir:section}

The guided greedy procedure requires, in every round, a distribution over
partitioned bases whose blocks supply sufficient marginal progress and whose
induced distribution over underlying bases is an $\varepsilon$-stationary guide distribution
for a possible recursive call.  We call this output a \emph{direction
distribution}.  These requirements need not be met by a single partitioned
base, so the direction distribution may be supported on two partitioned
bases.  The underlying one-exchange local search originates in Buchbinder and
Feldman~\cite[Algorithm~3]{BuchbinderF2024Deterministic}; Chen et al.~\cite[Section~3.1]{ChenGLSZ2026}
adapt it to obtain a one-exchange local optimum for a non-monotone objective.
Our extended local search incorporates block reassignments and an auxiliary
weighted objective.  A binary search over the multiplier then produces two
$\eta$-locally optimal partitioned bases that can be mixed to attain a
prescribed aggregate block gain.

Let $g:2^{\mathcal N}\to\mathbb R_{\ge0}$ be a submodular auxiliary function.  The analysis below assumes
\begin{equation}
    g(\varnothing)\le f(\OPT),
    \qquad
    g(S)\le g(\varnothing)+f(\OPT)
    \quad\text{for every }S\in\calI.
    \label{dir:g-bounds}
\end{equation}
The guided greedy procedure will specify $g$ in each round.
Throughout this subsection, $f$ and $g$ are fixed within each call to the
direction oracle.  We therefore suppress them from the notation; the
multiplier $\lambda$ remains explicit because it varies during the search.

A \emph{partitioned base} is a tuple
$\calT=(T_1,\ldots,T_\ell)$ of pairwise disjoint sets whose union
\[
    T=q(\calT)=\bigcup_{b=1}^{\ell}T_b
\]
is a base of $\calM$; empty blocks are allowed.  Define its aggregate block
gain and its multiplier-dependent objective by
\begin{align}
    W(\calT)
    &=\sum_{b=1}^{\ell}
        \bigl(g(T_b)-g(\varnothing)\bigr),
    \label{dir:block-gain}\\
    H_\lambda(\calT)
    &=f(q(\calT))+\lambda W(\calT),
    \qquad \lambda\ge0.
    \label{dir:weighted-objective}
\end{align}

\subsubsection{Extended local search}

The local search uses exchange and reassignment moves.  Suppose that
$v\in T_k$, $u\notin T$, and $T-v+u$ is a base.  An exchange removes $v$
from its current block $T_k$ and assigns the entering element $u$ to a chosen
destination block $T_j$.  If $j\ne k$, the update replaces $T_k$ by $T_k-v$
and $T_j$ by $T_j+u$; if $j=k$, it replaces $T_k$ by $T_k-v+u$.  All other
blocks remain unchanged.  The tested gain of this exchange is
\begin{equation}
\begin{aligned}
    \Gamma_{\lambda,\calT}(u,v,j)
    &=f(u\mid T)-f(v\mid T-v)+
      \lambda\bigl(g(u\mid T_j)-g(v\mid T_k-v)\bigr).
\end{aligned}
\label{dir:exchange-gain}
\end{equation}
For $v\in T_k$ and $j\ne k$, a reassignment move transfers $v$ from
$T_k$ to $T_j$ without changing the underlying base.  Its tested gain is
\begin{equation}
    \Lambda_{\lambda,\calT}(v,j)
    =\lambda\bigl(g(v\mid T_j)-g(v\mid T_k-v)\bigr).
    \label{dir:reassignment-gain}
\end{equation}
The exchange quantity is a lower bound on the actual change in $H_\lambda$,
whereas the reassignment quantity is the exact change.
Both quantities depend on the current partitioned base and on the multiplier.
When $\lambda$ and $\calT$ are fixed by the current local-search state, we
write $\Gamma(u,v,j)$ and $\Lambda(v,j)$, respectively.

Set
\[
    \eta=\frac{\varepsilon f(S_{\max})}{4r}.
\]
A partitioned base is \emph{$\eta$-locally optimal} for $H_\lambda$ if
every feasible exchange satisfies $\Gamma(u,v,j)\le\eta$ and every
reassignment satisfies $\Lambda(v,j)\le\eta$.  Local optimality here is
defined with respect to these two discrete neighborhoods and their tested
gains.

\begin{algorithm}[tb]
\DontPrintSemicolon
\KwIn{Submodular functions $f$ and $g$, and a multiplier $\lambda\ge0$.}
\KwOut{An $\eta$-locally optimal partitioned base $\calT$.}

Initialize $\calT$ to a fixed partition of the all-dummy base\;

\While{a feasible exchange with $\Gamma(u,v,j)>\eta$ or a reassignment
with $\Lambda(v,j)>\eta$ exists}{
    Select the first such move in the fixed inspection order\;
    Apply the selected exchange or reassignment\;
}

\Return{$\calT$}\;

\caption{Extended Local Search}
\label{alg:extended-local-search}
\end{algorithm}

To justify this procedure, we first compare each tested exchange or
reassignment gain with the resulting change in $H_\lambda$.  An upper bound on
the range of $H_\lambda$ then limits the number of accepted moves and gives
the termination statement below.

\begin{lemma}
\label{lem:extended-local-search}
Every move performed by Algorithm~\ref{alg:extended-local-search} increases
$H_\lambda$ by more than $\eta$.  Moreover, the algorithm performs fewer than
\[
    \frac{(1+\lambda\ell)f(\OPT)}{\eta}
    \le
    \frac{4rn(1+\lambda\ell)}{\varepsilon}
\]
moves and terminates at an $\eta$-locally optimal partitioned base.  In particular, for
$0\le\lambda\le\lambda_{\max}$, the number of iterations is polynomial.
\end{lemma}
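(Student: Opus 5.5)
The proof has three steps: check that each move raises $H_\lambda$ by its tested gain, bound the total range of $H_\lambda$ over partitioned bases, and divide the range by $\eta$.

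\emph{Step 1: each move gains more than $\eta$.} For a reassignment this is immediate, because $\Lambda(v,j)$ is exactly the change in $H_\lambda$. The base $T$ is unchanged, and only blocks $T_k$ and $T_j$ change: $g(T_k)$ changes by $-g(v\mid T_k-v)$ and $g(T_j)$ by $g(v\mid T_j)$.

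For an exchange, I would first treat the $f$-part. Since $u\notin T$, we have $f(T-v+u)-f(T)=f(u\mid T-v)-f(v\mid T-v)$. Submodularity gives $f(u\mid T-v)\ge f(u\mid T)$, so the $f$-change is at least $f(u\mid T)-f(v\mid T-v)$.

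For the $W$-part with $j\ne k$, the change is exactly $g(u\mid T_j)-g(v\mid T_k-v)$. With $j=k$, the change is
\[
g(T_k-v+u)-g(T_k)=g(u\mid T_k-v)-g(v\mid T_k-v)\ge g(u\mid T_k)-g(v\mid T_k-v),
\]
again by submodularity. Since $\lambda\ge0$, the actual change in $H_\lambda$ is at least $\Gamma(u,v,j)>\eta$. If $u$ or $v$ is a dummy element, both sides are unaffected by the extension convention.

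\emph{Step 2: the range of $H_\lambda$.} The initial partition of the all-dummy base has $f(q(\calT))=f(\varnothing)\ge0$ and every block value $g(\varnothing)$, so $W=0$. Hence $H_\lambda(\calT_0)\ge0$.

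For any partitioned base, $q(\calT)\in\calI$, so $f(q(\calT))\le f(\OPT)$. Each block $T_b$ is a subset of a base and hence independent, so \eqref{dir:g-bounds} gives $g(T_b)-g(\varnothing)\le f(\OPT)$. Summing over the $\ell$ blocks gives $W\le\ell f(\OPT)$, so $H_\lambda\le(1+\lambda\ell)f(\OPT)$.

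Combining with Step 1, the number of moves $m$ satisfies $m\eta<(1+\lambda\ell)f(\OPT)$ whenever $m\ge1$. Using $f(\OPT)\le nf(S_{\max})$ and $\eta=\varepsilon f(S_{\max})/(4r)$, this becomes the stated bound $4rn(1+\lambda\ell)/\varepsilon$.

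\emph{Step 3: termination.} The loop stops only when no move has tested gain above $\eta$, which is exactly $\eta$-local optimality. For $\lambda\le\lambda_{\max}$ the bound on the number of moves is polynomial, with $\ell$ depending only on $\varepsilon$. Each iteration checks polynomially many candidate moves, each with one independence query and $O(1)$ value queries.

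The only delicate point is the same-block exchange with $j=k$: there the inequality in the $g$-part must go in the right direction. It does, because $g(u\mid T_k-v)\ge g(u\mid T_k)$ by submodularity.
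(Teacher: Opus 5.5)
Your proposal is correct and follows essentially the same route as the paper. You show that each exchange raises $H_\lambda$ by at least its tested gain, using $f(u\mid T-v)\ge f(u\mid T)$ and, for $j=k$, $g(u\mid T_k-v)\ge g(u\mid T_k)$, and that a reassignment changes it by exactly its tested gain; you then divide the range between the initial value $f(\varnothing)\ge0$ and the upper bound $(1+\lambda\ell)f(\OPT)$ from \eqref{dir:g-bounds} by $\eta$. The only detail left implicit, in both your proof and the paper's, is that the strict bound in the zero-move case relies on the standing assumption $f(\OPT)\ge f(S_{\max})>0$.
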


\begin{proof}
Fix a feasible exchange at the current partitioned base
$\calT=(T_1,\ldots,T_\ell)$.  Let $v\in T_k$ be the element removed from its
current block, let $u\notin T$ be the entering element, and let $T_j$ be its
chosen destination block.  The underlying base changes from $T$ to $T-v+u$,
so submodularity gives
\begin{align}
    f(T-v+u)-f(T)
    =f(u\mid T-v)-f(v\mid T-v)
    \ge f(u\mid T)-f(v\mid T-v).
    \label{dir:exchange-f-change}
\end{align}
Suppose first that $j\ne k$.  The operation updates
$T_k\leftarrow T_k-v$ and $T_j\leftarrow T_j+u$, while every other block is
unchanged.  Therefore
\begin{align*}
    \Delta W
    &=g(T_k-v)-g(T_k)+g(T_j+u)-g(T_j)=-g(v\mid T_k-v)+g(u\mid T_j).
\end{align*}
Combining this identity with \eqref{dir:exchange-f-change} yields
\[
    \Delta H_\lambda
    \ge\Gamma_{\lambda,\calT}(u,v,j).
\]

Suppose now that $j=k$.  The entering element is assigned to the same block
from which $v$ is removed, so the operation updates
$T_k\leftarrow T_k-v+u$ and leaves all other blocks unchanged.  Hence
\begin{align*}
    \Delta W
    &=g(T_k-v+u)-g(T_k)\\
    &=g(u\mid T_k-v)-g(v\mid T_k-v)\\
    &\ge g(u\mid T_k)-g(v\mid T_k-v),
\end{align*}
where the inequality follows from $T_k-v\subseteq T_k$ and submodularity.
Since $\lambda\ge0$, equation~\eqref{dir:exchange-f-change} again gives
\[
    \Delta H_\lambda
    \ge\Gamma_{\lambda,\calT}(u,v,k).
\]

For a reassignment of $v\in T_k$ to $T_j$, the underlying base does not
change.  Hence the $f$-term is unchanged, and
\begin{align*}
    \Delta H_\lambda
    &=\lambda\bigl(
        g(T_j+v)-g(T_j)+g(T_k-v)-g(T_k)
      \bigr)\\
    &=\lambda\bigl(
        g(v\mid T_j)-g(v\mid T_k-v)
      \bigr)\\
    &=\Lambda_{\lambda,\calT}(v,j).
\end{align*}
Thus every move selected by the algorithm increases $H_\lambda$ by more than
$\eta$.

It remains to bound the number of iterations.  At initialization, $\calT$ is
a partition of the all-dummy base.  Since both functions ignore dummy
elements,
\[
    H_\lambda(\calT)=f(\varnothing)\ge0.
\]
For any partitioned base, its underlying base becomes an independent set after
the dummy elements are removed, and hence
\[
    f(q(\calT))\le f(\OPT).
\]
Each block $T_b$ is also independent, so \eqref{dir:g-bounds} gives
\[
    g(T_b)-g(\varnothing)\le f(\OPT).
\]
Consequently,
\[
    W(\calT)\le\ell f(\OPT)
    \qquad\text{and}\qquad
    H_\lambda(\calT)\le(1+\lambda\ell)f(\OPT).
\]
Because every iteration increases $H_\lambda$ by more than $\eta$, the number
of iterations is strictly smaller than
\[
    \frac{(1+\lambda\ell)f(\OPT)}{\eta}
    \le
    \frac{4rn(1+\lambda\ell)}{\varepsilon},
\]
where the inequality uses
$\eta=\varepsilon f(S_{\max})/(4r)$ and
$f(\OPT)\le n f(S_{\max})$.  The direction oracle only calls the local search
with $0\le\lambda\le\lambda_{\max}$, so this is a polynomial bound.  Finally,
when the algorithm terminates, no feasible exchange has gain greater than
$\eta$ and no reassignment has gain greater than $\eta$.  This is precisely
the definition of $\eta$-local optimality.
\end{proof}

The locally optimal partitioned base satisfies the tradeoff needed by the
direction oracle.  To obtain it, we average the terminal local-optimality
inequalities over a base-exchange matching and separate their $f$- and
$g$-contributions.  Exchanges involving dummy elements control the
potentially negative marginals of elements common to the two bases.

\begin{lemma}
\label{lem:weighted-exchange}
Let $\calT=(T_1,\ldots,T_\ell)$ be the output of
Algorithm~\ref{alg:extended-local-search} on input $(f,g,\lambda)$, and put
$T=q(\calT)$.
Then
\begin{equation}
\begin{aligned}
    &f(T\cup\OPT)+f(T\cap\OPT)-2f(T)+
      \lambda\left[
        \left(1-\frac1\ell\right)g(\OPT)-g(\varnothing)
        -\left(1+\frac1\ell\right)W(\calT)
      \right]\\
    &\hspace{15mm}\le
      \left(1+\frac1\ell\right)
      \frac{\varepsilon f(S_{\max})}{4}
      +\frac{f(T)}{\ell}.
\end{aligned}
\label{dir:weighted-exchange-bound}
\end{equation}
\end{lemma}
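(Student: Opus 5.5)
We plan to sum the terminal $\eta$-local-optimality inequalities over a Brualdi matching between $T$ and $\OPT$. Both sets are bases of the augmented matroid $\calM$, so Lemma~\ref{lem:base-exchange-matching} gives a bijection $h:\OPT\to T$ with $h(u)=u$ on $\OPT\cap T$ and $T-h(u)+u$ a base for every $u\in\OPT$.

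\emph{Exchanges with $u\in\OPT\setminus T$.} For each such $u$ we use the exchange that removes $v=h(u)\in T_{k(v)}$ and inserts $u$. We choose the destination block $j(u)$ by an averaging device: order $\OPT\setminus T$ and, for each index $b$, compare with the block assignment that cycles through $b,b+1,\ldots$. Equivalently, we average over all $\ell$ choices of $j$ with weight $1/\ell$. Summing $\Gamma(u,h(u),j)\le\eta$ over these $u$ (at most $r$ terms) and using submodularity, the $f$-part gives
\[
  \sum_u f(u\mid T)\ge f(T\cup\OPT)-f(T)
  \qquad\text{and}\qquad
  \sum_{v\in T\setminus\OPT} f(v\mid T-v)\le f(T)-f(T\cap\OPT),
\]
the latter by telescoping along a chain from $T\cap\OPT$ up to $T$. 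For the $g$-part, averaging $j$ uniformly yields $\frac1\ell\sum_b\sum_u g(u\mid T_b)$. Telescoping inside each block gives $\sum_u g(u\mid T_b)\ge g(T_b\cup\OPT')-g(T_b)$. Then, summing over $b$ and using the standard bound for disjoint sets, namely $\sum_b g(T_b\cup \OPT)\ge(\ell-1)g(\OPT)+g(\varnothing)$ roughly, we get
\[
  \frac1\ell\sum_b\bigl(g(T_b\cup\OPT)-g(T_b)\bigr)\ge\Bigl(1-\tfrac1\ell\Bigr)g(\OPT)-g(\varnothing)-\tfrac1\ell W.
\]
The disjoint-set bound follows since each element lies in at most one $T_b$; concretely, we would use $\sum_b g(T_b\cup\OPT)\ge \ell g(\OPT)-\sum_b (g(\OPT)-g(T_b\cup \OPT))$ together with submodular bounds against $g(\varnothing)$ and non-negativity. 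The removal terms $\sum_k\sum_{v\in T_k} g(v\mid T_k-v)\le W$ follow by telescoping within blocks (submodularity). This produces the coefficient $(1+\frac1\ell)W$ after collecting the two $W$ contributions.

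\emph{Elements common to both sets.} For elements $v\in T\cap\OPT$ the matching is trivial, and we instead need to control any negative $f(v\mid T-v)$ and the $g$-removal terms of those elements. Here we plan to use exchanges that swap $v$ with a dummy element $d\notin T$ (possible since the augmented matroid allows replacing any element by a fresh dummy whenever $T-v+d$ stays a base). Dummy elements have zero marginal for both $f$ and $g$, so local optimality gives $-f(v\mid T-v)-\lambda g(v\mid T_k-v)\le\eta$, which bounds the missing terms. We will need to check that enough dummies lie outside $T$, or else route through reassignments $\Lambda(v,j)\le\eta$.

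\emph{Collecting the error terms.} There are at most $r$ exchanges, each with error $\eta$, plus averaging over $\ell$ destinations; this gives at most $(1+\frac1\ell)r\eta=(1+\frac1\ell)\varepsilon f(S_{\max})/4$. The residual $f(T)/\ell$ arises from the $1/\ell$ fraction of exchanges that we cannot use in the averaging and bound crudely. The main obstacle is the $g$-accounting: getting exactly $(1-\frac1\ell)g(\OPT)-g(\varnothing)$ from disjoint blocks while handling $\OPT\cap T$. I would first try assigning each $u$ to the block containing $h(u)$ (so $j=k$) for a $1-\frac1\ell$ fraction of the weight, and to uniformly random blocks for the rest, then tune the weights to match the stated coefficients.
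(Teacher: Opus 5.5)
\textbf{There is a genuine gap in your $g$-accounting, located at the elements of $\OPT\cap T$.}

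Your averaged exchanges involve only $u\in\OPT\setminus T$. So the insertion telescoping gives $\sum_{u\in\OPT\setminus T}g(u\mid T_b)\ge g\bigl(T_b\cup(\OPT\setminus T)\bigr)-g(T_b)$. The disjoint-block argument then yields $(1-\frac1\ell)g(\OPT\setminus T)$, not $(1-\frac1\ell)g(\OPT)$. You silently switch from $\OPT'$ to $\OPT$, and for non-monotone $g$ the two are not comparable.

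Likewise, the exchanges produce the removal terms $d_v=g(v\mid T_k-v)$ only for $v\in T\setminus\OPT$. To use $\sum_{v\in T}d_v\le W(\calT)$ you must add back $\sum_{v\in\OPT\cap T}d_v$ with coefficient one. Bounding that through dummy exchanges ($\lambda d_v\ge-f(v\mid T-v)-\eta$) costs $f(T)+r\eta$, not $f(T)/\ell$.

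Your explanation of the $f(T)/\ell$ term (``the $1/\ell$ fraction of exchanges we cannot use'') does not correspond to an actual mechanism. The proposed reweighting toward $j=k$ also does not help: it pairs each $u$ with a different block, so nothing telescopes. Separately, your justification of the disjoint-union bound is an identity, not a proof. The ``$+g(\varnothing)$'' should be $+g(\OPT\cup T)\ge0$, obtained from repeated $g(A)+g(B)\ge g(A\cup B)+g(A\cap B)$ with $A=\OPT\cup\bigcup_{i<j}T_i$, $B=\OPT\cup T_j$, and $A\cap B=\OPT$.

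\textbf{The missing idea is that reassignments are part of the main average, not a fallback.} For $u\in\OPT\cap T_k$, average $\Lambda(u,j)\le\eta$ over $j\ne k$ with weight $1/\ell$ each, treating $j=k$ as a null move of gain zero. Since $g(u\mid T_k)=0$, these moves supply exactly the missing terms. The sum $\frac1\ell\sum_j\sum_{u\in\OPT}g(u\mid T_j)$ then runs over all of $\OPT$, and telescoping plus the disjoint-union bound give $(1-\frac1\ell)g(\OPT)-g(\varnothing)-\frac1\ell W(\calT)$.

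These moves also contribute $-(1-\frac1\ell)d_u$ for each common element. The total removal term therefore becomes $-\sum_{v\in T}d_v+\frac1\ell\sum_{u\in\OPT\cap T}d_u\ge -W(\calT)+\frac1\ell\sum_{u\in\OPT\cap T}d_u$. Only this $\frac1\ell$-weighted residual needs the dummy-exchange bound $\lambda\sum_{u\in\OPT\cap T}d_u\ge -f(T)-r\eta$. That is precisely where $f(T)/\ell$ and the extra $r\eta/\ell$ in $(1+\frac1\ell)r\eta$ come from.

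Dummy availability is not an issue. If $v$ is a real element of $T$, then $T$ contains at most $r-1$ dummies, so some dummy $e\notin T$ exists and $T-v+e$ is a base.
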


\begin{proof}
By Lemma~\ref{lem:base-exchange-matching}, there is a bijection
$h:\OPT\to T$ such that $T-h(u)+u$ is a base for every $u\in\OPT$, and
$h(u)=u$ for every $u\in\OPT\cap T$.  For $u\in\OPT\setminus T$, define
\[
    A_u=\frac1\ell\sum_{j=1}^{\ell}\Gamma(u,h(u),j).
\]
For $u\in\OPT\cap T_k$, define
\[
    A_u=\frac1\ell
        \sum_{\substack{j=1\\j\ne k}}^{\ell}\Lambda(u,j).
\]
The omitted destination $j=k$ is the null action, whose gain is zero.  Thus
both definitions average over all $\ell$ destination blocks.  Set
\[
    \Sigma=\sum_{u\in\OPT}A_u=\Sigma_f+\lambda\Sigma_g,
\]
where $\Sigma_f$ is the total contribution of the $f$-terms and $\Sigma_g$
is the coefficient of $\lambda$.  Every non-null term is a feasible tested
move and is at most $\eta$ at termination; the null term is zero.  Since
$|\OPT|=r$,
\begin{equation}
    \Sigma\le r\eta
    =\frac{\varepsilon f(S_{\max})}{4}.
    \label{dir:averaged-moves}
\end{equation}

\emph{Contribution of $f$.}
Reassignment and null actions do not change the underlying base, and the
$f$-part of an exchange gain is independent of its destination block.
Therefore
\[
\begin{aligned}
    \Sigma_f
    &=\sum_{u\in\OPT\setminus T}f(u\mid T)
      -\sum_{u\in\OPT\setminus T}f(h(u)\mid T-h(u))=\sum_{u\in\OPT\setminus T}f(u\mid T)
      -\sum_{v\in T\setminus\OPT}f(v\mid T-v),
\end{aligned}
\]
where the second equality uses that $h$ maps $\OPT\setminus T$ bijectively
onto $T\setminus\OPT$.
Adding the elements of $\OPT\setminus T$ successively shows that the first
sum is at least $f(T\cup\OPT)-f(T)$.  Deleting the elements of
$T\setminus\OPT$ successively shows that the second sum is at most
$f(T)-f(T\cap\OPT)$: when $v$ is deleted, its current context is contained
in $T-v$.  Consequently,
\begin{equation}
    \Sigma_f\ge f(T\cup\OPT)+f(T\cap\OPT)-2f(T).
    \label{dir:f-contribution}
\end{equation}

\emph{Contribution of $g$.}
For $v\in T_k$, put $d_v=g(v\mid T_k-v)$.  Expanding the tested gains and
using the bijectivity of $h$ gives
\begin{equation}
\begin{aligned}
    \Sigma_g={}&
    \frac1\ell\sum_{u\in\OPT\setminus T}\sum_{j=1}^{\ell}
        \bigl[g(u\mid T_j)-d_{h(u)}\bigr]+\frac1\ell\sum_{k=1}^{\ell}
        \sum_{u\in\OPT\cap T_k}
        \sum_{\substack{j=1\\j\ne k}}^{\ell}
        \bigl[g(u\mid T_j)-d_u\bigr]\\
    ={}&\frac1\ell\sum_{j=1}^{\ell}\sum_{u\in\OPT}g(u\mid T_j)
        -\sum_{v\in T}d_v
        +\frac1\ell\sum_{u\in\OPT\cap T}d_u.
\end{aligned}
\label{dir:g-exact}
\end{equation}
In the second equality, the final sum corrects the terms with
$u\in\OPT\cap T_k$ and $j=k$: including this destination in the double sum
contributes $g(u\mid T_k)-d_u=-d_u$, whereas the corresponding null action
has value zero.  Successive deletion within each block gives
\begin{equation}
\begin{aligned}
    \sum_{v\in T}d_v
    &=\sum_{j=1}^{\ell}\sum_{v\in T_j}g(v\mid T_j-v)\le\sum_{j=1}^{\ell}\bigl[g(T_j)-g(\varnothing)\bigr]
     =W(\calT).
\end{aligned}
    \label{dir:g-deletion}
\end{equation}
To see the inequality, fix an ordering $v_1,\ldots,v_m$ of $T_j$ and let
$R_i=T_j\setminus\{v_1,\ldots,v_i\}$.  Since
$R_i\subseteq T_j-v_i$, submodularity gives
\[
    g(v_i\mid R_i)\ge g(v_i\mid T_j-v_i),
\]
and summing over $i$ telescopes to
$g(T_j)-g(\varnothing)$.  Applying this argument to every block proves
\eqref{dir:g-deletion}.

Similarly, order $\OPT=\{u_1,\ldots,u_r\}$ and set
$O_i=\{u_1,\ldots,u_i\}$.  Submodularity gives
$g(u_i\mid T_j)\ge g(u_i\mid T_j\cup O_{i-1})$; summing these inequalities
gives
\begin{equation}
    \sum_{u\in\OPT}g(u\mid T_j)
    \ge g(\OPT\cup T_j)-g(T_j).
    \label{dir:g-insertion}
\end{equation}
Because the blocks are pairwise disjoint, repeated submodularity also gives
\begin{equation}
    \sum_{j=1}^{\ell}g(\OPT\cup T_j)
    \ge(\ell-1)g(\OPT)+g(\OPT\cup T)
    \ge(\ell-1)g(\OPT).
    \label{dir:g-disjoint-unions}
\end{equation}
The first inequality follows from the equivalent form of
submodularity
\[
    g(A)+g(B)\ge g(A\cup B)+g(A\cap B).
\]
Indeed, when $A=\OPT\cup\bigcup_{i<j}T_i$ and
$B=\OPT\cup T_j$, pairwise disjointness of the blocks gives
$A\cap B=\OPT$.  Applying this inequality successively for
$j=2,\ldots,\ell$ yields the first inequality in
\eqref{dir:g-disjoint-unions}; its second inequality uses $g\ge0$.  Applying
\eqref{dir:g-deletion}--\eqref{dir:g-disjoint-unions} to
\eqref{dir:g-exact}, and using
$\sum_jg(T_j)=\ell g(\varnothing)+W(\calT)$, yields
\begin{equation}
\begin{aligned}
    \Sigma_g
    &\ge\frac1\ell\sum_{j=1}^{\ell}
        \bigl[g(\OPT\cup T_j)-g(T_j)\bigr]
        -W(\calT)
        +\frac1\ell\sum_{u\in\OPT\cap T}d_u\\
    &\ge\left(1-\frac1\ell\right)g(\OPT)-g(\varnothing)
        -\left(1+\frac1\ell\right)W(\calT)
        +\frac1\ell\sum_{u\in\OPT\cap T}d_u.
\end{aligned}
    \label{dir:g-lower-bound}
\end{equation}

\emph{Correction for common elements.}
Combining \eqref{dir:f-contribution} and \eqref{dir:g-lower-bound} gives
\begin{equation}
\begin{aligned}
    \Sigma\ge{}&f(T\cup\OPT)+f(T\cap\OPT)-2f(T)\\
    &+\lambda\left[
        \left(1-\frac1\ell\right)g(\OPT)-g(\varnothing)
        -\left(1+\frac1\ell\right)W(\calT)
      \right]\\
    &+\frac{\lambda}{\ell}\sum_{u\in\OPT\cap T}d_u.
\end{aligned}
\label{dir:g-common-correction}
\end{equation}
The last sum cannot be discarded because the deletion marginals $d_u$ need
not be non-negative.

Write $a_v=f(v\mid T-v)$ and $a_v^+=\max\{a_v,0\}$.  For every real element
$v\in T_k$, there is an unused dummy element $e$ such that $T-v+e$ is a base.
Both $f$ and $g$ ignore $e$, so local optimality for this exchange gives
\[
    -a_v-\lambda d_v\le\eta,
    \qquad\text{and hence}\qquad
    \lambda d_v\ge-a_v-\eta\ge-a_v^+-\eta.
\]
For a dummy element, $a_v=d_v=0$.  Since $|\OPT\cap T|\le r$, it follows that
\begin{equation}
    \lambda\sum_{u\in\OPT\cap T}d_u
    \ge-\sum_{v\in T}a_v^+-r\eta
    \ge-f(T)-\frac{\varepsilon f(S_{\max})}{4}.
    \label{dir:negative-correction}
\end{equation}
Indeed, let $P=\{v\in T:a_v>0\}$ and delete the elements of $P$ successively.
At each deletion, the current context is contained in $T-v$, so submodularity
implies that the actual loss is at least $a_v$.  Therefore
\[
    \sum_{v\in T}a_v^+
    =\sum_{v\in P}a_v
    \le f(T)-f(T\setminus P)
    \le f(T),
\]
where the last inequality follows from non-negativity.

Substituting \eqref{dir:negative-correction} into
\eqref{dir:g-common-correction} yields
\[
\begin{aligned}
    \Sigma\ge{}&f(T\cup\OPT)+f(T\cap\OPT)-2f(T)\\
    &+\lambda\left[
        \left(1-\frac1\ell\right)g(\OPT)-g(\varnothing)
        -\left(1+\frac1\ell\right)W(\calT)
      \right]\\
    &-\frac{f(T)}{\ell}
      -\frac{\varepsilon f(S_{\max})}{4\ell}.
\end{aligned}
\]
Comparing this lower bound with \eqref{dir:averaged-moves} and rearranging
proves \eqref{dir:weighted-exchange-bound}.
\end{proof}

\subsubsection{Multiplier search and mixing}

The exchange inequality couples $\varepsilon$-stationarity of the induced
guide distribution, as defined in \eqref{main:guide-stationarity}, with the
aggregate block gain.  A single multiplier need not yield exactly the desired
gain, so the direction oracle combines $\eta$-locally optimal partitioned
bases obtained at two nearby multiplier values.  Given a target
$V\in\mathbb R$, define
\begin{equation}
\begin{aligned}
    b
    &=\frac{(\ell-1)V-\ell g(\varnothing)
             -\ell\varepsilon f(S_{\max})}{\ell+1}, \qquad
    \lambda_{\max}
    =\frac{4n}{\varepsilon},
    \qquad
    \tau=\frac{\varepsilon}{4(\ell+4)n}.
\end{aligned}
\label{dir:search-parameters}
\end{equation}
Here $b$ is the mixing threshold for the aggregate block gain,
$\lambda_{\max}$ is the largest multiplier considered, and $\tau$ is the
terminal width of the multiplier interval.
Let $\calT^{\mathrm d}$ be the fixed partition of the all-dummy base.

\begin{algorithm}[tb]
\DontPrintSemicolon
\KwIn{Submodular functions $f,g$ and a target value $V$.}
\KwOut{A direction distribution $D$ supported on at most two partitioned bases.}

Compute $b$, $\lambda_{\max}$, and $\tau$ from
\eqref{dir:search-parameters}\;
$\lambda_l\gets0$ and
$\calT_l\gets\operatorname{ExtendedLocalSearch}(f,g,\lambda_l)$\;
\If{$W(\calT_l)\ge b$}{
    \Return{$\{(\calT_l,1)\}$}\;
}

$\lambda_h\gets\lambda_{\max}$ and
$\calT_h\gets\operatorname{ExtendedLocalSearch}(f,g,\lambda_h)$\;
\If{$W(\calT_h)<b$}{
    \Return{$\{(\calT^{\mathrm d},1)\}$}\;
}

\While{$\lambda_h-\lambda_l>\tau$}{
    $\lambda_c\gets(\lambda_l+\lambda_h)/2$\;
    $\calT_c\gets\operatorname{ExtendedLocalSearch}(f,g,\lambda_c)$\;
    \If{$W(\calT_c)<b$}{
        $(\lambda_l,\calT_l)\gets(\lambda_c,\calT_c)$\;
    }
    \Else{
        $(\lambda_h,\calT_h)\gets(\lambda_c,\calT_c)$\;
    }
}

$p\gets
\bigl(W(\calT_h)-b\bigr)/
\bigl(W(\calT_h)-W(\calT_l)\bigr)$\;
\Return{$\{(\calT_l,p),(\calT_h,1-p)\}$, discarding zero weights}\;

\caption{$\operatorname{DeterministicDirectionOracle}(f,g,V)$}
\label{alg:direction-oracle}
\end{algorithm}

Let $\calT(\lambda)$ denote the output of
$\operatorname{ExtendedLocalSearch}(f,g,\lambda)$.  The search does not assume
that the aggregate block gain $W(\calT(\lambda))$ is monotone in $\lambda$;
it maintains only the endpoint invariant
\begin{equation}
    W(\calT_l)<b\le W(\calT_h).
    \label{dir:bracketing-invariant}
\end{equation}
Whichever side contains the midpoint output can replace the corresponding
endpoint, preserving the invariant while halving the interval.

The analysis first bounds the error in every local-search output and uses
this bound to exclude the all-dummy fallback.  It then mixes the two bracket
endpoints and averages their weighted exchange inequalities.  The mixture
simultaneously provides the stationarity and growth guarantees stated next.

\begin{theorem}
\label{thm:direction-oracle}
Assume \eqref{dir:g-bounds} and $g(\OPT)\ge V$.  Then
Algorithm~\ref{alg:direction-oracle} does not return the all-dummy fallback.
It returns a direction distribution $D$ supported on at most two partitioned
bases such that, for $\calT=(T_1,\ldots,T_\ell)\sim D$ and $T=q(\calT)$,
\begin{align}
    \E_D\bigl[
        f(T\cup\OPT)+f(T\cap\OPT)-2f(T)
    \bigr]
    &\le2\varepsilon f(\OPT),
    \label{dir:averaged-stationarity}\\
    \E_D W(\calT)
    &\ge V-g(\varnothing)
       -\left(\frac4\ell+\varepsilon\right)f(\OPT).
    \label{dir:averaged-growth}
\end{align}
Moreover, $\E_D W(\calT)\ge b$, with equality whenever the returned
distribution uses both endpoints with positive weight.
\end{theorem}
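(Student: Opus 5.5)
The plan is to apply Lemma~\ref{lem:weighted-exchange} to each local-search output that can be returned, and then average the resulting inequalities with the mixing weights. For a single output $\calT=\calT(\lambda)$, write
\[
  \Phi(T)=f(T\cup\OPT)+f(T\cap\OPT)-2f(T).
\]
Using $g(\OPT)\ge V$, $f(T)\le f(\OPT)$ and $f(S_{\max})\le f(\OPT)$, Lemma~\ref{lem:weighted-exchange} gives
\[
  \Phi(T)+\lambda\Bigl(1+\tfrac1\ell\Bigr)\bigl(b-W(\calT)\bigr)
  \le\tfrac{f(\OPT)}{\ell}+\Bigl(1+\tfrac1\ell\Bigr)\tfrac{\varepsilon f(S_{\max})}{4}-\lambda\varepsilon f(S_{\max})\cdot\tfrac{\ell}{\ell+1}\cdot\Bigl(1+\tfrac1\ell\Bigr)\cdots .
\]
The point of the definition of $b$ is exactly this: the bracket $(1-\frac1\ell)g(\OPT)-g(\varnothing)-(1+\frac1\ell)W$ is at least $(1+\frac1\ell)(b-W)+\varepsilon f(S_{\max})$. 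So each output satisfies
\[
  \Phi(T)+\lambda\bigl[(1+\tfrac1\ell)(b-W(\calT))+\varepsilon f(S_{\max})\bigr]\le \tfrac{f(T)}{\ell}+\tfrac{\varepsilon f(S_{\max})}{2}.
\]
I will verify the constant but rely on this shape.

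\emph{Excluding the fallback.} We need that $W(\calT(\lambda_{\max}))<b$ is impossible. We have $\Phi(T)\ge -2f(T)\ge-2f(\OPT)$. If $W<b$, the bracket is at least $\varepsilon f(S_{\max})$, so
\[
  \lambda_{\max}\varepsilon f(S_{\max})\le 3f(\OPT)+\tfrac{\varepsilon}{2}f(S_{\max})\le 3nf(S_{\max})+\tfrac{\varepsilon}{2}f(S_{\max}).
\]
Since $\lambda_{\max}\varepsilon=4n$, this is a contradiction. Also, if the algorithm returns at $\lambda_l=0$ with $W\ge b$, the inequality gives $\Phi(T)\le f(\OPT)/\ell+\varepsilon f(S_{\max})/2\le 2\varepsilon f(\OPT)$, using $1/\ell\le\varepsilon$. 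Growth holds because $W\ge b$ and $b\ge V-g(\varnothing)-(\frac{2}{\ell}+\varepsilon)f(\OPT)$. For this last step, expand $b$ and use $V\le g(\OPT)\le g(\varnothing)+f(\OPT)$ and $g(\varnothing)\le f(\OPT)$ from \eqref{dir:g-bounds} to bound $(V+g(\varnothing))/(\ell+1)$ by $3f(\OPT)/\ell$. This yields the $4/\ell$ slack.

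\emph{Mixing.} At termination the invariant \eqref{dir:bracketing-invariant} holds with $\lambda_h-\lambda_l\le\tau$. The weight $p$ is chosen so that $pW(\calT_l)+(1-p)W(\calT_h)=b$, which gives the final claim $\E_DW=b$ and, as above, \eqref{dir:averaged-growth}. For stationarity, average the per-output inequalities with weights $p,1-p$, using $\lambda_l$ for $\calT_l$ and $\lambda_h$ for $\calT_h$. If a common $\lambda$ were used, the $(b-W)$ terms would cancel exactly. The mismatch contributes $(\lambda_h-\lambda_l)(1-p)(1+\frac1\ell)(W(\calT_h)-b)$ in the wrong direction. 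This is bounded by
\[
  \tau(1+\tfrac1\ell)\,\ell f(\OPT)\le \tau(\ell+1)nf(S_{\max}),
\]
which by the choice of $\tau$ is at most $\varepsilon f(S_{\max})/4$. The $\lambda\varepsilon f(S_{\max})$ terms are nonnegative and are dropped. The right-hand side is then at most $(1/\ell+\varepsilon/2+\varepsilon/4)f(\OPT)\le2\varepsilon f(\OPT)$.

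The main obstacle is the bookkeeping in this mixing step. One must write the endpoint mismatch using the sign of $W(\calT_h)-b\ge0$, and absorb the possibly negative $\Phi$ and $W(\calT_l)$ terms by the bounds $|W|\le\ell f(\OPT)$. Note that $W\ge -\ell g(\varnothing)$ may also be needed, or it can be avoided by attaching the mismatch only to the $\calT_h$ side. Termination and polynomiality of the binary search follow from Lemma~\ref{lem:extended-local-search} and $\log(\lambda_{\max}/\tau)$ iterations.
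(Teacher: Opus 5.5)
Your proposal is correct and follows essentially the paper's route. You apply Lemma~\ref{lem:weighted-exchange} to each local-search output, use the slack $(1-\frac1\ell)g(\OPT)-g(\varnothing)-(1+\frac1\ell)b\ge\varepsilon f(S_{\max})$ built into $b$ to exclude the fallback at $\lambda_{\max}$, and mix the two bisection endpoints so that the $W$-terms average to $b$, paying only for the gap $\lambda_h-\lambda_l\le\tau$. The one difference is that you charge this gap one-sidedly through the sign of $W(\calT_h)-b$, whereas the paper bounds the whole bracket by $(\ell+4)f(\OPT)$; both work.

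Two small slips do not affect the conclusion:
\begin{itemize}
\item The growth slack is $\frac{2V-g(\varnothing)+\ell\varepsilon f(S_{\max})}{\ell+1}\le(\frac3\ell+\varepsilon)f(\OPT)$, not $(\frac2\ell+\varepsilon)f(\OPT)$. This still gives the required $\frac4\ell$ bound.
\item The bound $|W|\le\ell f(\OPT)$ alone only gives $(1-p)(W(\calT_h)-b)\le2\ell f(\OPT)$. Your claimed $\ell f(\OPT)$ follows from $(1-p)(W(\calT_h)-b)=\frac{(b-W(\calT_l))(W(\calT_h)-b)}{W(\calT_h)-W(\calT_l)}\le\frac{W(\calT_h)-W(\calT_l)}{4}$. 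Even the cruder factor keeps the mixed stationarity bound within $2\varepsilon f(\OPT)$.
\end{itemize}
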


\begin{proof}
Every local-search output satisfies \eqref{dir:weighted-exchange-bound}.
For any such output with underlying base $T$, its right-hand side is bounded
uniformly by
\begin{equation}
\begin{aligned}
    \left(1+\frac1\ell\right)\frac{\varepsilon f(S_{\max})}{4}
      +\frac{f(T)}{\ell}
    &\le\left(1+\frac1\ell\right)
       \frac{\varepsilon f(S_{\max})}{4}
       +\frac{f(\OPT)}{\ell}\\
    &\le\frac{11}{40}\varepsilon f(\OPT)
       +\varepsilon f(\OPT)
     =\frac{51}{40}\varepsilon f(\OPT).
\end{aligned}
\label{dir:local-search-error}
\end{equation}
Here we used $f(S_{\max})\le f(\OPT)$, $\ell\ge10$, and
$1/\ell\le\varepsilon$.  The hypothesis $g(\OPT)\ge V$ and the
definition of $b$ also give
\begin{equation}
\begin{aligned}
    &\left(1-\frac1\ell\right)g(\OPT)-g(\varnothing)
       -\left(1+\frac1\ell\right)b=\left(1-\frac1\ell\right)(g(\OPT)-V)
       +\varepsilon f(S_{\max})
     \ge\varepsilon f(S_{\max}).
\end{aligned}
\label{dir:threshold-slack}
\end{equation}

If $W(\calT_l)\ge b$ at $\lambda_l=0$, the oracle returns
$D=\{(\calT_l,1)\}$.  Equation~\eqref{dir:local-search-error} then gives
\[
\begin{aligned}
    &\E_D\bigl[f(T\cup\OPT)+f(T\cap\OPT)-2f(T)\bigr]\\
    &\qquad=f(q(\calT_l)\cup\OPT)+f(q(\calT_l)\cap\OPT)
      -2f(q(\calT_l))
      \le\frac{51}{40}\varepsilon f(\OPT)
      <2\varepsilon f(\OPT),
\end{aligned}
\]
where the weighted exchange inequality is used at $\lambda_l=0$.
This proves \eqref{dir:averaged-stationarity} for the single-output case.
Otherwise $W(\calT_l)<b$.  Suppose that the output $\calT_h$ at
$\lambda_{\max}$ also had $W(\calT_h)<b$.  Since the coefficient of
$W$ in \eqref{dir:weighted-exchange-bound} is negative,
\eqref{dir:threshold-slack} implies
\[
\begin{aligned}
    &\left(1-\frac1\ell\right)g(\OPT)-g(\varnothing)
       -\left(1+\frac1\ell\right)W(\calT_h)>\left(1-\frac1\ell\right)g(\OPT)-g(\varnothing)
       -\left(1+\frac1\ell\right)b
     \ge\varepsilon f(S_{\max}).
\end{aligned}
\]
Non-negativity of $f$ gives
$f(q(\calT_h)\cup\OPT)+f(q(\calT_h)\cap\OPT)
-2f(q(\calT_h))\ge-2f(\OPT)$.
Consequently, the left-hand side of
\eqref{dir:weighted-exchange-bound} for $\calT_h$ is greater than
\[
\begin{aligned}
    -2f(\OPT)+\lambda_{\max}\varepsilon f(S_{\max})
    &=-2f(\OPT)+4n f(S_{\max})\\
    &\ge2n f(S_{\max})
     >\frac{51}{40}\varepsilon f(\OPT),
\end{aligned}
\]
where $f(\OPT)\le n f(S_{\max})$ and $0<\varepsilon\le1/10$.
This contradicts \eqref{dir:local-search-error}.  Thus
$W(\calT_h)\ge b$, so the all-dummy fallback is not taken.

At each bisection step, the midpoint output replaces the endpoint on the
same side of $b$.  Hence \eqref{dir:bracketing-invariant} persists, and
after $k$ steps the interval has length $\lambda_{\max}/2^k$; choosing
$k\ge\lceil\log_2(\lambda_{\max}/\tau)\rceil$ ensures
$\lambda_h-\lambda_l\le\tau$.  In particular,
$W(\calT_l)<b\le W(\calT_h)$, so the algorithm's weight
satisfies
\begin{equation}
\begin{aligned}
    p
    &=\frac{W(\calT_h)-b}
            {W(\calT_h)-W(\calT_l)}\in[0,1],\\
    pW(\calT_l)+(1-p)W(\calT_h)&=b.
\end{aligned}
    \label{dir:exact-mixing}
\end{equation}

We next put the two exchange inequalities at the common multiplier
$\lambda_l$.  From \eqref{dir:g-bounds}, each block satisfies
$-g(\varnothing)\le g(T_j)-g(\varnothing)\le f(\OPT)$ and
$g(\varnothing)\le f(\OPT)$.  Since $\OPT$ is feasible, the same
assumption gives $g(\OPT)\le g(\varnothing)+f(\OPT)\le2f(\OPT)$.
Therefore
\[
    |W(\calT)|\le\ell f(\OPT)
\]
and, by the triangle inequality,
\begin{equation}
\begin{aligned}
    &\left|\left(1-\frac1\ell\right)g(\OPT)-g(\varnothing)
       -\left(1+\frac1\ell\right)W(\calT)\right|\\
    &\quad\le\left(1-\frac1\ell\right)2f(\OPT)+f(\OPT)
       +\left(1+\frac1\ell\right)\ell f(\OPT)\\
    &\quad\le(\ell+4)f(\OPT)
      \le(\ell+4)n f(S_{\max}).
\end{aligned}
\label{dir:multiplier-coefficient-bound}
\end{equation}
The low-endpoint weighted exchange inequality has multiplier $\lambda_l$.
For the high endpoint, subtracting the difference
$(\lambda_h-\lambda_l)$ times the bracketed coefficient and applying
\eqref{dir:multiplier-coefficient-bound} gives
\begin{equation}
\begin{aligned}
    &f(q(\calT_h)\cup\OPT)+f(q(\calT_h)\cap\OPT)
      -2f(q(\calT_h))\\
    &\quad+\lambda_l\left[
       \left(1-\frac1\ell\right)g(\OPT)-g(\varnothing)
       -\left(1+\frac1\ell\right)W(\calT_h)\right]\\
    &\le\frac{51}{40}\varepsilon f(\OPT)
       +(\lambda_h-\lambda_l)(\ell+4)n f(S_{\max})\\
    &\le\frac{51}{40}\varepsilon f(\OPT)
       +\tau(\ell+4)n f(S_{\max})\\
    &=\frac{51}{40}\varepsilon f(\OPT)
       +\frac{\varepsilon f(S_{\max})}{4},
\end{aligned}
\label{dir:common-multiplier}
\end{equation}
Multiply the low- and high-endpoint inequalities by $p$ and $1-p$ and add.
By \eqref{dir:exact-mixing}, the weighted average of their bracketed
coefficients replaces $W(\calT_l)$ and $W(\calT_h)$ by $b$.
Thus
\begin{equation}
\begin{aligned}
    &\E_D\bigl[f(T\cup\OPT)+f(T\cap\OPT)-2f(T)\bigr]\\
    &\quad+\lambda_l\left[
      \left(1-\frac1\ell\right)g(\OPT)-g(\varnothing)
      -\left(1+\frac1\ell\right)b\right]\\
    &\le\frac{51}{40}\varepsilon f(\OPT)
       +(1-p)\frac{\varepsilon f(S_{\max})}{4}\\
    &\le\frac{61}{40}\varepsilon f(\OPT)
     <2\varepsilon f(\OPT).
\end{aligned}
\label{dir:mixed-stationarity}
\end{equation}
By \eqref{dir:threshold-slack} and $\lambda_l\ge0$, the bracketed term
multiplied by $\lambda_l$ is non-negative.  Dropping it
proves \eqref{dir:averaged-stationarity} for the two-endpoint return.

Finally, the first return has $W(\calT_l)\ge b$, while the mixed return
has $\E_D W(\calT)=b$ by \eqref{dir:exact-mixing}.  Thus in every
non-fallback case,
\begin{equation}
\begin{aligned}
    \E_D W(\calT)
    &\ge b\\
    &=V-g(\varnothing)
      -\frac{2V-g(\varnothing)+\ell\varepsilon f(S_{\max})}{\ell+1}\\
    &\ge V-g(\varnothing)
      -\left(\frac4\ell+\varepsilon\right)f(\OPT).
\end{aligned}
\label{dir:growth-calculation}
\end{equation}
For the last inequality, $V\le g(\OPT)\le2f(\OPT)$,
$g(\varnothing)\ge0$, and $f(S_{\max})\le f(\OPT)$ imply
\[
    \frac{2V-g(\varnothing)+\ell\varepsilon f(S_{\max})}{\ell+1}
    \le\frac{4f(\OPT)+\ell\varepsilon f(\OPT)}{\ell+1}
    \le\left(\frac4\ell+\varepsilon\right)f(\OPT).
\]
Only the upper bound on $V$ is used, so this argument also covers
$V<0$.  Equation~\eqref{dir:growth-calculation} proves
\eqref{dir:averaged-growth}, as well as the stated bound by $b$.
\end{proof}

The $\varepsilon$-stationarity guarantee concerns the distribution of
underlying bases induced by $D$ and need not hold for the underlying base of
either support partitioned base separately.  Accordingly, a recursive call
retains both underlying bases and their weights.  For the initial guide the
auxiliary function is absent, so we specialize the terminal local-optimality
conditions to $g\equiv0$, match the optimum to the returned base, and sum the
matched exchange inequalities.

\begin{corollary}
\label{cor:initial-guide}
Let $\calT_0$ be returned by
$\operatorname{ExtendedLocalSearch}(f,0,0)$ and let $Z_0=q(\calT_0)$.  Then
\[
    f(Z_0\cup\OPT)+f(Z_0\cap\OPT)-2f(Z_0)
    \le\frac{\varepsilon f(S_{\max})}{4}.
\]
Thus the main procedure may take
$\operatorname{InitialGuideDistribution}(f,\calM)=\{(Z_0,1)\}$.
\end{corollary}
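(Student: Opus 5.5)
The plan is to specialize the argument of Lemma~\ref{lem:weighted-exchange} to $g\equiv0$ and $\lambda=0$. One could substitute directly into \eqref{dir:weighted-exchange-bound}, but that gives the weaker right-hand side $(1+1/\ell)\varepsilon f(S_{\max})/4+f(T)/\ell$. The extra terms arise only from the $1/\ell$ correction for common elements, which carries a factor $\lambda$. So I will redo the summation directly, without averaging over destination blocks.

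First, I will note that with $g\equiv0$ the auxiliary bounds \eqref{dir:g-bounds} hold trivially. Lemma~\ref{lem:extended-local-search} then says $\calT_0$ is $\eta$-locally optimal for $H_0=f\circ q$. Reassignment gains are identically zero, and the exchange gain reduces to
\[
  \Gamma(u,v,j)=f(u\mid Z_0)-f(v\mid Z_0-v),
\]
independent of $j$. Termination therefore gives $f(u\mid Z_0)-f(v\mid Z_0-v)\le\eta$ for every feasible exchange, with $v\in Z_0$ and $u\notin Z_0$.

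Next, I will apply Lemma~\ref{lem:base-exchange-matching} to obtain a bijection $h:\OPT\to Z_0$ with $h(u)=u$ on $\OPT\cap Z_0$. It maps $\OPT\setminus Z_0$ bijectively onto $Z_0\setminus\OPT$, and each $Z_0-h(u)+u$ is a base. Summing the termination inequalities over $u\in\OPT\setminus Z_0$ gives
\[
  \sum_{u\in\OPT\setminus Z_0}f(u\mid Z_0)-\sum_{v\in Z_0\setminus\OPT}f(v\mid Z_0-v)\le|\OPT\setminus Z_0|\,\eta\le r\eta=\frac{\varepsilon f(S_{\max})}{4}.
\]

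Finally, I will bound the left-hand side from below exactly as in \eqref{dir:f-contribution}. Successive insertion gives that the first sum is at least $f(Z_0\cup\OPT)-f(Z_0)$. Successive deletion gives that the second sum is at most $f(Z_0)-f(Z_0\cap\OPT)$, since each deleted element's context is a subset of $Z_0-v$. Combining these yields the claimed inequality.

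The only point requiring care is that no common-element correction appears. Common elements are fixed by $h$ and contribute nothing, because with $\lambda=0$ there is no $g$-term $d_u$ to control, unlike in the proof of Lemma~\ref{lem:weighted-exchange}. Since $f(S_{\max})\le f(\OPT)$ and $\varepsilon/4\le2\varepsilon$, the single-base distribution $\{(Z_0,1)\}$ satisfies \eqref{main:guide-stationarity}, which justifies the stated choice of initial guide.
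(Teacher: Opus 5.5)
Your proposal is correct and is essentially the paper's proof: with $g\equiv0$ and $\lambda=0$ the exchange gain does not depend on the destination block, so you sum the terminal exchange inequalities over a Brualdi matching between $\OPT$ and $Z_0$ and bound the sum from below by the insertion/deletion argument of \eqref{dir:f-contribution}, against the right-hand side $r\eta=\varepsilon f(S_{\max})/4$. Summing only over $\OPT\setminus Z_0$ just makes explicit what the paper leaves implicit, namely that the fixed common pairs contribute nothing when $\lambda=0$.
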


\begin{proof}
When $g\equiv0$ and $\lambda=0$, the tested exchange gain is independent of
the destination block.  Apply Lemma~\ref{lem:base-exchange-matching} to
$\OPT$ and $Z_0$, and sum the terminal exchange inequalities for the
$r$ matched pairs.  The insertion-and-deletion argument used in
\eqref{dir:f-contribution} shows that their left-hand side is at least
$f(Z_0\cup\OPT)+f(Z_0\cap\OPT)-2f(Z_0)$, whereas their right-hand side is
$r\eta=\varepsilon f(S_{\max})/4$.
\end{proof}

\subsection{Deterministic guided greedy}
\label{greedy:section}

The procedure maintains a weighted distribution of EME states, forms an
auxiliary growth function from the whole distribution, and calls the
direction oracle in every round.  It rounds all final states into feasible
candidates while retaining every induced guide distribution for possible
recursive calls.

\paragraph{State representation.}\label{greedy:state-representation}
The guided greedy procedure maintains an explicit distribution over partial
growth states.  At the beginning of round $i$, the algorithm maintains a weighted
family of \emph{states}
\[
    \{(w_\omega,Z_\omega,\mathcal L_i^\omega):\omega\in\Omega_i\},
    \qquad w_\omega>0,\quad\sum_{\omega\in\Omega_i}w_\omega=1.
\]
Here $Z_\omega$ is a fixed guide base and $\mathcal L_i^\omega$ is the
list representation of the state's current EME vector
$y_i^\omega$, as defined in
\eqref{eq:eme-list-to-vector}.  The state index also records which
partitioned-base outcome was selected in each preceding round; different
histories remain distinct even if they produce the same guide base and EME
vector.  Initially, each positive-weight base in the input guide distribution
gives one state with its guide weight and an empty list.

A state does not fix the outcomes of its block activations: the listed blocks
are activated independently within that state.  Thus the guide and the EME
vector are coupled through the state index, and the current expected value is
$\sum_{\omega\in\Omega_i}w_\omega F(y_i^\omega)$.  We refer to this weighted family as the \emph{state
distribution}; the algorithm stores all its members.

\paragraph{Guesses for the growth thresholds.}
The thresholds depend on three quantities unavailable to the algorithm:
\[
    f(\OPT),\qquad
    \E_{Z\sim\calZ}f(\OPT\cap Z),\qquad
    \E_{Z\sim\calZ}f(\OPT\cup Z).
\]
We denote their respective guesses by
$\widehat v$, $\widehat v_\cap$, and $\widehat v_\cup$.
The algorithm enumerates
$\widehat v=f(S_{\max})(1+\varepsilon)^j$ for $j$ from zero through the
first index for which this value exceeds $nf(S_{\max})$.  For each
$\widehat v$, it independently enumerates
\begin{equation}
\begin{aligned}
    \widehat v_\cap&=k_\cap\varepsilon\widehat v,
    &k_\cap&\in\left\{0,\ldots,\left\lceil\frac2\varepsilon\right\rceil\right\},\\
    \widehat v_\cup&=k_\cup\varepsilon\widehat v,
    &k_\cup&\in\left\{0,\ldots,\left\lceil\frac3\varepsilon\right\rceil\right\}.
\end{aligned}
\label{greedy:guess-grids}
\end{equation}
These grids do not require knowledge of $\OPT$ and contain a triple satisfying
\begin{equation}
\begin{aligned}
    &(1-\varepsilon)f(\OPT)\le\widehat v\le f(\OPT),\\
    &0\le\E_{Z\sim\calZ}f(\OPT\cap Z)-\widehat v_\cap
      <\varepsilon\widehat v,\\
    &0\le\E_{Z\sim\calZ}f(\OPT\cup Z)-\widehat v_\cup
      <\varepsilon\widehat v.
\end{aligned}
\label{greedy:correct-guesses}
\end{equation}
Indeed, $f(S_{\max})\le f(\OPT)\le nf(S_{\max})$; feasibility of
$\OPT\cap Z$ bounds its value by $f(\OPT)$, while submodularity and
non-negativity give $f(\OPT\cup Z)\le f(\OPT)+f(Z)\le2f(\OPT)$.
We call a run using a triple satisfying \eqref{greedy:correct-guesses}
\emph{correctly guessed}.  This designation is used only in the analysis:
the algorithm executes every enumerated run.

\subsubsection{State update}
\label{greedy:state-update-section}
Each round disables the guide after the phase switch, aggregates all
statewise marginal opportunities into $g_i$, compares them with a guessed
trajectory threshold $V_i$, and uses the direction oracle to append one
block family to every state.  The threshold is a deliberately slackened
approximation of the exact-value threshold $V_i^\star$ defined later.

Fix a guess triple and a round $i$.  The guide is used only before the phase
switch; write
\begin{equation}
    Z_{\omega,i}=
    \begin{cases}
        Z_\omega,&i<i_s,\\
        \varnothing,&i\ge i_s.
    \end{cases}
    \label{greedy:active-guide}
\end{equation}
The current state distribution defines the auxiliary function
\begin{equation}
    g_i(S)=\sum_{\omega\in\Omega_i}w_\omega
        F\bigl(y_i^\omega\vee e_{S\setminus Z_{\omega,i}}\bigr),
    \qquad S\subseteq\mathcal N.
    \label{greedy:auxiliary}
\end{equation}
Each summand can be evaluated from the list representation by appending
$(S\setminus Z_{\omega,i},1)$ to $\mathcal L_i^\omega$.
In particular,
$g_i(\varnothing)=\sum_{\omega\in\Omega_i}w_\omega F(y_i^\omega)$.
In round $i$, the direction oracle is called with $g=g_i$; accordingly,
$W$ denotes the aggregate block gain in \eqref{dir:block-gain} formed from
$g_i$.
The threshold for the fixed guess triple is
\begin{equation}
V_i=
\begin{cases}
\begin{aligned}
    &\widehat v-\widehat v_\cap
     -\bigl(1-(1-\delta)^i\bigr)\widehat v_\cup
     -4\varepsilon\widehat v,
\end{aligned}&i<i_s,\\[3pt]
\begin{aligned}
    &(1-\delta)^{i-i_s}\widehat v\\
    &\quad-\bigl[(1-\delta)^{i-i_s}
        -(1-\delta)^i(1+\delta(i-i_s))\bigr]\widehat v_\cup
        -6\varepsilon\widehat v,
\end{aligned}&i\ge i_s.
\end{cases}
\label{greedy:threshold}
\end{equation}

Let $D_i=\operatorname{DeterministicDirectionOracle}(f,g_i,V_i)$.
For every old state $\omega\in\Omega_i$ and every partitioned base
$\calT=(T_1,\ldots,T_\ell)\in\supp(D_i)$, create a successor indexed by
$(\omega,\calT)$ with
\begin{equation}
\begin{aligned}
    w_{(\omega,\calT)}&=w_\omega D_i(\calT),\\
    Z_{(\omega,\calT)}&=Z_\omega,\\
    \mathcal L_{i+1}^{(\omega,\calT)}
      &=\mathcal L_i^\omega\mathbin{\|}
        \bigl((T_1\setminus Z_{\omega,i},\delta),\ldots,
              (T_\ell\setminus Z_{\omega,i},\delta)\bigr).
\end{aligned}
\label{greedy:state-update}
\end{equation}
Here $\mathbin{\|}$ denotes list concatenation.  The choice of $\calT$
creates a separate state, whereas activation of its listed blocks remains
within the EME vector.  The same $D_i$, computed from the entire current
state distribution, is used for every old state.  Replacing each outcome
$\calT$ of $D_i$ by its underlying base $q(\calT)$ and merging equal bases
gives the guide distribution retained for a possible recursive call.

\begin{algorithm}[tb]
\DontPrintSemicolon
\KwIn{A non-negative submodular function $f$, a matroid $\calM$, and a guide
distribution $\calZ$.}
\KwOut{A family $\mathcal C$ of feasible candidates and a family $\mathcal G$
of guide distributions.}

$\mathcal C\gets\varnothing$ and $\mathcal G\gets\varnothing$\;
\ForEach{guess triple $(\widehat v,\widehat v_\cap,\widehat v_\cup)$
specified above}{
    $\Omega_0\gets\varnothing$\;
    \ForEach{$(Z,\theta)\in\calZ$ with $\theta>0$}{
        Add a state $\omega$ to $\Omega_0$ with
        $w_\omega=\theta$, $Z_\omega=Z$, and
        $\mathcal L_0^\omega=()$\;
    }
    \For{$i=0,\ldots,1/\delta-1$}{
        Define $Z_{\omega,i}$, $g_i$, and $V_i$ by
        \eqref{greedy:active-guide}--\eqref{greedy:threshold}\;
        $D_i\gets\operatorname{DeterministicDirectionOracle}(f,g_i,V_i)$\;
        Add to $\mathcal G$ the guide distribution induced by $D_i$
        through $q$, merging equal bases\;
        $\Omega_{i+1}\gets\varnothing$\;
        \ForEach{$\omega\in\Omega_i$ and $\calT\in\supp(D_i)$}{
            Add the successor $(\omega,\calT)$ to $\Omega_{i+1}$
            using \eqref{greedy:state-update}\;
        }
    }
    \ForEach{$\omega\in\Omega_{1/\delta}$}{
        Apply Theorem~\ref{thm:eme-rounding} to $y_{1/\delta}^\omega$
        and add the returned independent set to $\mathcal C$\;
    }
}
\Return{$(\mathcal C,\mathcal G)$}\;
\caption{$\operatorname{DeterministicGuidedGreedy}(f,\calM,\calZ)$}
\label{alg:guided-greedy}
\end{algorithm}

The recursive role of guided greedy is captured by the following two-outcome
guarantee.  A run either produces a feasible set with the required growth
bound or returns a stationary guide whose expected union value with the
optimum has decreased by a fixed amount.

\begin{theorem}
\label{thm:guided-greedy}
Let $\calZ$ be a guide distribution supported on at most two bases, and let
$(\mathcal C,\mathcal G)$ be the output of
Algorithm~\ref{alg:guided-greedy}.  At least one of the following holds:
\begin{enumerate}[label=(\roman*)]
\item There is a feasible candidate $S\in\mathcal C$ such that
\begin{equation}
\begin{aligned}
f(S)\ge{}&
  \bigl[(2-t_s)e^{t_s-1}-e^{-1}\bigr]f(\OPT)\\
&-\bigl[e^{t_s-1}-e^{-1}\bigr]
  \E_{Z\sim\calZ}f(\OPT\cap Z)\\
&-\left[(2-t_s)e^{t_s-1}
  -\left(2+\frac{(1-t_s)^2}{2}\right)e^{-1}\right]
  \E_{Z\sim\calZ}f(\OPT\cup Z)\\
&-15\varepsilon f(\OPT).
\end{aligned}
\label{greedy:successful-bound}
\end{equation}
\item There is a guide distribution $\calZ'\in\mathcal G$ that is
$\varepsilon$-stationary and satisfies
\begin{equation}
    \E_{Z'\sim\calZ'}f(\OPT\cup Z')
    <\E_{Z\sim\calZ}f(\OPT\cup Z)-\varepsilon f(\OPT).
    \label{greedy:improved-guide}
\end{equation}
\end{enumerate}
\end{theorem}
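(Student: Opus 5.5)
We work throughout with a correctly guessed run, which exists by \eqref{greedy:correct-guesses}. Suppose (ii) fails; we derive (i) for that run. Every oracle output $D_i$, pushed through $q$, gives a guide distribution $\calZ_i'\in\mathcal G$.

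\emph{Step 1: every round meets its threshold, so the oracle guarantees apply.} We prove by induction on $i$ that $g_i(\OPT)\ge V_i$. Condition \eqref{dir:g-bounds} holds for $g_i$: it is a non-negative submodular function of $S$, because each summand is $F$ of a union with a fixed independent random set. Also $g_i(\varnothing)=\sum_\omega w_\omega F(y_i^\omega)\le f(\OPT)$ by Theorem~\ref{thm:eme-rounding}, provided $\Mar(y_i^\omega)\in P(\calM)$. For the second bound, $g_i(S)\le g_i(\varnothing)+\E f(S)\le g_i(\varnothing)+f(\OPT)$ by submodularity. Feasibility of $\Mar(y_i^\omega)$ follows from $\Mar(y_i^\omega)\le\sum_{j<i}\delta\,\1_{q(\calT_j)}$, which is a sum of $i\delta\le1$ scaled bases.

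Once $g_i(\OPT)\ge V_i$ is known, Theorem~\ref{thm:direction-oracle} applies. It makes $\calZ_i'$ $\varepsilon$-stationary. Because (ii) fails, we then get
\[
\E_{Z'\sim\calZ_i'}f(\OPT\cup Z')\ge\E_{Z\sim\calZ}f(\OPT\cup Z)-\varepsilon f(\OPT)\ge\widehat v_\cup-\varepsilon f(\OPT).
\]
It also yields the growth bound $\E_{D_i}W\ge V_i-g_i(\varnothing)-(4/\ell+\varepsilon)f(\OPT)$.

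\emph{Step 2: lower bound on $g_i(\OPT)$.} This mirrors the randomized analysis of \cite{BuchbinderF2024DR}. Write $g_i(\OPT)=\sum_\omega w_\omega\E f(R(y_i^\omega)\cup(\OPT\setminus Z_{\omega,i}))$.

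Before $i_s$, apply submodularity to the sets $\OPT\setminus Z$ and $(\OPT\cap Z)$. This gives $f(R\cup(\OPT\setminus Z))\ge f(R\cup\OPT)-f(\OPT\cap Z)+\dots$; in the idealized form, $f(\OPT\cup R)-f(\OPT\cap Z)$. Next, every element outside $Z$ appears in $R(y_i)$ with probability at most $1-(1-\delta)^i$, since each round adds blocks disjoint from $Z_{\omega,i}$ with probability $\delta$. Lemma~\ref{lem:lovasz-marginal-lower-bound}, applied to $h(S)=f(\OPT\cup Z\cup S)$-type functions, then lower-bounds $\E f(\OPT\cup R)$ by $f(\OPT)-(1-(1-\delta)^i)\E f(\OPT\cup Z)$. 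The $4\varepsilon\widehat v$ slack absorbs the guess errors. This matches $V_i$ for $i<i_s$.

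After $i_s$, use the same marginal bound, in which elements of $Z$ stop being excluded. Also use the stationarity of each later $\calZ_j'$ together with $\E f(\OPT\cup q(\calT_j))\ge\widehat v_\cup-\varepsilon f(\OPT)$. Together these produce the refined term $(1-\delta)^i(1+\delta(i-i_s))\widehat v_\cup$ and the slack $6\varepsilon\widehat v$.

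\emph{Step 3: growth recursion.} Appending blocks with probability $\delta$ changes the expected value as follows. Write $F_i=\sum_\omega w_\omega F(y_i^\omega)$. For each block, $F(y\oplus\delta e_{T_b})\ge F(y)+\delta[F(y\vee e_{T_b})-F(y)]-O(\ell^2\delta^2)f(\OPT)$, where the error comes from two blocks activating at once. Summing over blocks and over $D_i$, and using that the same $D_i$ serves all states, gives
\[
F_{i+1}\ge F_i+\delta\,\E_{D_i}W-\ell^2\delta^2 f(\OPT)\ge F_i+\delta(V_i-F_i)-O(\varepsilon\delta)f(\OPT).
\]
Solving this recursion with the explicit $V_i$ over $1/\delta$ rounds, and using $2\ell^2\delta\le\varepsilon$, $(1-\delta)^{1/\delta}\approx e^{-1}$, and $i_s\delta\approx t_s$, yields exactly the coefficients in \eqref{greedy:successful-bound}, up to $15\varepsilon f(\OPT)$. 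Finally, $\max_\omega F(y_{1/\delta}^\omega)\ge F_{1/\delta}$, and rounding by Theorem~\ref{thm:eme-rounding} produces $S\in\mathcal C$.

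\emph{Main obstacle.} The delicate step is the post-switch part of Step 2. There the bound on $g_i(\OPT)$ must use the non-improvement of the earlier oracle guides $\calZ_j'$, and it must do so through an averaged, not per-base, stationarity inequality. The intended route is to adapt the corresponding lemma of \cite{BuchbinderF2024DR} term by term. That lemma bounds $\E f(\OPT\cup R)$ using the probability that an element was added only via directions taken after the switch. The accumulated $\varepsilon$-losses then have to be tracked so that they stay within the constant $15$.
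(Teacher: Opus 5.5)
Your overall structure matches the paper's. Restricting to a correctly guessed run and inducting on $i$ under the assumption that (ii) fails is the contrapositive of the first-failure argument in Lemma~\ref{lem:greedy-failure}. Step~1 is Lemma~\ref{lem:greedy-feasibility} plus Theorem~\ref{thm:direction-oracle}, and Step~3 is Lemmas~\ref{lem:greedy-one-round} and~\ref{lem:greedy-completed-run}. Your pre-switch bound is sound.

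The genuine gap is the post-switch lower bound on $g_i(\OPT)$. You flag it as the main obstacle but do not prove it, and the mechanism you sketch cannot work. Write $U=\E_{Z\sim\calZ}f(\OPT\cup Z)$. Your marginal argument applies Lemma~\ref{lem:lovasz-marginal-lower-bound} to $S\mapsto f(\OPT\cup S)$. Elements of $Z_\omega$ are now allowed, but they enter only through post-switch blocks, so their marginals are at most $1-(1-\delta)^{i-i_s}$. This yields only $(1-\delta)^{i-i_s}f(\OPT)-\bigl((1-\delta)^{i-i_s}-(1-\delta)^i\bigr)U$. The reason is that level sets $B_t$ with $t<1-(1-\delta)^{i-i_s}$ may meet $Z_\omega$ and mix elements from many rounds, so only $f(\OPT\cup B_t)\ge0$ is available there. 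This falls short of $V_i^\star$ by $(1-\delta)^i\delta(i-i_s)U$. That shortfall is of order $U$ near the end of the run and is not absorbed by the $6\varepsilon\widehat v$ slack. The non-improvement information $\E_{\calT\sim D_j}f(\OPT\cup q(\calT))\ge U-\varepsilon f(\OPT)$ concerns the whole base chosen in round $j$. It cannot be fed into a bound that sees only elementwise inclusion probabilities, such as ``the probability that an element was added only after the switch.''

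The missing idea is to use the round structure of the lists, not just marginals. In round $j$ the appended sets $T_b^{\omega,j}\setminus Z_{\omega,j}$ are pairwise disjoint and independently activated with probability $\delta$. So for any fixed set $A$, Lemma~\ref{lem:lovasz-marginal-lower-bound} shows their activation is worth at least $\delta f\bigl(A\cup(q(\calT^{\omega,j})\setminus Z_{\omega,j})\bigr)+(1-\delta)f(A)$ in expectation. In effect, each round may be treated as activated all-or-none. Iterating over rounds gives
\[
F(y_i^\omega\vee e_{\OPT})\ge\sum_{J\subseteq\{0,\ldots,i-1\}}\delta^{|J|}(1-\delta)^{i-|J|}f\bigl(\OPT\cup\textstyle\bigcup_{j\in J}(q(\calT^{\omega,j})\setminus Z_{\omega,j})\bigr).
\]
From this sum, keep three kinds of terms and drop the rest by non-negativity:
\begin{itemize}
\item $J=\varnothing$, contributing $(1-\delta)^i f(\OPT)$;
\item nonempty $J\subseteq\{0,\ldots,i_s-1\}$, whose union avoids $Z_\omega$, so each such term is at least $f(\OPT)-f(\OPT\cup Z_\omega)$;
\item singletons $J=\{j\}$ with $j\ge i_s$, contributing $\delta(1-\delta)^{i-1}f(\OPT\cup q(\calT^{\omega,j}))$.
\end{itemize}
Because the same $D_j$ is applied to every old state, the history weights give $\sum_\omega w_\omega f(\OPT\cup q(\calT^{\omega,j}))=\E_{\calT\sim D_j}f(\OPT\cup q(\calT))$. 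Only at this point does non-improvement of the guide induced by $D_j$ enter. The resulting coefficient $\delta(1-\delta)^{i-1}(i-i_s)\ge\delta(1-\delta)^i(i-i_s)$ on $U$ is exactly what produces the $(1-\delta)^i(1+\delta(i-i_s))$ term in $V_i$.

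Stationarity plays no role inside this bound. It is needed only so that the failure of (ii) implies non-improvement for each $D_j$ with $g_j(\OPT)\ge V_j$. What must be averaged over histories is the union value, not a stationarity inequality.
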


We first establish feasibility and the gain obtained in one round.  The
trajectory estimates needed to distinguish the two outcomes of
Theorem~\ref{thm:guided-greedy} follow afterward.  For feasibility, we bound
each state's marginal vector by a convex combination of base indicators and
apply EME rounding; the same representation also yields the structural and
numerical bounds on the auxiliary function.

\begin{lemma}
\label{lem:greedy-feasibility}
For every state $\omega\in\Omega_i$,
$\operatorname{Mar}(y_i^\omega)\in P(\calM)$ and
$F(y_i^\omega)\le f(\OPT)$.  Moreover, $g_i$ is non-negative and
submodular, satisfies
\[
    g_i(S)=g_i(S\cap\mathcal N_0)
    \qquad\text{for every }S\subseteq\mathcal N,
\]
and obeys \eqref{dir:g-bounds}.
\end{lemma}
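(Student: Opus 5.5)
The plan is an induction on the round index $i$ that controls the marginal vectors. After that, the properties of $g_i$ follow from standard closure properties of the EME.

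\emph{Feasibility.} We will prove, by induction on $i$, the stronger statement that for every state $\omega\in\Omega_i$,
\[
  \operatorname{Mar}(y_i^\omega)\le i\delta\cdot x_i^\omega
\]
coordinatewise, where $x_i^\omega$ is a convex combination of base indicators. We will take $x_i^\omega$ to be the average of $\1_{q(\calT^{(k)})}$ over the partitioned bases $\calT^{(0)},\dots,\calT^{(i-1)}$ recorded in the history of $\omega$.

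The base case $i=0$ has the empty list, so $\operatorname{Mar}=0$. For the inductive step, consider a coordinate $u$ that lies in some base $q(\calT)$ appended at a round. Since the blocks $T_1,\dots,T_\ell$ are pairwise disjoint, $u$ lies in at most one appended block of that round. Hence each round contributes at most one factor $(1-\delta)$ to the product defining $\operatorname{Mar}_u$. Using $1-\prod_k(1-\delta a_k)\le\delta\sum_k a_k$ with $a_k\in\{0,1\}$, we get
\[
  \operatorname{Mar}_u(y_i^\omega)\le\delta\,\bigl|\{k<i:u\in q(\calT^{(k)})\}\bigr|.
\]
Removing $Z_{\omega,k}$ from a block only shrinks it, so this bound is unaffected. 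Because $i\le1/\delta$, we have $i\delta\le1$, so $\operatorname{Mar}(y_i^\omega)\le\delta\sum_{k<i}\1_{q(\calT^{(k)})}$. The right-hand side is $i\delta$ times a point of $B(\calM)\subseteq P(\calM)$, and $i\delta\le 1$. Down-closedness of $P(\calM)$ then gives $\operatorname{Mar}(y_i^\omega)\in P(\calM)$. Theorem~\ref{thm:eme-rounding} yields an independent set $S$ with $f(S)\ge F(y_i^\omega)$, hence $F(y_i^\omega)\le f(\OPT)$.

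\emph{Properties of $g_i$.} Non-negativity is immediate from $f\ge0$. Dummy invariance holds because $R(y\vee e_{S\setminus Z})$ differs from $R(y\vee e_{(S\cap\mathcal N_0)\setminus Z})$ only by dummy elements, which $f$ ignores. For submodularity, fix a realization of $R(y_i^\omega)$. In that realization the coordinate $S\setminus Z_{\omega,i}$ is activated with probability one, and the realized set is $R\cup(S\setminus Z_{\omega,i})$. Note that $S\mapsto f(R\cup(S\setminus Z))$ is submodular: $S\mapsto S\setminus Z$ preserves unions and intersections, and $f(R\cup\cdot)$ is submodular. Submodularity is preserved under expectation and under non-negative weighted sums over $\omega$. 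The one subtlety is when $S\setminus Z$ coincides with a set already on the list. Then $y\vee e$ sets that coordinate to one, which agrees with the realization view, so nothing breaks.

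\emph{Bounds \eqref{dir:g-bounds}.} First, $g_i(\varnothing)=\sum_\omega w_\omega F(y_i^\omega)\le f(\OPT)$ by the feasibility part. For $S\in\calI$, submodularity together with the realization view gives
\[
  f(R\cup(S\setminus Z))\le f(R)+f(S\setminus Z)-f(\varnothing)\le f(R)+f(S\setminus Z).
\]
Since $S\setminus Z$ is independent, $f(S\setminus Z)\le f(\OPT)$. Taking expectations and the weighted sum gives $g_i(S)\le g_i(\varnothing)+f(\OPT)$.

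The main obstacle is the marginal-vector induction. It needs care on two points: blocks within a round are disjoint, so each round adds at most $\delta$ per coordinate, and the total number of rounds is $1/\delta$. Both are guaranteed by the construction, since $\delta$ is a reciprocal integer.
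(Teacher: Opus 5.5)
Your proposal follows the paper's proof essentially step for step. The union bound over the pairwise disjoint blocks gives $\operatorname{Mar}(y_i^\omega)\le\delta\sum_{k<i}\1_{q(\calT^{(k)})}$, and down-closedness of $P(\calM)$ plus Theorem~\ref{thm:eme-rounding} then give feasibility and $F(y_i^\omega)\le f(\OPT)$. Writing each summand of $g_i$ as an average of $S\mapsto f(R\cup(S\setminus Z_{\omega,i}))$ over realizations $R$ gives non-negativity, submodularity, dummy invariance and \eqref{dir:g-bounds}. The ``induction'' framing is unnecessary, since the marginal bound is a direct computation, as in the paper.

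One step is stated incorrectly. Put $X=S\setminus Z_{\omega,i}$. The inequality $f(R\cup X)\le f(R)+f(X)-f(\varnothing)$ is false for non-monotone, non-normalized $f$ whenever $R\cap X\neq\varnothing$. This case does arise here, because the realized blocks and $S$ may overlap. For a counterexample, take a single element $x$ with $f(\varnothing)=1$, $f(\{x\})=0$ and $R=X=\{x\}$: the left side is $0$ and the right side is $-1$.

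What submodularity actually gives is $f(R\cup X)+f(R\cap X)\le f(R)+f(X)$. You then discard $f(R\cap X)\ge0$ by non-negativity. This yields exactly the bound $f(R\cup X)\le f(R)+f(X)$ that you use afterwards, and it is how the paper argues (``by submodularity and non-negativity''). With this correction your proof is complete.
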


\begin{proof}
Fix a state after $i$ rounds, and let $\calT^{\omega,j}$ be its selected
partitioned base in round $j<i$.  Within a round the blocks are disjoint,
and the sets appended to the list are subsets of those blocks.  Hence an
element is included by that round with probability at most $\delta$.
The union bound gives, coordinatewise,
\begin{equation}
    \operatorname{Mar}(y_i^\omega)
    \le\delta\sum_{j<i}\mathbf 1_{q(\calT^{\omega,j})}
    \in P(\calM).
    \label{greedy:state-feasibility}
\end{equation}
The vector on the right is a convex combination of base indicators and zero,
since $i\delta\le1$; downward closure of $P(\calM)$ proves the first claim.
Theorem~\ref{thm:eme-rounding} then yields
$F(y_i^\omega)\le f(\OPT)$.

By the probabilistic definition of EME, each summand of
\eqref{greedy:auxiliary} is an expectation of the set function
$S\mapsto f(A\cup(S\setminus Z_{\omega,i}))$ over fixed realized sets $A$.
This function is non-negative and submodular, and its value at $S$ equals its
value at $S\cap\mathcal N_0$.  The weighted sum $g_i$ therefore has the same
properties.  Also
\[
    g_i(\varnothing)=\sum_{\omega\in\Omega_i}w_\omega F(y_i^\omega)
    \le f(\OPT).
\]
If $S$ is feasible, then $S\setminus Z_{\omega,i}$ is feasible and,
by submodularity and non-negativity,
\[
    f\bigl(A\cup(S\setminus Z_{\omega,i})\bigr)
    \le f(A)+f(S\setminus Z_{\omega,i})
    \le f(A)+f(\OPT).
\]
Averaging this inequality proves the remaining bound in
\eqref{dir:g-bounds}.
\end{proof}

We next relate a state update to the aggregate block gain returned by the
direction oracle.  Expanding the new EME value over block-activation patterns
and retaining the events with at most one active block gives the basic
increment; averaging over states and directions then permits the use of the
direction-oracle guarantee.

\begin{lemma}
\label{lem:greedy-one-round}
Every round satisfies
\begin{equation}
\begin{aligned}
    &\sum_{\omega\in\Omega_{i+1}}w_\omega F(y_{i+1}^\omega)\ge
      \sum_{\omega\in\Omega_i}w_\omega F(y_i^\omega)
      +\delta\E_{\calT\sim D_i}W(\calT)
      -2\ell^2\delta^2f(\OPT).
\end{aligned}
\label{greedy:one-round}
\end{equation}
If $g_i(\OPT)\ge V_i$, then the guide distribution induced by $D_i$ is
$\varepsilon$-stationary and
\begin{equation}
\begin{aligned}
    &\sum_{\omega\in\Omega_{i+1}}w_\omega F(y_{i+1}^\omega)\ge
      (1-\delta)\sum_{\omega\in\Omega_i}w_\omega F(y_i^\omega)
      +\delta V_i-6\varepsilon\delta f(\OPT).
\end{aligned}
\label{greedy:successful-round}
\end{equation}
\end{lemma}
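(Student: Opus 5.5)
Fix a round $i$, an old state $\omega\in\Omega_i$ and a partitioned base $\calT=(T_1,\ldots,T_\ell)\in\supp(D_i)$, and write $B_b=T_b\setminus Z_{\omega,i}$ for the appended blocks. The plan is to lower-bound $F(y_{i+1}^{(\omega,\calT)})$ by expanding over the activation patterns of these $\ell$ blocks. By \eqref{greedy:state-update}, $R(y_{i+1}^{(\omega,\calT)})$ has the law of $R(y_i^\omega)\cup\bigcup_{b\in J}B_b$, where $J$ is the random set of activated blocks and each $b$ lies in $J$ independently with probability $\delta$. We split on $|J|$:
\begin{itemize}
\item $J=\varnothing$ has probability $(1-\delta)^\ell$ and gives $F(y_i^\omega)$;
\item $J=\{b\}$ has probability $\delta(1-\delta)^{\ell-1}$ and gives $F(y_i^\omega\vee e_{B_b})$;
\item all patterns with $|J|\ge2$ have total probability at most $\binom{\ell}{2}\delta^2$.
\end{itemize}
On the event $|J|\ge2$ we use only non-negativity of $f$, which gives a value $\ge0$.

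For the other two events we write $(1-\delta)^{\ell-1}\ge1-(\ell-1)\delta$ and collect the terms as
\[
F(y_i^\omega)+\delta\sum_b\bigl[F(y_i^\omega\vee e_{B_b})-F(y_i^\omega)\bigr]-\text{(error)}.
\]
The error consists of $O(\ell^2\delta^2)$ times values of the form $F(\cdot)$. We bound each such value by $2f(\OPT)$, using $F(y_i^\omega\vee e_{B_b})\le F(y_i^\omega)+f(B_b)\le 2f(\OPT)$ and Lemma~\ref{lem:greedy-feasibility}. One point needs care: $F(y_i^\omega)$ appears with coefficient $(1-\delta)^\ell-\ell\delta(1-\delta)^{\ell-1}$, and writing this as $1-\ell\delta-O(\ell^2\delta^2)$ gives exactly the correction needed. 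The error bookkeeping should come to at most $2\ell^2\delta^2 f(\OPT)$ (roughly $\ell^2\delta^2$ from the pairs, $\ell^2\delta^2$ from the expansion), although the exact constant has to be checked. Then we average with weights $w_\omega D_i(\calT)$. The key identity is that
\[
\sum_\omega w_\omega\bigl[F(y_i^\omega\vee e_{T_b\setminus Z_{\omega,i}})-F(y_i^\omega)\bigr]=g_i(T_b)-g_i(\varnothing)
\]
by \eqref{greedy:auxiliary}, because $D_i$ does not depend on $\omega$. Summing over $b$ gives $W(\calT)$, and taking the expectation over $D_i$ yields \eqref{greedy:one-round}.

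For the second part we need the hypotheses of Theorem~\ref{thm:direction-oracle}. Lemma~\ref{lem:greedy-feasibility} gives \eqref{dir:g-bounds}, and $g_i(\OPT)\ge V_i$ is assumed. The theorem then yields $\varepsilon$-stationarity via \eqref{dir:averaged-stationarity}. Since the induced guide distribution is the pushforward of $D_i$ under $q$, and merging equal bases does not change expectations, it is $\varepsilon$-stationary in the sense of \eqref{main:guide-stationarity}. For growth, \eqref{dir:averaged-growth} gives
\[
\E_{D_i}W\ge V_i-g_i(\varnothing)-(4/\ell+\varepsilon)f(\OPT).
\]
We substitute this into \eqref{greedy:one-round} and recall $g_i(\varnothing)=\sum_\omega w_\omega F(y_i^\omega)$, which produces the factor $(1-\delta)$. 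The remaining error is $\delta(4/\ell+\varepsilon)f(\OPT)+2\ell^2\delta^2 f(\OPT)$. Using $1/\ell\le\varepsilon$ and $2\ell^2\delta\le\varepsilon$, this is at most $\delta(4\varepsilon+\varepsilon+\varepsilon)f(\OPT)=6\varepsilon\delta f(\OPT)$, which proves \eqref{greedy:successful-round}.

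The main obstacle is the first step, keeping the second-order error to the stated constant $2\ell^2\delta^2$. The cleanest route is to expand exactly as $\sum_J\delta^{|J|}(1-\delta)^{\ell-|J|}F(y\vee e_{\cup_{b\in J}B_b})$ and compare with the linearization $F(y)+\delta\sum_b[F(y\vee e_{B_b})-F(y)]$. The difference should be bounded by the total probability of $|J|\ge2$, about $\ell^2\delta^2/2$, times a range of at most $2f(\OPT)$, plus the $(1-\delta)^{\ell-1}$ correction on the singleton terms, each of size $\le\ell\delta\cdot\delta\ell\cdot 2f(\OPT)/2$-ish. If the constant overshoots, the fallback is to use submodularity for a one-sided bound: $F(y\vee e_{\cup_J B_b})\ge F(y)+\sum_{b\in J}[F(y\vee e_{B_b})-F(y)]$ fails in general, so we would instead rely on non-negativity only for the negative side.
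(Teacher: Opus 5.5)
Your route is the paper's. You expand over the activation patterns of the $\ell$ appended blocks, discard the patterns with $|J|\ge2$ by non-negativity, and average with the weights $w_\omega D_i(\calT)$ so that \eqref{greedy:auxiliary} turns the singleton terms into $g_i(T_b)$. You then substitute \eqref{dir:averaged-growth}. Your second part (stationarity of the pushforward of $D_i$ under $q$, and the $4\varepsilon+\varepsilon+\varepsilon$ accounting) is exactly the paper's.

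The one step you leave open is the second-order constant, and your write-up gets it wrong as stated. This matters, because the second part needs $2\ell^2\delta^2\le\varepsilon\delta$ to land on exactly $6\varepsilon\delta$. The coefficient you single out, $(1-\delta)^\ell-\ell\delta(1-\delta)^{\ell-1}=(1-\delta)^{\ell-1}\bigl(1-(\ell+1)\delta\bigr)$, is about $1-2\ell\delta$, not $1-\ell\delta-O(\ell^2\delta^2)$. Used literally, it would cost a first-order amount of order $\ell\delta F(y_i^\omega)$. Your rough budget (``$\ell^2\delta^2$ from the pairs, $\ell^2\delta^2$ from the expansion'', times a range of $f(\OPT)$ or $2f(\OPT)$) is too loose to decide whether $2\ell^2\delta^2$ is met.

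The clean bookkeeping compares termwise with the target $(1-\ell\delta)F(y)+\delta\sum_bF_b$, where $y=y_i^\omega$ and $F_b=F(y\vee e_{B_b})$. After discarding $|J|\ge2$, the lower bound is $(1-\delta)^\ell F(y)+\delta(1-\delta)^{\ell-1}\sum_bF_b$. Bernoulli's inequality gives $(1-\delta)^\ell\ge1-\ell\delta$, and $F(y)\ge0$, so the empty pattern loses nothing. The discarded pairs need no budget at all.

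The only loss is the singleton deficit $\delta\bigl[1-(1-\delta)^{\ell-1}\bigr]\sum_bF_b\le(\ell-1)\delta^2\cdot2\ell f(\OPT)$, using your bound $0\le F_b\le2f(\OPT)$. The total loss is therefore $2\ell(\ell-1)\delta^2f(\OPT)\le2\ell^2\delta^2f(\OPT)$. This is precisely the paper's computation; it does the same comparison after averaging, using $\bigl[(1-\delta)^\ell-1+\ell\delta\bigr]g_i(\varnothing)\ge0$ and $g_i(T_b)\le2f(\OPT)$.

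Your recentred version also works once the sign is fixed. The coefficient of $F(y)$ is then $\Prb[|J|\le1]=(1-\delta)^\ell+\ell\delta(1-\delta)^{\ell-1}\ge1-\binom{\ell}{2}\delta^2$, and $|F_b-F(y)|\le f(\OPT)$, giving a total of $\tfrac{3}{2}\ell(\ell-1)\delta^2f(\OPT)$. No fallback is needed.
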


\begin{proof}
Write the list of an old state $\omega$ as
$\mathcal L_i^\omega=((A_{\omega,1},p_{\omega,1}),\ldots,
(A_{\omega,m_\omega},p_{\omega,m_\omega}))$.
For a fixed $\calT=(T_1,\ldots,T_\ell)$, the list in
\eqref{greedy:state-update} gives the exact expansion
\[
\begin{aligned}
F(y_{i+1}^{(\omega,\calT)})
={}&\sum_{J\subseteq[m_\omega]}
   \left(\prod_{a\in J}p_{\omega,a}
         \prod_{a\in[m_\omega]\setminus J}(1-p_{\omega,a})\right)\\
&\quad\cdot\sum_{H\subseteq[\ell]}
   \delta^{|H|}(1-\delta)^{\ell-|H|}
   f\left(\bigcup_{a\in J}A_{\omega,a}
      \cup\bigcup_{b\in H}(T_b\setminus Z_{\omega,i})\right).
\end{aligned}
\]
Since $f$ is non-negative, the terms with $|H|\ge2$ can be dropped.
The terms with $H=\varnothing$ sum to
$(1-\delta)^\ell F(y_i^\omega)$.  For each $b\in[\ell]$, the term with
$H=\{b\}$ is
\[
\begin{aligned}
&\delta(1-\delta)^{\ell-1}
  \sum_{J\subseteq[m_\omega]}
  \left(\prod_{a\in J}p_{\omega,a}
        \prod_{a\in[m_\omega]\setminus J}(1-p_{\omega,a})\right)
  f\left(\bigcup_{a\in J}A_{\omega,a}
      \cup(T_b\setminus Z_{\omega,i})\right)\\
&\qquad=\delta(1-\delta)^{\ell-1}
  F\bigl(y_i^\omega\vee e_{T_b\setminus Z_{\omega,i}}\bigr).
\end{aligned}
\]
Multiplying by the successor weight $w_\omega D_i(\calT)$, summing over
$\omega$ and $\calT$, and using \eqref{greedy:auxiliary} yields
\begin{equation}
\begin{aligned}
    &\sum_{\omega\in\Omega_{i+1}}w_\omega F(y_{i+1}^\omega)\ge (1-\delta)^\ell
      \sum_{\omega\in\Omega_i}w_\omega F(y_i^\omega)
      +\delta(1-\delta)^{\ell-1}
       \E_{\calT\sim D_i}\sum_{b=1}^{\ell}g_i(T_b).
\end{aligned}
\label{greedy:zero-singleton}
\end{equation}
By \eqref{dir:block-gain}, subtracting the first two terms on the
right-hand side of \eqref{greedy:one-round} from
\eqref{greedy:zero-singleton} leaves at least
\[
\begin{aligned}
    &\bigl[(1-\delta)^\ell-1+\ell\delta\bigr]g_i(\varnothing)-\delta\bigl[1-(1-\delta)^{\ell-1}\bigr]
      \E_{\calT\sim D_i}\sum_{b=1}^{\ell}g_i(T_b).
\end{aligned}
\]
The first term is non-negative.  By Lemma~\ref{lem:greedy-feasibility},
$g_i(T_b)\le2f(\OPT)$, while
$1-(1-\delta)^{\ell-1}\le(\ell-1)\delta$.  The remaining loss is
therefore at most $2\ell(\ell-1)\delta^2 f(\OPT)$, proving
\eqref{greedy:one-round}.

If $g_i(\OPT)\ge V_i$, Theorem~\ref{thm:direction-oracle} gives the
stationarity assertion and
\[
    \E_{\calT\sim D_i}W(\calT)
    \ge V_i-g_i(\varnothing)
       -\left(\frac4\ell+\varepsilon\right)f(\OPT).
\]
Substitution into \eqref{greedy:one-round}, followed by
$1/\ell\le\varepsilon$ and $2\ell^2\delta\le\varepsilon$,
gives \eqref{greedy:successful-round}.
\end{proof}

The one-round estimate is useful when the threshold condition holds.  To
decide whether it holds, we require lower bounds on $g_i(\OPT)$ that do not
assume success in any earlier round.  Before the switch these bounds follow
from Lov\'asz threshold sets disjoint from the guide; after the switch they
follow by expanding the independent block activations and averaging over the
state histories.

\begin{lemma}
\label{lem:greedy-trajectory}
The following inequalities hold for every run, without assuming
$g_j(\OPT)\ge V_j$ in any round.  For $i<i_s$,
\begin{equation}
\begin{aligned}
    g_i(\OPT)\ge{}&f(\OPT)-\E_{Z\sim\calZ}f(\OPT\cap Z)-\bigl(1-(1-\delta)^i\bigr)
      \E_{Z\sim\calZ}f(\OPT\cup Z).
\end{aligned}
\label{greedy:pre-switch-trajectory}
\end{equation}
For $i\ge i_s$,
\begin{equation}
\begin{aligned}
    g_i(\OPT)\ge{}&(1-\delta)^{i-i_s}f(\OPT)\\
    &-\bigl((1-\delta)^{i-i_s}-(1-\delta)^i\bigr)
      \E_{Z\sim\calZ}f(\OPT\cup Z)\\
    &+\delta(1-\delta)^{i-1}
      \sum_{j=i_s}^{i-1}
      \E_{\calT\sim D_j}f(\OPT\cup q(\calT)).
\end{aligned}
\label{greedy:post-switch-trajectory}
\end{equation}
\end{lemma}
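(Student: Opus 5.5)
We prove \eqref{greedy:pre-switch-trajectory} by applying Lemma~\ref{lem:lovasz-marginal-lower-bound} statewise. We prove \eqref{greedy:post-switch-trajectory} by induction on $i$, starting at $i=i_s$. The induction step uses a one-round concavity estimate along the direction of the newly appended blocks.

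Two structural facts are used throughout. \emph{(a) Marginals.} Fix a state $\omega\in\Omega_i$ and write $R=R(y_i^\omega)$. In each round the appended sets are subsets of disjoint blocks, each activated with probability $\delta$, independently within the state. Hence every element lies in at most one appended set per round, so $\Prb[u\notin R]\ge(1-\delta)^i$. Therefore every coordinate of $\operatorname{Mar}(y_i^\omega)$ is at most $m_i=1-(1-\delta)^i$. \emph{(b) Disjointness.} Before the switch every appended set has the form $T_b\setminus Z_\omega$, so $R\cap Z_\omega=\varnothing$ whenever $i\le i_s$. \emph{(c) Weights.} Successor weights $w_\omega D_j(\calT)$ preserve the total weight of each guide base. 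Consequently $\sum_{\omega}w_\omega h(Z_\omega)=\E_{Z\sim\calZ}h(Z)$ in every round, and averaging over $\Omega_{j+1}$ amounts to averaging over $\Omega_j$ and over $\calT\sim D_j$.

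\emph{Pre-switch bound.} Let $i<i_s$ and fix $\omega$. Put $Z=Z_\omega$ and $X=\OPT\setminus Z$. Then the summand of $g_i(\OPT)$ for $\omega$ equals $\E[\varphi(R)]$ with $\varphi(A)=f(A\cup X)$, which is submodular. By Lemma~\ref{lem:lovasz-marginal-lower-bound}, $\E[\varphi(R)]\ge\int_0^1 f(S_t\cup X)\,dt$, where $S_t=\{u:x_u>t\}$ is empty for $t\ge m_i$ and satisfies $S_t\cap Z=\varnothing$ by (b). Let $A=S_t$. Since $A\cap Z=\varnothing$, we have $(A\cup X)\cap(Z\cup X)=X$, so submodularity and non-negativity give
\[
f(A\cup X)\ge f(X)-f(\OPT\cup Z).
\]
Moreover, $f(X)\ge f(\OPT)-f(\OPT\cap Z)$, again by submodularity and non-negativity applied to $X$ and $\OPT\cap Z$. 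Integrating over $t$ gives $f(X)-m_if(\OPT\cup Z)\ge f(\OPT)-f(\OPT\cap Z)-m_if(\OPT\cup Z)$. Averaging over $\omega$ using (c) yields \eqref{greedy:pre-switch-trajectory}.

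\emph{Base case $i=i_s$.} Now $Z_{\omega,i_s}=\varnothing$, so the summand equals $\E f(R\cup\OPT)$, and (b) still holds. We apply the same Lovász argument to $\varphi(A)=f(A\cup\OPT)$. Since $(A\cup\OPT)\cap(Z\cup\OPT)=\OPT$, we get $f(A\cup\OPT)\ge f(\OPT)-f(\OPT\cup Z)$. This gives
\[
g_{i_s}(\OPT)\ge f(\OPT)-\bigl(1-(1-\delta)^{i_s}\bigr)\E_{Z\sim\calZ}f(\OPT\cup Z),
\]
which is \eqref{greedy:post-switch-trajectory} at $i=i_s$ with an empty sum.

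\emph{Induction step for $i\ge i_s$.} We aim for the recurrence
\[
g_{i+1}(\OPT)\ge(1-\delta)g_i(\OPT)+\delta(1-\delta)^i\,\E_{\calT\sim D_i}f(\OPT\cup q(\calT)).
\]
Unrolling it reproduces \eqref{greedy:post-switch-trajectory} exactly. The factor $(1-\delta)$ turns the coefficients $(1-\delta)^{i-i_s}$ and $\bigl((1-\delta)^{i-i_s}-(1-\delta)^i\bigr)$ into their $i+1$ versions, it turns $\delta(1-\delta)^{i-1}$ into $\delta(1-\delta)^i$, and the new term enters with coefficient $\delta(1-\delta)^i$.

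To prove the recurrence, fix $\omega$, a realization $R$ of $R(y_i^\omega)$, and $\calT\in\supp(D_i)$. The map $H\mapsto f(\OPT\cup R\cup\bigcup_{b\in H}T_b)$ on $[\ell]$ is submodular, so its multilinear extension is concave along the all-ones direction. Evaluating at $\delta\mathbf 1$ gives
\[
\E f\Bigl(\OPT\cup R\cup\bigcup_{b\text{ active}}T_b\Bigr)\ge(1-\delta)f(\OPT\cup R)+\delta f(\OPT\cup R\cup q(\calT)).
\]
Taking expectations over $R$ and averaging with weights $w_\omega D_i(\calT)$ yields $(1-\delta)g_i(\OPT)+\delta\,\E\,f(\OPT\cup q(\calT)\cup R)$. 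For the last term, apply Lemma~\ref{lem:lovasz-marginal-lower-bound} to $\varphi(A)=f(A\cup\OPT\cup q(\calT))$ using fact (a) and non-negativity. Since $S_t=\varnothing$ for $t\ge m_i$, we get $\E\varphi(R)\ge(1-\delta)^if(\OPT\cup q(\calT))$.

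The main obstacle is getting the exact coefficients, in particular the absence of an $f(\OPT\cap Z)$ term after the switch. This is why the base case must be handled with $\varphi(A)=f(A\cup\OPT)$ rather than derived from the pre-switch bound. We also need to check that the concavity step and the Lovász step never require independence across rounds beyond what fact (a) provides.
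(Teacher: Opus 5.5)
Your proof is correct, and your pre-switch argument is the same as the paper's. For $i\ge i_s$ you take a different route.

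The paper does not induct. It expands $F(y_i^\omega\vee e_{\OPT})$ over the block activations of all $i$ rounds and applies Lemma~\ref{lem:lovasz-marginal-lower-bound} round by round. This collapses each round to an all-or-none activation of $q(\calT^{\omega,j})\setminus Z_{\omega,j}$, which is \eqref{greedy:all-or-none}. It then keeps only three kinds of round sets $J$:
\begin{itemize}
\item $J=\varnothing$;
\item nonempty subsets of the pre-switch rounds, each bounded below by $f(\OPT)-f(\OPT\cup Z_\omega)$;
\item post-switch singletons.
\end{itemize}
This gives the statewise bound \eqref{greedy:state-trajectory-bound}. The paper then averages it, using that the histories selecting $\calT$ in round $j$ carry total weight $D_j(\calT)$ in every later $\Omega_i$.

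You instead prove the one-round recurrence $g_{i+1}(\OPT)\ge(1-\delta)g_i(\OPT)+\delta(1-\delta)^i\E_{\calT\sim D_i}f(\OPT\cup q(\calT))$. You unroll it from a base case at $i=i_s$, which is the pre-switch Lov\'asz argument applied to $\varphi(A)=f(A\cup\OPT)$. The two routes line up step for step:
\begin{itemize}
\item Your concavity step along $\mathbf{1}$ is exactly the inequality \eqref{greedy:all-or-none} with $A=\OPT\cup R$.
\item Your bound $\E f(\OPT\cup q(\calT)\cup R)\ge(1-\delta)^i f(\OPT\cup q(\calT))$ is the per-round version of discarding the non-singleton post-switch terms.
\end{itemize}
So the two arguments discard exactly the same terms and produce identical coefficients.

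Your recurrence simplifies the bookkeeping. The average over $D_i$ is taken in the round where the same $D_i$ is applied to every state, so you never need the identity relating history marginals to $D_j$ for all $j<i$. The paper's expansion gives something you do not: a per-state inequality in terms of that state's own history, before any averaging.
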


\begin{proof}
\emph{Before the switch, $i<i_s$.}
Fix $\omega\in\Omega_i$.  Every set in $\mathcal L_i^\omega$ is disjoint
from $Z_\omega$.  Each element belongs to at most one block per round,
and the listed blocks are activated independently with probability $\delta$.
Consequently,
\[
    \Mar_u(y_i^\omega)=0\quad(u\in Z_\omega),
    \qquad
    \Mar_u(y_i^\omega)\le1-(1-\delta)^i\quad(u\in\mathcal N).
\]
Let $B_t^\omega=\{u\in\mathcal N:\Mar_u(y_i^\omega)>t\}$.
By Lemma~\ref{lem:lovasz-marginal-lower-bound},
\[
\begin{aligned}
F\bigl(y_i^\omega\vee e_{\OPT\setminus Z_\omega}\bigr)
&\ge\int_0^1
  f\bigl((\OPT\setminus Z_\omega)\cup B_t^\omega\bigr)\,dt\\
&=\int_0^{1-(1-\delta)^i}
  f\bigl((\OPT\setminus Z_\omega)\cup B_t^\omega\bigr)\,dt
  +(1-\delta)^i f(\OPT\setminus Z_\omega),
\end{aligned}
\]
where the equality uses $B_t^\omega=\varnothing$ for
$t\ge1-(1-\delta)^i$.
Since $B_t^\omega\cap Z_\omega=\varnothing$, submodularity gives
\[
\begin{aligned}
f\bigl((\OPT\setminus Z_\omega)\cup B_t^\omega\bigr)
\ge{}&f(\OPT\setminus Z_\omega)
  +f(\OPT\cup Z_\omega\cup B_t^\omega)
  -f(\OPT\cup Z_\omega)\\
\ge{}&f(\OPT\setminus Z_\omega)-f(\OPT\cup Z_\omega),
\end{aligned}
\]
where the second inequality follows from non-negativity.  Substituting this
bound into the integral yields
\[
\begin{aligned}
F\bigl(y_i^\omega\vee e_{\OPT\setminus Z_\omega}\bigr)
\ge{}&f(\OPT\setminus Z_\omega)
  -\bigl[1-(1-\delta)^i\bigr]f(\OPT\cup Z_\omega)\\
\ge{}&f(\OPT)-f(\OPT\cap Z_\omega)
  -\bigl[1-(1-\delta)^i\bigr]f(\OPT\cup Z_\omega).
\end{aligned}
\]
The last inequality uses submodularity and $f(\varnothing)\ge0$.
Taking the weighted sum over $\omega\in\Omega_i$, whose guide bases have
distribution $\calZ$, proves \eqref{greedy:pre-switch-trajectory}.

\emph{After the switch, $i\ge i_s$.}
Fix a state $\omega\in\Omega_i$.  For each preceding round $j<i$, write
its selected partitioned base as
\[
    \calT^{\omega,j}=(T_1^{\omega,j},\ldots,T_\ell^{\omega,j}),
    \qquad B_b^{\omega,j}=T_b^{\omega,j}\setminus Z_{\omega,j}.
\]
By the list representation of $y_i^\omega$, all listed blocks are activated
independently with probability $\delta$.  Hence
\begin{equation}
\begin{aligned}
F(y_i^\omega\vee e_{\OPT})
={}&\sum_{H_0,\ldots,H_{i-1}\subseteq[\ell]}
  \left[\prod_{j=0}^{i-1}
    \delta^{|H_j|}(1-\delta)^{\ell-|H_j|}\right]\cdot f\left(\OPT\cup
  \bigcup_{j=0}^{i-1}\bigcup_{b\in H_j}B_b^{\omega,j}\right).
\end{aligned}
\label{greedy:block-activation-expansion}
\end{equation}
For any fixed set $A$, the blocks $B_1^{\omega,j},\ldots,B_\ell^{\omega,j}$
are pairwise disjoint.  The union of $A$ and the independently activated
blocks has marginal probabilities $1$ on $A$, $\delta$ on
$(\bigcup_b B_b^{\omega,j})\setminus A$, and zero elsewhere.
Lemma~\ref{lem:lovasz-marginal-lower-bound} therefore gives
\begin{equation}
\begin{aligned}
\sum_{H\subseteq[\ell]}
  \delta^{|H|}(1-\delta)^{\ell-|H|}
  f\left(A\cup\bigcup_{b\in H}B_b^{\omega,j}\right)&\ge\int_0^\delta
  f\left(A\cup\bigcup_{b=1}^{\ell}B_b^{\omega,j}\right)\,dt
  +\int_\delta^1 f(A)\,dt\\
&=\delta f\left(A\cup\bigcup_{b=1}^{\ell}B_b^{\omega,j}\right)
  +(1-\delta)f(A).
\end{aligned}
\label{greedy:all-or-none}
\end{equation}
Applying \eqref{greedy:all-or-none} successively to the sums over
$H_0,\ldots,H_{i-1}$ in \eqref{greedy:block-activation-expansion}, we obtain
\begin{equation}
\begin{aligned}
F(y_i^\omega\vee e_{\OPT})
\ge{}&\sum_{J\subseteq\{0,\ldots,i-1\}}
  \delta^{|J|}(1-\delta)^{i-|J|}\cdot f\left(\OPT\cup\bigcup_{j\in J}
  \bigl(q(\calT^{\omega,j})\setminus Z_{\omega,j}\bigr)\right).
\end{aligned}
\label{greedy:round-activation-expansion}
\end{equation}

We now bound the terms in \eqref{greedy:round-activation-expansion}.
The term $J=\varnothing$ contributes $(1-\delta)^i f(\OPT)$.
For $\varnothing\ne J\subseteq\{0,\ldots,i_s-1\}$, the union of the
activated sets, denoted by $Y$, is disjoint from $Z_\omega$.
By submodularity and non-negativity,
\[
\begin{aligned}
f(\OPT\cup Y)
&\ge f(\OPT)+f(\OPT\cup Y\cup Z_\omega)-f(\OPT\cup Z_\omega)\\
&\ge f(\OPT)-f(\OPT\cup Z_\omega).
\end{aligned}
\]
The total weight of these terms is
$(1-\delta)^{i-i_s}\bigl[1-(1-\delta)^{i_s}\bigr]$.
For $J=\{j\}$ with $i_s\le j<i$, we have $Z_{\omega,j}=\varnothing$,
so the corresponding term is
\[
    \delta(1-\delta)^{i-1}
    f\bigl(\OPT\cup q(\calT^{\omega,j})\bigr).
\]
Discarding all remaining terms by non-negativity gives
\begin{equation}
\begin{aligned}
F(y_i^\omega\vee e_{\OPT})
\ge{}&(1-\delta)^i f(\OPT)\\
&+(1-\delta)^{i-i_s}\bigl[1-(1-\delta)^{i_s}\bigr]
  \bigl[f(\OPT)-f(\OPT\cup Z_\omega)\bigr]\\
&+\delta(1-\delta)^{i-1}
  \sum_{j=i_s}^{i-1}f\bigl(\OPT\cup q(\calT^{\omega,j})\bigr).
\end{aligned}
\label{greedy:state-trajectory-bound}
\end{equation}

Finally, the state-update rule assigns total weight $D_j(\calT)$ to the
histories selecting $\calT$ in round $j$, and subsequent updates preserve
that total weight.  The guide distribution is likewise preserved.  Thus
\[
    \sum_{\omega\in\Omega_i}w_\omega f(\OPT\cup Z_\omega)
    =\E_{Z\sim\calZ}f(\OPT\cup Z),
\]
and, for every $j<i$,
\[
    \sum_{\omega\in\Omega_i}w_\omega
      f\bigl(\OPT\cup q(\calT^{\omega,j})\bigr)
    =\E_{\calT\sim D_j}f\bigl(\OPT\cup q(\calT)\bigr).
\]
Since $Z_{\omega,i}=\varnothing$, we have
$g_i(\OPT)=\sum_{\omega\in\Omega_i}w_\omega F(y_i^\omega\vee e_{\OPT})$.
Taking the weighted sum of \eqref{greedy:state-trajectory-bound} over
$\omega\in\Omega_i$ and collecting coefficients proves
\eqref{greedy:post-switch-trajectory}.
\end{proof}

We now compare the lower bounds in Lemma~\ref{lem:greedy-trajectory} with the
thresholds $V_i$ used by the algorithm.  The exact-value threshold
$V_i^\star$ is obtained from $V_i$ by replacing the three grid guesses with
the corresponding unknown quantities.  Comparing the formulas before and
after the switch and applying the three guess-error bounds yields the relation
between $V_i^\star$ and $V_i$ stated below.

\begin{lemma}
\label{lem:greedy-thresholds}
Define the exact-value threshold
\begin{equation}
V_i^\star=
\begin{cases}
\begin{aligned}
    &f(\OPT)-\E_{Z\sim\calZ}f(\OPT\cap Z)\\
    &\quad-\bigl(1-(1-\delta)^i\bigr)
      \E_{Z\sim\calZ}f(\OPT\cup Z),
\end{aligned}&i<i_s,\\[3pt]
\begin{aligned}
    &(1-\delta)^{i-i_s}f(\OPT)\\
    &\quad-\bigl[(1-\delta)^{i-i_s}
      -(1-\delta)^i(1+\delta(i-i_s))\bigr]
      \E_{Z\sim\calZ}f(\OPT\cup Z),
\end{aligned}&i\ge i_s.
\end{cases}
\label{greedy:ideal-threshold}
\end{equation}
For a correctly guessed run, $V_i\ge V_i^\star-8\varepsilon f(\OPT)$
for every $i$.  Moreover, $V_i\le V_i^\star$ before the switch and
$V_i\le V_i^\star-2\varepsilon f(\OPT)$ after it.
\end{lemma}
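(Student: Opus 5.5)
The plan is a direct coefficientwise comparison. Write $V_i^\star$ and $V_i$ in the common form
\[
  \alpha_i\,X-\beta_i\,X_\cap-\gamma_i\,X_\cup-c_i\varepsilon\widehat v,
\]
where $(X,X_\cap,X_\cup)$ is either $(f(\OPT),\E f(\OPT\cap Z),\E f(\OPT\cup Z))$ with $c_i=0$ for $V_i^\star$, or $(\widehat v,\widehat v_\cap,\widehat v_\cup)$ with $c_i\in\{4,6\}$ for $V_i$. Before the switch, $\alpha_i=\beta_i=1$ and $\gamma_i=1-(1-\delta)^i$. After the switch, $\alpha_i=(1-\delta)^{i-i_s}$, $\beta_i=0$, and $\gamma_i=(1-\delta)^{i-i_s}-(1-\delta)^i(1+\delta(i-i_s))$.

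The first step is to record that all coefficients lie in $[0,1]$. For $\alpha_i$, $\beta_i$, and $\gamma_i$ before the switch this is immediate. After the switch, $\gamma_i\le1$ is clear. For $\gamma_i\ge0$, the key fact is $(1-\delta)^{i_s}(1+\delta k)\le(1-\delta)^{i_s}(1-\delta)^{-k}$ for $k=i-i_s$, which holds because $(1-\delta)^k(1+\delta k)\le1$. Multiplying by $(1-\delta)^{i-i_s}\cdot(1-\delta)^{-(i-i_s)}$ and simplifying gives $(1-\delta)^i(1+\delta k)\le(1-\delta)^{i-i_s}$, which is exactly $\gamma_i\ge0$. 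This nonnegativity is the one point that needs care, since it decides the sign of the error contributed by the $\widehat v_\cup$ guess.

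Next, I subtract $V_i$ from $V_i^\star$ and write the difference as
\[
  \alpha_i\bigl(f(\OPT)-\widehat v\bigr)-\beta_i\bigl(\E f(\OPT\cap Z)-\widehat v_\cap\bigr)-\gamma_i\bigl(\E f(\OPT\cup Z)-\widehat v_\cup\bigr)+c_i\varepsilon\widehat v.
\]
By \eqref{greedy:correct-guesses}, the first bracket lies in $[0,\varepsilon f(\OPT)]$. The second and third brackets each lie in $[0,\varepsilon\widehat v)\subseteq[0,\varepsilon f(\OPT))$, and $c_i\varepsilon\widehat v$ lies in $[(1-\varepsilon)c_i\varepsilon f(\OPT),\,c_i\varepsilon f(\OPT)]$.

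For the upper bound $V_i^\star-V_i\le8\varepsilon f(\OPT)$, I drop the two negative terms and bound the rest by $\alpha_i\varepsilon f(\OPT)+c_i\varepsilon f(\OPT)\le(1+6)\varepsilon f(\OPT)$. This is at most $8\varepsilon f(\OPT)$, which gives $V_i\ge V_i^\star-8\varepsilon f(\OPT)$.

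For the lower bound, the first term is nonnegative and the two subtracted terms together are at most $(\beta_i+\gamma_i)\varepsilon\widehat v$. Before the switch, $\beta_i+\gamma_i\le2$, so
\[
  V_i^\star-V_i\ge(4-2)\varepsilon\widehat v\ge0,
\]
which proves $V_i\le V_i^\star$. After the switch, $\beta_i=0$ and $\gamma_i\le1$, so
\[
  V_i^\star-V_i\ge(6-1)\varepsilon\widehat v\ge5(1-\varepsilon)\varepsilon f(\OPT)\ge2\varepsilon f(\OPT),
\]
using $\varepsilon\le1/10$. This proves $V_i\le V_i^\star-2\varepsilon f(\OPT)$ after the switch.
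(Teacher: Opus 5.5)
Your strategy (writing $V_i^\star-V_i$ coefficientwise and bounding each guess error) is the same as the paper's. However, the step you single out as the delicate one is false. The post-switch union coefficient $\gamma_i=(1-\delta)^{i-i_s}-(1-\delta)^i(1+\delta(i-i_s))$ is not nonnegative in general.

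With $k=i-i_s$ we have $\gamma_i=(1-\delta)^k\bigl[1-(1-\delta)^{i_s}(1+\delta k)\bigr]$. So $\gamma_i\ge0$ is equivalent to $(1-\delta)^{i_s}(1+\delta k)\le1$, and this fails late in the run. For $i$ close to $1/\delta$ the left side is about $e^{-t_s}(2-t_s)\approx0.692\cdot1.632\approx1.13$. In continuous time $\gamma_i\approx e^{-s}\bigl[e^{t_s}-(1+s-t_s)\bigr]$ with $s=i\delta$, which is negative for $s>e^{t_s}-1+t_s\approx0.81$.

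Your derivation does not show what you claim. Multiplying $(1-\delta)^k(1+\delta k)\le1$ by $(1-\delta)^{i_s}$ gives $(1-\delta)^i(1+\delta k)\le(1-\delta)^{i_s}$, not $\le(1-\delta)^{i-i_s}$. Multiplying by $(1-\delta)^{k}(1-\delta)^{-k}=1$ changes nothing.

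Consequently, your upper bound $V_i^\star-V_i\le7\varepsilon f(\OPT)$ is unjustified. You obtained it by discarding $-\gamma_i\bigl(\E f(\OPT\cup Z)-\widehat v_\cup\bigr)$ as nonpositive, but when $\gamma_i<0$ this term is positive, as large as $|\gamma_i|\varepsilon\widehat v$.

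The fix, which is what the paper does, is to use only $|\gamma_i|\le1$. Both terms of $\gamma_i$ lie in $[0,1]$, the second because $(1-\delta)^i(1+\delta(i-i_s))\le e^{-i\delta}e^{\delta(i-i_s)}\le1$. This gives $V_i^\star-V_i\le(1+1+6)\varepsilon f(\OPT)$, which is exactly why the lemma's constant is $8$ rather than the $7$ you obtained.

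Your post-switch lower bound survives, because a negative $\gamma_i$ only helps. Still, the intermediate claim should say that the subtracted terms are at most $\bigl(\beta_i+\max\{\gamma_i,0\}\bigr)\varepsilon\widehat v$, not $(\beta_i+\gamma_i)\varepsilon\widehat v$. The pre-switch parts of your argument are fine, since there $\gamma_i=1-(1-\delta)^i\in[0,1]$.
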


\begin{proof}
Before the switch, subtract the first branch of
\eqref{greedy:ideal-threshold} from \eqref{greedy:threshold}:
\[
\begin{aligned}
V_i-V_i^\star={}&\widehat v-f(\OPT)
 +\bigl(\E_{Z\sim\calZ}f(\OPT\cap Z)-\widehat v_\cap\bigr)\\
&+\bigl(1-(1-\delta)^i\bigr)
  \bigl(\E_{Z\sim\calZ}f(\OPT\cup Z)-\widehat v_\cup\bigr)
 -4\varepsilon\widehat v.
\end{aligned}
\]
Equation~\eqref{greedy:correct-guesses} bounds this difference below by
$-5\varepsilon f(\OPT)$ and above by
$2\varepsilon\widehat v-4\varepsilon\widehat v\le0$.

After the switch, the coefficient of the union-value term in
\eqref{greedy:threshold} is
\[
    (1-\delta)^{i-i_s}
       -(1-\delta)^i(1+\delta(i-i_s)).
\]
Its absolute value is at most one: each of the two terms lies in $[0,1]$,
and the second is at most
$e^{-i\delta}e^{\delta(i-i_s)}\le1$.
Thus \eqref{greedy:correct-guesses} gives
\[
    -8\varepsilon f(\OPT)
    \le V_i-V_i^\star
    \le-5\varepsilon\widehat v
    \le-2\varepsilon f(\OPT),
\]
where the final inequality uses
$\widehat v\ge(1-\varepsilon)f(\OPT)$ and
$\varepsilon\le1/10$.
\end{proof}

The preceding two lemmas turn the first failed threshold into progress in the
recursive search.  A failure before the switch is ruled out directly.  After
the switch, if no earlier direction decreased the union potential, the
post-switch trajectory and the threshold slack contradict the failure.

\begin{lemma}
\label{lem:greedy-failure}
In a correctly guessed run, suppose that $i$ is the first round with
$g_i(\OPT)<V_i$.  Then $i\ge i_s$, and some $i_s\le j<i$ has
\begin{equation}
    \E_{\calT\sim D_j}f(\OPT\cup q(\calT))
    <\E_{Z\sim\calZ}f(\OPT\cup Z)-\varepsilon f(\OPT).
    \label{greedy:failure-decrease}
\end{equation}
The guide distribution induced by $D_j$ is $\varepsilon$-stationary.
\end{lemma}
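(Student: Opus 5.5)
The argument compares the trajectory lower bounds of Lemma~\ref{lem:greedy-trajectory} with the threshold estimates of Lemma~\ref{lem:greedy-thresholds}. Stationarity is then read off from Lemma~\ref{lem:greedy-one-round} at the earlier, successful round.

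\emph{Step 1: the failure happens after the switch.} For $i<i_s$, the bound \eqref{greedy:pre-switch-trajectory} says exactly $g_i(\OPT)\ge V_i^\star$. Lemma~\ref{lem:greedy-thresholds} gives $V_i\le V_i^\star$ before the switch, so $g_i(\OPT)\ge V_i$. Hence no pre-switch round fails, and the first failure has $i\ge i_s$. This also shows $i\ge i_s$ holds regardless of the correctness of the other guesses.

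\emph{Step 2: post-switch contradiction.} Let $U=\E_{Z\sim\calZ}f(\OPT\cup Z)$. Suppose, for contradiction, that every $j$ with $i_s\le j<i$ satisfies
\[
\E_{\calT\sim D_j}f(\OPT\cup q(\calT))\ge U-\varepsilon f(\OPT).
\]
Substituting this into \eqref{greedy:post-switch-trajectory} gives
\[
g_i(\OPT)\ge(1-\delta)^{i-i_s}f(\OPT)-\bigl[(1-\delta)^{i-i_s}-(1-\delta)^i\bigr]U+\delta(i-i_s)(1-\delta)^{i-1}\bigl(U-\varepsilon f(\OPT)\bigr).
\]
This right-hand side should be compared with $V_i^\star$, which contains the term $+(1-\delta)^i\delta(i-i_s)U$. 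The bound instead has $\delta(i-i_s)(1-\delta)^{i-1}U$, which is at least as large because $U\ge0$ and $(1-\delta)^{i-1}\ge(1-\delta)^i$. The error term is at most $\delta(i-i_s)\varepsilon f(\OPT)\le\varepsilon f(\OPT)$, since $i\delta\le1$. Therefore
\[
g_i(\OPT)\ge V_i^\star-\varepsilon f(\OPT)\ge V_i+2\varepsilon f(\OPT)-\varepsilon f(\OPT)>V_i,
\]
using the post-switch slack $V_i\le V_i^\star-2\varepsilon f(\OPT)$ from Lemma~\ref{lem:greedy-thresholds}. This contradicts the assumption that round $i$ fails, so some $j\in[i_s,i)$ satisfies \eqref{greedy:failure-decrease}. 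The degenerate case $i=i_s$ also leads to a contradiction: the sum is empty and the inequality $g_i(\OPT)\ge V_i^\star>V_i$ still holds. Hence $i>i_s$ automatically, and such a $j$ exists.

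\emph{Step 3: stationarity.} Since $i$ is the first failing round and $j<i$, we have $g_j(\OPT)\ge V_j$. Lemma~\ref{lem:greedy-one-round} then says the guide distribution induced by $D_j$ is $\varepsilon$-stationary.

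The main obstacle is the coefficient bookkeeping in Step 2. Only one direction of the comparison is available, so the sign of $U$ and the replacement of $(1-\delta)^{i-1}$ by $(1-\delta)^i$ must go the right way. The accumulated $\varepsilon$ loss must also stay strictly below the $2\varepsilon$ slack. I expect this to work because both discarded terms are nonnegative.
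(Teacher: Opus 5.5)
Your proposal is correct and follows the paper's proof essentially step for step. Pre-switch failures are excluded via $g_i(\OPT)\ge V_i^\star\ge V_i$. The post-switch contradiction uses the same coefficient comparison: the leftover $\delta^2(i-i_s)(1-\delta)^{i-1}U\ge0$ is dropped, and the loss $\delta(i-i_s)\varepsilon f(\OPT)\le\varepsilon f(\OPT)$ is absorbed by the $2\varepsilon f(\OPT)$ slack. Stationarity comes from the direction-oracle guarantee at the successful round $j$, which you reach through Lemma~\ref{lem:greedy-one-round} rather than citing Theorem~\ref{thm:direction-oracle} directly.

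One small caveat: your aside that Step~1 holds regardless of the correctness of the guesses is unjustified, since $V_i\le V_i^\star$ is derived from \eqref{greedy:correct-guesses}. The remark is never used, so it does not affect the argument.
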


\begin{proof}
The pre-switch bounds
\eqref{greedy:pre-switch-trajectory} and
Lemma~\ref{lem:greedy-thresholds} imply $g_i(\OPT)\ge V_i$ whenever
$i<i_s$, so the first failure occurs after the switch.
Suppose, toward a contradiction, that every $i_s\le j<i$ satisfies the
reverse weak inequality to \eqref{greedy:failure-decrease}.  Subtracting
\eqref{greedy:ideal-threshold} from
\eqref{greedy:post-switch-trajectory} then gives
\[
\begin{aligned}
g_i(\OPT)-V_i^\star
&\ge\delta(1-\delta)^{i-1}
  \sum_{j=i_s}^{i-1}
  \left[\E_{\calT\sim D_j}f(\OPT\cup q(\calT))
    -(1-\delta)\E_{Z\sim\calZ}f(\OPT\cup Z)\right]\\
&\ge-\delta(1-\delta)^{i-1}(i-i_s)
       \varepsilon f(\OPT)
 \ge-\varepsilon f(\OPT).
\end{aligned}
\]
Here the second inequality uses non-negativity of the union value and the
last uses $\delta(i-i_s)\le1$.  But
$V_i\le V_i^\star-2\varepsilon f(\OPT)$ by
Lemma~\ref{lem:greedy-thresholds}, contradicting
$g_i(\OPT)<V_i$.  Hence \eqref{greedy:failure-decrease} holds for some
$j<i$.  Since $i$ was the first failed round, $g_j(\OPT)\ge V_j$.
Theorem~\ref{thm:direction-oracle} therefore makes the guide distribution
induced by $D_j$ $\varepsilon$-stationary.
\end{proof}

Applying this decrease repeatedly along the recursion tree shows that some
visited run meets every threshold.  Indeed, after each failure we follow the
corresponding child; each such descent lowers the non-negative union
potential by $\varepsilon f(\OPT)$, so this path cannot reach the depth limit
without first encountering a run with no failure.

\begin{corollary}
\label{cor:main-successful-run}
Suppose $r\ge1$ and $f(S_{\max})>0$.
Among the calls visited by Algorithm~\ref{alg:main-deterministic}, there
is one with an $\varepsilon$-stationary guide distribution whose correctly
guessed run satisfies $g_i(\OPT)\ge V_i$ for every $0\le i<1/\delta$.
\end{corollary}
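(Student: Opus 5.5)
The plan is to follow a single path down the recursion tree of Algorithm~\ref{alg:main-deterministic} and use the union potential $\Phi(\calZ)=\E_{Z\sim\calZ}f(\OPT\cup Z)$ to bound its length. By Corollary~\ref{cor:initial-guide}, the root guide $\calZ_0=\{(Z_0,1)\}$ satisfies
\[
f(Z_0\cup\OPT)+f(Z_0\cap\OPT)-2f(Z_0)\le\varepsilon f(S_{\max})/4\le 2\varepsilon f(\OPT),
\]
so it is $\varepsilon$-stationary. For every guide, submodularity and non-negativity give $f(\OPT\cup Z)\le f(\OPT)+f(Z)\le 2f(\OPT)$, because $Z$ is a base and so $f(Z)\le f(\OPT)$. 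Hence $0\le\Phi(\calZ)\le 2f(\OPT)$ throughout. Every guide generated by the oracle is supported on at most two bases, so the guided-greedy lemmas apply at every node. The grids \eqref{greedy:guess-grids} always contain a correctly guessed triple, by the argument given right after \eqref{greedy:correct-guesses}, and this does not depend on which guide is being explored.

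Next we build the path inductively. Let $\calZ^{(0)}=\calZ_0$ be explored at depth $0$. Suppose $\calZ^{(d)}$ is $\varepsilon$-stationary and is explored at depth $d$. Consider its correctly guessed run. If $g_i(\OPT)\ge V_i$ for every $0\le i<1/\delta$, we stop: this call is the one the corollary asks for. Otherwise, let $i$ be the first failed round. Lemma~\ref{lem:greedy-failure} then gives $i_s\le j<i$ such that the guide $\calZ^{(d+1)}$ induced by $D_j$ is $\varepsilon$-stationary and satisfies $\Phi(\calZ^{(d+1)})<\Phi(\calZ^{(d)})-\varepsilon f(\OPT)$. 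Merging equal bases under $q$ does not change $\E f(\OPT\cup q(\calT))$, so the bound transfers to the stored guide. Algorithm~\ref{alg:guided-greedy} adds every induced guide to $\mathcal G$, so $\calZ^{(d+1)}\in\mathcal G$. If $d<d_{\max}$, the main procedure therefore calls $\Explore(\calZ^{(d+1)},d+1)$.

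The remaining step is to show the path terminates before the depth bound. Iterating the decrease gives
\[
0\le\Phi(\calZ^{(d)})<2f(\OPT)-d\,\varepsilon f(\OPT),
\]
where $f(\OPT)\ge f(S_{\max})>0$. This forces $d<2/\varepsilon$, so $d\le\lceil 2/\varepsilon\rceil-1<d_{\max}$ at every node that fails. Consequently, every failing node has its child explored, and the path must stop at some depth at most $\lceil2/\varepsilon\rceil$, which is still within the explored range. The node where it stops is a visited call with an $\varepsilon$-stationary guide whose correctly guessed run meets every threshold.

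The main point to check carefully is the depth accounting. We need the strict inequality in Lemma~\ref{lem:greedy-failure} together with the initial bound $\Phi\le 2f(\OPT)$ to ensure a failing node is never at depth $d_{\max}$, where children are not explored. The $+1$ in $d_{\max}=\lceil2/\varepsilon\rceil+1$ gives the needed slack. A second point is that the guide-specific quantities in \eqref{greedy:correct-guesses} are covered by the grids for every $\calZ$. This holds because $\E f(\OPT\cap Z)\le f(\OPT)$ and $\E f(\OPT\cup Z)\le 2f(\OPT)$ lie within the enumerated ranges $\lceil2/\varepsilon\rceil\varepsilon\widehat v$ and $\lceil 3/\varepsilon\rceil\varepsilon\widehat v$ once $\widehat v\ge(1-\varepsilon)f(\OPT)$.
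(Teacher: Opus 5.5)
Your proposal is correct and follows essentially the same argument as the paper. Both start from the $\varepsilon$-stationary initial guide of Corollary~\ref{cor:initial-guide} and descend along the stationary children supplied by Lemma~\ref{lem:greedy-failure}; the union potential is non-negative, starts at most $2f(\OPT)$, and drops by more than $\varepsilon f(\OPT)$ per descent, which forces a non-failing call before the depth limit. The only difference is cosmetic: you derive $d<2/\varepsilon$ directly at every node on the path, which also shows the $+1$ in $d_{\max}$ is not actually needed, whereas the paper phrases the contradiction at $d=d_{\max}$.
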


\begin{proof}
The initial guide distribution is $\varepsilon$-stationary by
Corollary~\ref{cor:initial-guide}, since
$\varepsilon f(S_{\max})/4\le2\varepsilon f(\OPT)$.
At each visited call, the guess enumeration includes a correctly guessed
run by \eqref{greedy:correct-guesses}.  If this run has no failed threshold,
the desired conclusion holds.  Otherwise, Lemma~\ref{lem:greedy-failure}
provides an $\varepsilon$-stationary guide distribution
$\calZ'\in\mathcal G$ satisfying
\[
    \E_{Z'\sim\calZ'}f(\OPT\cup Z')
    <\E_{Z\sim\calZ}f(\OPT\cup Z)-\varepsilon f(\OPT).
\]
Whenever the depth limit has not been reached, the main algorithm visits
the call with guide $\calZ'$ because it explores every member of
$\mathcal G$.

Follow such a child in the analysis whenever the selected correctly
guessed run has a failed threshold.  After $d\ge1$ descents, the guide
distribution $\calZ^{(d)}$ on this path would satisfy
\[
\begin{aligned}
    0
    &\le\E_{Z\sim\calZ^{(d)}}f(\OPT\cup Z)\\
    &<\E_{Z\sim\calZ_0}f(\OPT\cup Z)
      -d\varepsilon f(\OPT)\\
    &\le(2-d\varepsilon)f(\OPT).
\end{aligned}
\]
The last inequality follows from submodularity, non-negativity, and
$f(Z)\le f(\OPT)$ for every base $Z$.  Since $f(\OPT)>0$ and
$d_{\max}=\lceil2/\varepsilon\rceil+1$, these inequalities are impossible
for $d=d_{\max}$.  Hence, before the depth limit, a visited call has an
$\varepsilon$-stationary guide and a correctly guessed run with no failed
threshold.
\end{proof}

It remains to evaluate the value accumulated by a run with no failed
threshold.  We iterate the one-round recurrence, compare the resulting
weighted discrete convolution with its continuous integral, and finally
replace the discretized switch time by $t_s$.

\begin{lemma}
\label{lem:greedy-completed-run}
In a correctly guessed run, if $g_i(\OPT)\ge V_i$ for every
$0\le i<1/\delta$, then the final states satisfy the right-hand side of
\eqref{greedy:successful-bound} with
$\sum_{\omega\in\Omega_{1/\delta}}w_\omega F(y_{1/\delta}^\omega)$
in place of $f(S)$.
\end{lemma}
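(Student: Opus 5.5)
Write $\Phi_i=\sum_{\omega\in\Omega_i}w_\omega F(y_i^\omega)$, $a=\E_{Z\sim\calZ}f(\OPT\cap Z)$, $c=\E_{Z\sim\calZ}f(\OPT\cup Z)$, and $N=1/\delta$. The plan is to unroll the recurrence \eqref{greedy:successful-round} and then estimate the resulting sums. Since every threshold holds, Lemma~\ref{lem:greedy-one-round} gives $\Phi_{i+1}\ge(1-\delta)\Phi_i+\delta V_i-6\varepsilon\delta f(\OPT)$ for all $0\le i<N$. The initial value satisfies $\Phi_0\ge0$ because the lists are empty and $f\ge0$. Iterating, we obtain
\[
\Phi_N\ge\delta\sum_{i=0}^{N-1}(1-\delta)^{N-1-i}V_i-6\varepsilon f(\OPT).
\]
By Lemma~\ref{lem:greedy-thresholds}, each $V_i\ge V_i^\star-8\varepsilon f(\OPT)$ in a correctly guessed run. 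Because $\delta\sum_i(1-\delta)^{N-1-i}\le1$, replacing $V_i$ by $V_i^\star$ costs a further $8\varepsilon f(\OPT)$. The error budget so far is therefore $14\varepsilon f(\OPT)$, leaving $\varepsilon f(\OPT)$ for discretization.

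The next step is to evaluate $\delta\sum_i(1-\delta)^{N-1-i}V_i^\star$ in closed form, coefficient by coefficient, in the basis $f(\OPT)$, $a$, $c$.

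Before the switch ($i<i_s$), the contributions are as follows. The sum for $f(\OPT)-a$ is $(1-\delta)^{N-i_s}-(1-\delta)^N$. The sum for $c$ comes from the terms $-(1-(1-\delta)^i)$, which gives $-[(1-\delta)^{N-i_s}-(1-\delta)^N]+i_s\delta(1-\delta)^{N-1}$.

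After the switch ($i\ge i_s$), the summand for $f(\OPT)$ is $\delta(1-\delta)^{N-1-i}(1-\delta)^{i-i_s}=\delta(1-\delta)^{N-1-i_s}$. Summed over the $N-i_s$ rounds, this gives $(N-i_s)\delta(1-\delta)^{N-1-i_s}$. The $c$-coefficient splits into $-(N-i_s)\delta(1-\delta)^{N-1-i_s}$ and $+\delta(1-\delta)^{N-1}\sum_{k=0}^{N-1-i_s}(1+k\delta)$. The latter sum equals $(N-i_s)\delta+\delta^2(N-i_s)(N-i_s-1)/2$.

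Each closed form is an explicit polynomial in $(1-\delta)$ and $\delta i_s$. We will compare it with its continuous limit, obtained by setting $(1-\delta)^{1/\delta}\to e^{-1}$ and $i_s\delta\to t_s$. The limits should be as follows:
\begin{itemize}
\item the coefficient of $f(\OPT)$ is $e^{t_s-1}-e^{-1}+(1-t_s)e^{t_s-1}=(2-t_s)e^{t_s-1}-e^{-1}$;
\item the coefficient of $a$ is $-(e^{t_s-1}-e^{-1})$;
\item the coefficient of $c$ is $-(e^{t_s-1}-e^{-1})+t_se^{-1}-(1-t_s)e^{t_s-1}+(1-t_s)e^{-1}+\tfrac{(1-t_s)^2}{2}e^{-1}$.
\end{itemize}
The $c$-coefficient simplifies to $-[(2-t_s)e^{t_s-1}-(2+\tfrac{(1-t_s)^2}{2})e^{-1}]$. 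This matches \eqref{greedy:successful-bound} exactly, which confirms the bookkeeping.

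The main obstacle is the discretization error, which must total at most $\varepsilon f(\OPT)$. Two errors need control: the gap between $(1-\delta)^{1/\delta}$ and $e^{-1}$, and the gap between $i_s\delta$ and $t_s$, where $0\le i_s\delta-t_s<\delta$.

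First, the factors $(1-\delta)^{N-i_s}$, $(1-\delta)^{N-1}$ and $(1-\delta)^{N-1-i_s}$ differ from $e^{-(1-i_s\delta)}$ and $e^{-1}$ by $O(\delta)$. This follows from $e^{-x/(1-\delta)}\le(1-\delta)^{x/\delta}\le e^{-x}$. Each coefficient is a fixed combination of such quantities, multiplied by bounded factors such as $N\delta=1$ and $(N-i_s)\delta\le1$, so each coefficient is within $C\delta$ of its limit for an absolute constant $C$.

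Second, the map $t\mapsto$ (limit coefficient at $t$) is Lipschitz on $[0,1]$ with a small constant. Therefore replacing $i_s\delta$ by $t_s$ also costs $O(\delta)$.

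Finally, $a\le f(\OPT)$ and $c\le2f(\OPT)$, and $\delta\le\varepsilon/(2\ell^2)\le\varepsilon^3/2$. The total discretization loss is then $O(\delta)f(\OPT)\le\varepsilon f(\OPT)$. I would write these estimates out term by term with crude explicit constants, say bounding every coefficient error by $10\delta$. That keeps the argument checkable and shows that the final $-15\varepsilon f(\OPT)$ slack suffices.
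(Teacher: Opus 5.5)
Your proof is correct. It shares the paper's first step: unroll \eqref{greedy:successful-round}, absorb the $8\varepsilon$ threshold slack of Lemma~\ref{lem:greedy-thresholds} using $\delta\sum_i(1-\delta)^{N-1-i}\le1$ (with your $N=1/\delta$), and reach the same $14\varepsilon f(\OPT)$ budget. The two routes diverge at the discretization.

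The paper introduces the continuous threshold $v_t(s)$ with $t=i_s\delta$. It bounds the distance from the discrete convolution to $\int_0^1e^{s-1}v_t(s)\,ds$ by three pieces: a weight error, a threshold error $|V_i^\star-v_t(i\delta)|\le7\delta f(\OPT)$, and a left-Riemann-sum error from a Lipschitz bound on $e^{s-1}v_t(s)$. It then evaluates the integral, and finally moves $t$ to $t_s$ with a separate estimate on the interval where the two switch times select different branches.

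You instead sum the geometric and arithmetic--geometric series exactly in the basis $f(\OPT),a,c$ and compare each closed form with its limit. Your sums and limits do reproduce the three coefficients of \eqref{greedy:successful-bound}. This is more elementary: no Riemann-sum estimate in $s$ is needed, and the $t\to t_s$ step becomes Lipschitz continuity of three explicit smooth functions of $t$, with no branch-mismatch bookkeeping.

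The paper's route keeps the link to the continuous-time integral of Buchbinder and Feldman visible. It also records explicit constants ($35\delta$ in total), which you still owe. Since $\delta\le\varepsilon/(2\ell^2)\le\varepsilon/200$, any crude total constant up to $200$ suffices, and your bound $e^{-x/(1-\delta)}\le(1-\delta)^{x/\delta}\le e^{-x}$ easily gives one.

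One minor slip: $\sum_{k=0}^{N-1-i_s}(1+k\delta)=(N-i_s)+\delta(N-i_s)(N-i_s-1)/2$. What you wrote is $\delta$ times this, which is the quantity your limit computation actually uses, so nothing downstream changes.
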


\begin{proof}
Lemma~\ref{lem:greedy-one-round} and
$V_i\ge V_i^\star-8\varepsilon f(\OPT)$ yield
\[
\begin{aligned}
    \sum_{\omega\in\Omega_{i+1}}w_\omega F(y_{i+1}^\omega)\ge(1-\delta)
      \sum_{\omega\in\Omega_i}w_\omega F(y_i^\omega)
      +\delta V_i^\star-14\varepsilon\delta f(\OPT).
\end{aligned}
\]
Iterating, using $F(0)=f(\varnothing)\ge0$ and the fact that the
geometric weights sum to at most one, gives
\begin{equation}
\begin{aligned}
    \sum_{\omega\in\Omega_{1/\delta}}w_\omega F(y_{1/\delta}^\omega)\ge\delta\sum_{i=0}^{1/\delta-1}
       (1-\delta)^{1/\delta-1-i}V_i^\star
       -14\varepsilon f(\OPT).
\end{aligned}
\label{greedy:threshold-convolution}
\end{equation}

To compare this finite sum with the coefficients in
\eqref{greedy:successful-bound}, put $t=i_s\delta$ and, for
$s\in[0,1]$, define
\[
v_t(s)=
\begin{cases}
\begin{aligned}
    &f(\OPT)-\E_{Z\sim\calZ}f(\OPT\cap Z)\\
    &\quad-(1-e^{-s})\E_{Z\sim\calZ}f(\OPT\cup Z),
\end{aligned}&s<t,\\[3pt]
\begin{aligned}
    &e^{-(s-t)}f(\OPT)\\
    &\quad-\bigl[e^{-(s-t)}-e^{-s}(1+s-t)\bigr]
      \E_{Z\sim\calZ}f(\OPT\cup Z),
\end{aligned}&s\ge t.
\end{cases}
\]
The function $v_t$ is the continuous counterpart of
\eqref{greedy:ideal-threshold} at the actual switch time $t$.  For every
integer $0\le k\le1/\delta$,
\begin{equation}
    0\le e^{-k\delta}-(1-\delta)^k\le\delta.
    \label{greedy:power-error}
\end{equation}
Indeed,
$0\le-\log(1-\delta)-\delta\le\delta^2/[2(1-\delta)]$;
use $1-e^{-x}\le x$, $k\delta\le1$, and $\delta\le1/2$.
Equation~\eqref{greedy:power-error} and
$\E_{Z\sim\calZ}f(\OPT\cap Z)\le f(\OPT)$,
$\E_{Z\sim\calZ}f(\OPT\cup Z)\le2f(\OPT)$ imply
\begin{equation}
    |V_i^\star-v_t(i\delta)|\le7\delta f(\OPT),
    \qquad |V_i^\star|\le4f(\OPT).
    \label{greedy:threshold-approximation}
\end{equation}
To verify the first bound, write $m=i-i_s$ after the switch.  Before the
switch, only the coefficient of the union expectation changes, so the error
is at most $2\delta f(\OPT)$.  After the switch, the coefficient of
$f(\OPT)$ changes by at most $\delta$, while
\[
\begin{aligned}
\left|(1-\delta)^m-(1-\delta)^i(1+\delta m)
  -e^{-m\delta}+e^{-i\delta}(1+\delta m)\right|\le\delta+(1+\delta m)\delta\le3\delta.
\end{aligned}
\]
Since the union expectation is at most $2f(\OPT)$, the total post-switch
error is at most $7\delta f(\OPT)$.  The bound $|V_i^\star|\le4f(\OPT)$
follows directly before the switch; after the switch, both
$(1-\delta)^m$ and $(1-\delta)^i(1+\delta m)$ lie in $[0,1]$.
Also, \eqref{greedy:power-error} and $e^\delta-1\le2\delta$ give
\[
    \left|(1-\delta)^{1/\delta-1-i}-e^{i\delta-1}\right|
    \le3\delta.
\]
Indeed, with $k=1/\delta-1-i$, the left-hand side is at most
$|e^{-k\delta}-(1-\delta)^k|+(e^\delta-1)e^{i\delta-1}\le3\delta$.
The function $v_t$ has absolute value at most $4f(\OPT)$ before the switch
and at most $3f(\OPT)$ after it.  Its derivative with respect to $s$ has
absolute value at most $2f(\OPT)$ before the switch and at most
$5f(\OPT)$ after it.  Hence $s\mapsto e^{s-1}v_t(s)$ is
$8f(\OPT)$-Lipschitz on each side.  Since $t=i_s\delta$ lies on the
summation grid, summing the standard $L\delta^2/2$ error over the
$1/\delta$ grid intervals bounds the left Riemann-sum error by
$4\delta f(\OPT)$.  The weight error contributes at most
$3\delta\cdot4f(\OPT)=12\delta f(\OPT)$ after summation, and the threshold
error contributes at most $7\delta f(\OPT)$.  Combining these estimates with
\eqref{greedy:threshold-approximation} gives
\begin{equation}
\begin{aligned}
    \left|\delta\sum_{i=0}^{1/\delta-1}
       (1-\delta)^{1/\delta-1-i}V_i^\star
       -\int_0^1e^{s-1}v_t(s)\,ds\right|\le(12+7+4)\delta f(\OPT)
      =23\delta f(\OPT).
\end{aligned}
\label{greedy:integral-error}
\end{equation}

Integrating the two branches directly gives
\begin{equation}
\begin{aligned}
\int_0^1 e^{s-1}v_t(s)\,ds
={}&\bigl[(2-t)e^{t-1}-e^{-1}\bigr]f(\OPT)\\
&-\bigl[e^{t-1}-e^{-1}\bigr]
  \E_{Z\sim\calZ}f(\OPT\cap Z)\\
&-\left[(2-t)e^{t-1}
  -\left(2+\frac{(1-t)^2}{2}\right)e^{-1}\right]
  \E_{Z\sim\calZ}f(\OPT\cup Z).
\end{aligned}
\label{greedy:continuous-integral}
\end{equation}
Finally, $0\le t-t_s<\delta$.  On the interval where the two switch
times assign different branches, the absolute values of the branches sum
to at most $7f(\OPT)$ by the branch bounds above.  To check the second
part, write $U=\E_{Z\sim\calZ}f(\OPT\cup Z)$.  For $s\ge\sigma$, the
post-switch branch satisfies
\[
    \frac{\partial v_\sigma(s)}{\partial\sigma}
    =e^{\sigma-s}f(\OPT)-\bigl(e^{\sigma-s}+e^{-s}\bigr)U,
\]
whose absolute value is at most $f(\OPT)+2U\le5f(\OPT)$.
Thus the interval on which the branch choices differ contributes at most
$7(t-t_s)f(\OPT)$, and their common post-switch interval contributes at
most $5(t-t_s)f(\OPT)$.  Hence replacing $t$ by $t_s$ in
\eqref{greedy:continuous-integral} changes its value by at most
$12\delta f(\OPT)$.  Equations~\eqref{greedy:threshold-convolution}
and~\eqref{greedy:integral-error} therefore yield the desired expression
with error $(14\varepsilon+35\delta)f(\OPT)$.  Since
$\delta\le\varepsilon/(2\ell^2)$ and $\ell\ge10$,
$35\delta<\varepsilon$, proving the stated
$15\varepsilon f(\OPT)$ error.
\end{proof}

The preceding lemmas complete the analysis of
Algorithm~\ref{alg:guided-greedy}.  We select a correctly guessed run and
split according to whether it has a first failed threshold.  A failure gives
the improved guide in outcome~(ii), whereas a completed run gives
outcome~(i) after deterministic EME rounding.

\begin{proof}[Proof of Theorem~\ref{thm:guided-greedy}]
By \eqref{greedy:correct-guesses}, the algorithm executes a correctly
guessed run.  If this run has a first round with $g_i(\OPT)<V_i$,
Lemma~\ref{lem:greedy-failure} supplies an earlier distribution $D_j$
whose induced guide distribution belongs to $\mathcal G$, is
$\varepsilon$-stationary, and satisfies
\eqref{greedy:improved-guide}.

Otherwise $g_i(\OPT)\ge V_i$ in every round, so
Lemma~\ref{lem:greedy-completed-run} bounds the weighted EME value of the
final states by the right-hand side of
\eqref{greedy:successful-bound}.  For each final state $\omega$,
Algorithm~\ref{alg:guided-greedy} retains an independent set $S_\omega$
with $f(S_\omega)\ge F(y_{1/\delta}^\omega)$ by
Theorem~\ref{thm:eme-rounding}.  Since the final state weights sum to one,
\[
    \max_{S\in\mathcal C}f(S)
    \ge\sum_{\omega\in\Omega_{1/\delta}}w_\omega f(S_\omega)
    \ge\sum_{\omega\in\Omega_{1/\delta}}w_\omega
        F(y_{1/\delta}^\omega).
\]
This proves \eqref{greedy:successful-bound}.
\end{proof}

%% file: sections/04-analysis.tex
\section{Analysis and implementation}
\label{sec:analysis}

\subsection{Approximation guarantee}
\label{sec:approximation-analysis}
Corollary~\ref{cor:main-successful-run} identifies a recursive call with an
$\varepsilon$-stationary guide and a correctly guessed run meeting every
threshold.  Lemma~\ref{lem:greedy-completed-run} and deterministic EME
rounding give a candidate satisfying \eqref{greedy:successful-bound} at
that call.  We now apply the box guarantee to the same guide distribution
and combine the two bounds.  For the box candidate, we average the
single-base guarantee over the guide distribution and use stationarity to
replace the expected guide value by the intersection and union potentials.

\begin{lemma}
\label{lem:box-distribution}
Let $\calZ$ be an $\varepsilon$-stationary guide distribution, and let
$\mathcal C_{\mathrm{box}}$ contain the output of
$\operatorname{DeterministicBox}(Z,\rho)$ for every positive-weight base
$Z$ in its support.  Then
\begin{equation}
\begin{aligned}
\max_{S\in\mathcal C_{\mathrm{box}}}f(S)
\ge{}\frac{4\rho+\rho^2}{2(\rho+1)^2}
       \E_{Z\sim\calZ}f(\OPT\cap Z)
+\frac{\rho^2}{2(\rho+1)^2}
       \E_{Z\sim\calZ}f(\OPT\cup Z)
       -\varepsilon f(\OPT).
\end{aligned}
\label{analysis:box-distribution-bound}
\end{equation}
\end{lemma}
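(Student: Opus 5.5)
Apply Lemma~\ref{box:guarantee} to each positive-weight base $Z$ with the comparison set $Q=\OPT\cap Z$, average the resulting bounds over $\calZ$, and then use $\varepsilon$-stationarity to eliminate the expected guide value $\E f(Z)$ in favour of the intersection and union potentials.

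First, for every $Z\in\supp(\calZ)$ with positive weight, Lemma~\ref{box:guarantee} gives
\[
f(\operatorname{DeterministicBox}(Z,\rho))\ge\frac{2\rho f(\OPT\cap Z)+f(\varnothing)+\rho^2 f(Z)}{(\rho+1)^2}\ge\frac{2\rho f(\OPT\cap Z)+\rho^2 f(Z)}{(\rho+1)^2},
\]
where the last step uses $f(\varnothing)\ge0$. The maximum over $\mathcal C_{\mathrm{box}}$ is at least the $\calZ$-weighted average of these values. The weights are $\theta$ and $1-\theta$, and only positive-weight bases contribute. Hence
\[
\max_{S\in\mathcal C_{\mathrm{box}}}f(S)\ge\frac{2\rho\,\E f(\OPT\cap Z)+\rho^2\,\E f(Z)}{(\rho+1)^2}.
\]

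Second, rearrange the stationarity inequality \eqref{main:guide-stationarity} to obtain a lower bound on the expected guide value:
\[
\E f(Z)\ge\tfrac12\E f(Z\cup\OPT)+\tfrac12\E f(Z\cap\OPT)-\varepsilon f(\OPT).
\]
Its coefficient $\rho^2/(\rho+1)^2$ is positive, so we may substitute this bound. The intersection coefficient becomes $(2\rho+\rho^2/2)/(\rho+1)^2=(4\rho+\rho^2)/(2(\rho+1)^2)$, and the union coefficient becomes $\rho^2/(2(\rho+1)^2)$, matching \eqref{analysis:box-distribution-bound}. The error term is $\rho^2\varepsilon f(\OPT)/(\rho+1)^2\le\varepsilon f(\OPT)$.

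No step presents a real obstacle. The only points to check are that the sign of the $\E f(Z)$ coefficient permits the substitution, and that the convexity step (maximum is at least the average) uses only the stored bases and their weights, with no sampling.
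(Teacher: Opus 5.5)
Your proposal is correct and follows the paper's proof: apply Lemma~\ref{box:guarantee} to each support base with $Q=\OPT\cap Z$, bound the maximum by the $\calZ$-weighted average, and substitute the stationarity lower bound on $\E_{Z\sim\calZ}f(Z)$. The only difference is cosmetic: you drop $f(\varnothing)\ge0$ before averaging, whereas the paper drops it at the end.
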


\begin{proof}
Apply Lemma~\ref{box:guarantee} to each support base with
$Q=\OPT\cap Z$.  The largest returned value is at least the weighted
average of those values, so
\[
\begin{aligned}
\max_{S\in\mathcal C_{\mathrm{box}}}f(S)
\ge{}\frac{2\rho}{(\rho+1)^2}
       \E_{Z\sim\calZ}f(\OPT\cap Z)
+\frac{f(\varnothing)}{(\rho+1)^2}
 +\frac{\rho^2}{(\rho+1)^2}\E_{Z\sim\calZ}f(Z).
\end{aligned}
\]
By \eqref{main:guide-stationarity},
\[
    \E_{Z\sim\calZ}f(Z)
    \ge\frac12\E_{Z\sim\calZ}
          \bigl[f(\OPT\cap Z)+f(\OPT\cup Z)\bigr]
       -\varepsilon f(\OPT).
\]
Substitution gives \eqref{analysis:box-distribution-bound}, using
$f(\varnothing)\ge0$ and $\rho^2/(\rho+1)^2\le1$.
\end{proof}

We can now prove Theorem~\ref{thm:main-approximation} by combining the two
candidates.  We use
Corollary~\ref{cor:main-successful-run} to locate one recursive call at which
both bounds apply, take the same fixed convex combination as in the continuous
analysis, and verify its three coefficients.  The rank-zero and zero-value
instances are handled separately at the end.
\begin{proof}[Proof of the approximation guarantee in
Theorem~\ref{thm:main-approximation}]
First consider $r\ge1$ and $f(S_{\max})>0$.  By
Corollary~\ref{cor:main-successful-run}, the main algorithm visits a call
with an $\varepsilon$-stationary guide distribution $\calZ$ and a correctly
guessed run satisfying $g_i(\OPT)\ge V_i$ in every round.
Lemma~\ref{lem:greedy-completed-run} bounds the weighted EME value of its
final states by the right-hand side of \eqref{greedy:successful-bound}.
Theorem~\ref{thm:eme-rounding} produces a saved candidate of at least this
value, since the maximum rounded value is at least the weighted average.
Lemma~\ref{lem:box-distribution} applies to the box candidates saved at
the same call.  Each ancestor returns a maximum-value set among its own
candidates and recursive outputs, so the main output has at least the
value of either candidate.  Its value is therefore at least any convex
combination of their two lower bounds.

Take the box-bound weight to be $\alpha=0.1974$ and the
guided greedy bound weight to be $1-\alpha$.  Collecting the terms and
replacing the error
$\alpha\varepsilon+(1-\alpha)15\varepsilon$ by the weaker
$15\varepsilon$ gives
\begin{equation}
\begin{aligned}
f(S)\ge{}&
  \left\{(1-\alpha)
    \bigl[(2-t_s)e^{t_s-1}-e^{-1}\bigr]
    -15\varepsilon\right\}f(\OPT)\\
&+\left\{\alpha\frac{4\rho+\rho^2}{2(\rho+1)^2}
   -(1-\alpha)\bigl[e^{t_s-1}-e^{-1}\bigr]\right\}
   \E_{Z\sim\calZ}f(\OPT\cap Z)\\
&+\left\{\alpha\frac{\rho^2}{2(\rho+1)^2}
   -(1-\alpha)\left[(2-t_s)e^{t_s-1}
       -\left(2+\frac{(1-t_s)^2}{2}\right)e^{-1}\right]
  \right\}\E_{Z\sim\calZ}f(\OPT\cup Z).
\end{aligned}
\label{analysis:combined-bound}
\end{equation}
The three coefficient expressions in \eqref{analysis:combined-bound},
before the $15\varepsilon$ error, coincide with those in
\cite[Section~3.1]{BuchbinderF2024DR}, with our $\rho$ corresponding to its parameter
$r$.  For $\alpha=0.1974$, $\rho=2.22$, and $t_s=0.3682$, the coefficient
calculation there shows that the intersection and union coefficients are
non-negative, while the coefficient of $f(\OPT)$ before the error is at
least $0.40101$.  Since both expectations are non-negative, we obtain
\[
    f(S)\ge(0.40101-15\varepsilon)f(\OPT)
    \ge0.401f(\OPT),
\]
where the last inequality uses $\varepsilon=10^{-8}$.

If $r=0$, the empty set is the only feasible set and is returned.  If
$f(S_{\max})=0$, the bound $f(\OPT)\le nf(S_{\max})$ from the
preliminaries implies $f(\OPT)=0$; returning the empty set meets the
claimed non-strict ratio.  Removing dummy elements from a returned set
preserves its value and feasibility for the original matroid.
\end{proof}

%% file: sections/04-complexity.tex
\subsection{Oracle complexity and implementation}
\label{sec:complexity}
We count value- and independence-oracle queries, treating $\varepsilon$ as
fixed.  The rank and $S_{\max}$ in Algorithm~\ref{alg:main-deterministic}
are found using polynomially many independence and value queries.  If the
rank is zero, the empty set is the only feasible set.  If
$f(S_{\max})=0$, the bound $f(S)\le nf(S_{\max})$ for feasible $S$
shows that the empty set is optimal.  Hence all remaining calls satisfy
$r\ge1$ and $f(S_{\max})>0$.
An independence test in the dummy-augmented matroid reduces to checking
the set's cardinality and testing its non-dummy part in the original
matroid.

For $\lambda\le\lambda_{\max}=4n/\varepsilon$, an extended local search
starts at the all-dummy partitioned base with
$H_\lambda=f(\varnothing)\ge0$.  At any partitioned base,
\eqref{dir:g-bounds} gives
\[
    H_\lambda(\calT)
    =f(q(\calT))+\lambda W(\calT)
    \le(1+\lambda\ell)f(\OPT).
\]
Each accepted move raises this objective by more than
$\eta=\varepsilon f(S_{\max})/(4r)$.  Since
$f(\OPT)\le nf(S_{\max})$, the number of accepted moves is
\[
    O\!\left(\frac{nr(1+\lambda\ell)}{\varepsilon}\right)
    =O_\varepsilon(n^2r).
\]
One scan examines $O(\ell nr)$ exchanges and reassignments, each using
a constant number of evaluations of $f$ and $g$ and independence tests.
Thus a local search uses $O_\varepsilon(n^3r^2)=O_\varepsilon(n^5)$
such evaluations and queries.  The multiplier interval has length
$4n/\varepsilon$ and is refined to width at most
$\varepsilon/[4(\ell+4)n]$, so
Algorithm~\ref{alg:direction-oracle} performs
$O_\varepsilon(\log(2n))$ local searches.

At round $i$ of a guided greedy run, there are at most $2^{i+1}$ states,
since the input guide distribution and every direction distribution have
support at most two.  Each state's list has at most $i\ell$ entries.
Enumerating their activation patterns evaluates each summand of
\eqref{greedy:auxiliary} exactly; one evaluation of $g_i$ therefore uses
at most $2^{1/\delta+1+\ell/\delta}$ value queries, a constant for fixed
$\varepsilon$.  The rounding theorem uses
$O(n^2 2^{\fracnum(y)})$ value queries and
$O(n^5\log^2(2n))$ independence queries per final state; here
$\fracnum(y)\le\ell/\delta$.  Consequently, one guided greedy run uses
$O_\varepsilon(n^5\log^2(2n))$ oracle queries.  A box call uses
$O(n^2)$ value queries: its support grows by at most two states per
processed element.

There are $O_\varepsilon(\log(2n))$ enumerated guess triples per
recursive call.  Algorithm~\ref{alg:guided-greedy} stores at most
$1/\delta$ guide distributions per triple, so each call has
$B=O_\varepsilon(\log(2n))$ recursive children.  The guess enumeration
multiplies the cost of one run by another $O_\varepsilon(\log(2n))$, so the
nonrecursive cost of a call is
$C=O_\varepsilon(n^5\log^3(2n))$.  With depth
$d_{\max}=\lceil2/\varepsilon\rceil+1$, there are at most
$1+B+\cdots+B^{d_{\max}}=O_\varepsilon(\log^{d_{\max}}(2n))$ calls.
Multiplying this node count by $C$ gives
\[
    O_\varepsilon\!\left(
        n^5\log^{d_{\max}+3}(2n)
    \right)
\]
oracle queries for Algorithm~\ref{alg:main-deterministic}.  Since
$d_{\max}$ is constant when $\varepsilon$ is fixed and
$\log^{d_{\max}}(2n)=O_\varepsilon(n)$, the simpler conservative bound
\begin{equation}
    O_\varepsilon\bigl(n^6\log^3(2n)\bigr)
    \label{complexity:oracle-bound}
\end{equation}
also holds.  The larger logarithmic exponent, compared with an
implementation that fixes $\widehat v$ throughout a recursion tree,
reflects the explicit re-enumeration of all three guesses at every node
in Algorithm~\ref{alg:guided-greedy}.

All enumeration orders and tie-breaking rules are fixed, and every
probability mass is stored and manipulated explicitly; no sampling is
performed.  
Together with the approximation analysis above, this completes the
proof of Theorem~\ref{thm:main-approximation}.

%% file: sections/05-conclusion.tex
\section{Conclusion}
\label{sec:conclusion}

In this work, we developed a deterministic $0.401$-approximation algorithm for maximizing a nonnegative, non-monotone submodular function subject to a matroid constraint, matching the approximation guarantee of the randomized algorithm of \cite{BuchbinderF2024DR} and improving the previous deterministic $(0.385-\varepsilon)$ approximation algorithm of \cite{ChenGLSZ2026}. Our algorithm provides a deterministic implementation of their recursive guided framework using polynomially many oracle queries.

The main technical point is that exact evaluation through the extended
multilinear extension is not by itself sufficient for this
derandomization. The growth guarantee and the stationarity certificate
used to construct recursive guides must hold simultaneously for the
same distribution. Our direction oracle achieves this through an
explicit distribution supported on at most two partitioned bases:
their blocks specify the EME growth direction, while the distribution
of their underlying bases supplies the recursive guides. Together with
the bounded-support representation of EME states, this allows all
expectations and marginal values in the recursive algorithm to be
evaluated exactly and the terminal states to be rounded
deterministically.

The dependence of the query complexity on the internal accuracy
parameters remains large. Improving this dependence and extending the
finite-support direction-oracle approach to other recursive
approximation frameworks are natural directions for future work.

\paragraph{Statement on AI use.}
The main algorithmic idea underlying this work was proposed by
OpenAI's GPT-6 Astra model. The authors subsequently developed this
idea into the formal algorithms and proofs presented in the paper,
supplied and refined the technical details, and independently verified
the correctness of all mathematical arguments. The model also assisted
with the exposition of the manuscript. The authors reviewed the final
manuscript and take full responsibility for its contents.

\paragraph{Acknowledges.}
This work was supported in part by the National Natural Science Foundation of China Grants Number No. 62325210, 12501450 and the China Postdoctoral Science Foundation under Grant Number No. 2026T190829.

%% file: references.bib
@misc{ChenGLSZ2026,
  author = {Chen, Shengminjie and Gao, Yiwei and Lin, Kaifeng and Sun, Xiaoming and Zhang, Jialin},
  title = {Deterministic Algorithm for Non-monotone Submodular Maximization under Matroid and Knapsack Constraints},
  year = {2026},
  eprint = {2603.11996},
  archivePrefix = {arXiv},
  primaryClass = {cs.DS},
  doi = {10.48550/arXiv.2603.11996},
  url = {https://arxiv.org/abs/2603.11996},
  howpublished = {arXiv:2603.11996}
}

@article{LeeMNS2010,
  author = {Lee, Jon and Mirrokni, Vahab S. and Nagarajan, Viswanath and Sviridenko, Maxim},
  title = {Maximizing Nonmonotone Submodular Functions under Matroid or Knapsack Constraints},
  journal = {SIAM Journal on Discrete Mathematics},
  volume = {23},
  number = {4},
  pages = {2053--2078},
  year = {2010},
  doi = {10.1137/090750020},
  url = {https://doi.org/10.1137/090750020}
}

@article{SunZZZ2024,
  author = {Sun, Xiaoming and Zhang, Jialin and Zhang, Shuo and Zhang, Zhijie},
  title = {Improved Deterministic Algorithms for Non-monotone Submodular Maximization},
  journal = {Theoretical Computer Science},
  volume = {984},
  pages = {114293},
  year = {2024},
  doi = {10.1016/j.tcs.2023.114293},
  eprint = {2208.14388},
  archivePrefix = {arXiv},
  url = {https://doi.org/10.1016/j.tcs.2023.114293}
}

@inproceedings{BuchbinderF2024Deterministic,
  author = {Buchbinder, Niv and Feldman, Moran},
  title = {Deterministic Algorithm and Faster Algorithm for Submodular Maximization Subject to a Matroid Constraint},
  booktitle = {Proceedings of the 65th IEEE Annual Symposium on Foundations of Computer Science (FOCS)},
  pages = {700--712},
  publisher = {IEEE},
  year = {2024},
  doi = {10.1109/FOCS61266.2024.00050},
  eprint = {2408.03583},
  archivePrefix = {arXiv},
  url = {https://doi.org/10.1109/FOCS61266.2024.00050}
}

@inproceedings{BuchbinderF2025Extension,
  author = {Buchbinder, Niv and Feldman, Moran},
  title = {Extending the Extension: Deterministic Algorithm for Non-monotone Submodular Maximization},
  booktitle = {Proceedings of the 57th Annual ACM Symposium on Theory of Computing (STOC)},
  pages = {1130--1141},
  publisher = {Association for Computing Machinery},
  year = {2025},
  doi = {10.1145/3717823.3718120},
  eprint = {2409.14325},
  archivePrefix = {arXiv},
  url = {https://arxiv.org/abs/2409.14325},
  note = {Full version: arXiv:2409.14325}
}

@inproceedings{BuchbinderF2024DR,
  author = {Buchbinder, Niv and Feldman, Moran},
  title = {Constrained Submodular Maximization via New Bounds for {DR}-Submodular Functions},
  booktitle = {Proceedings of the 56th Annual ACM Symposium on Theory of Computing (STOC)},
  pages = {1820--1831},
  publisher = {Association for Computing Machinery},
  year = {2024},
  doi = {10.1145/3618260.3649630},
  eprint = {2311.01129},
  archivePrefix = {arXiv},
  url = {https://arxiv.org/abs/2311.01129},
  note = {Full version: arXiv:2311.01129}
}

@article{BuchbinderFNS2015,
  author = {Buchbinder, Niv and Feldman, Moran and Naor, Joseph and Schwartz, Roy},
  title = {A Tight Linear Time {$(1/2)$}-Approximation for Unconstrained Submodular Maximization},
  journal = {SIAM Journal on Computing},
  volume = {44},
  number = {5},
  pages = {1384--1402},
  year = {2015},
  doi = {10.1137/130929205},
  url = {https://doi.org/10.1137/130929205}
}

@article{BuchbinderF2018,
  author = {Buchbinder, Niv and Feldman, Moran},
  title = {Deterministic Algorithms for Submodular Maximization Problems},
  journal = {ACM Transactions on Algorithms},
  volume = {14},
  number = {3},
  pages = {32:1--32:20},
  year = {2018},
  doi = {10.1145/3184990},
  eprint = {1508.02157},
  archivePrefix = {arXiv},
  url = {https://doi.org/10.1145/3184990}
}

@article{Brualdi1969,
  author = {Brualdi, Richard A.},
  title = {Comments on Bases in Dependence Structures},
  journal = {Bulletin of the Australian Mathematical Society},
  volume = {1},
  number = {2},
  pages = {161--167},
  year = {1969},
  doi = {10.1017/S000497270004140X},
  url = {https://doi.org/10.1017/S000497270004140X}
}

@book{Oxley2011,
  author = {Oxley, James},
  title = {Matroid Theory},
  edition = {2},
  publisher = {Oxford University Press},
  year = {2011},
  isbn = {9780198566946},
  doi = {10.1093/acprof:oso/9780198566946.001.0001},
  url = {https://academic.oup.com/book/34846}
}

@book{Schrijver2003,
  author = {Schrijver, Alexander},
  title = {Combinatorial Optimization: Polyhedra and Efficiency},
  series = {Algorithms and Combinatorics},
  volume = {24},
  publisher = {Springer},
  year = {2003},
  isbn = {9783540443896},
  url = {https://link.springer.com/book/9783540443896}
}

@article{FeigeMV2011,
  author = {Feige, Uriel and Mirrokni, Vahab S. and Vondr{\'a}k, Jan},
  title = {Maximizing Non-monotone Submodular Functions},
  journal = {SIAM Journal on Computing},
  volume = {40},
  number = {4},
  pages = {1133--1153},
  year = {2011},
  doi = {10.1137/090779346},
  url = {https://doi.org/10.1137/090779346}
}

@inproceedings{HanCCW2020,
  author = {Han, Kai and Cao, Zongmai and Cui, Shuang and Wu, Benwei},
  title = {Deterministic Approximation for Submodular Maximization over a Matroid in Nearly Linear Time},
  booktitle = {Advances in Neural Information Processing Systems},
  volume = {33},
  year = {2020},
  eprint = {2010.11420},
  archivePrefix = {arXiv},
  url = {https://proceedings.neurips.cc/paper_files/paper/2020/hash/05128e44e27c36bdba71221bfccf735d-Abstract.html}
}

@inproceedings{ChenNPK2024,
  author = {Chen, Yixin and Nath, Ankur and Peng, Chunli and Kuhnle, Alan},
  title = {Discretely Beyond {$1/e$}: Guided Combinatorial Algorithms for Submodular Maximization},
  booktitle = {Advances in Neural Information Processing Systems},
  volume = {37},
  year = {2024},
  eprint = {2405.05202},
  archivePrefix = {arXiv},
  url = {https://proceedings.neurips.cc/paper_files/paper/2024/hash/c4e40d310af1be55f2c0f32cb16a9a16-Abstract-Conference.html}
}

@article{NemhauserWF1978,
  author = {Nemhauser, George L. and Wolsey, Laurence A. and Fisher, Marshall L.},
  title = {An Analysis of Approximations for Maximizing Submodular Set Functions---{I}},
  journal = {Mathematical Programming},
  volume = {14},
  pages = {265--294},
  year = {1978},
  doi = {10.1007/BF01588971},
  url = {https://doi.org/10.1007/BF01588971}
}

@article{Sviridenko2004,
  author = {Sviridenko, Maxim},
  title = {A Note on Maximizing a Submodular Set Function Subject to a Knapsack Constraint},
  journal = {Operations Research Letters},
  volume = {32},
  number = {1},
  pages = {41--43},
  year = {2004},
  doi = {10.1016/S0167-6377(03)00062-2},
  url = {https://doi.org/10.1016/S0167-6377(03)00062-2}
}

@article{CalinescuCPV2011,
  author = {C{\u a}linescu, Gruia and Chekuri, Chandra and P{\'a}l, Martin and Vondr{\'a}k, Jan},
  title = {Maximizing a Monotone Submodular Function Subject to a Matroid Constraint},
  journal = {SIAM Journal on Computing},
  volume = {40},
  number = {6},
  pages = {1740--1766},
  year = {2011},
  doi = {10.1137/080733991},
  url = {https://doi.org/10.1137/080733991}
}

@inproceedings{GuptaRST2010,
  author = {Gupta, Anupam and Roth, Aaron and Schoenebeck, Grant and Talwar, Kunal},
  title = {Constrained Non-monotone Submodular Maximization: Offline and Secretary Algorithms},
  booktitle = {Internet and Network Economics (WINE)},
  series = {Lecture Notes in Computer Science},
  volume = {6484},
  pages = {246--257},
  publisher = {Springer},
  year = {2010},
  doi = {10.1007/978-3-642-17572-5_20},
  eprint = {1003.1517},
  archivePrefix = {arXiv},
  url = {https://arxiv.org/abs/1003.1517}
}

@inproceedings{PhamTHT2023,
  author = {Pham, Canh V. and Tran, Tan D. and Ha, Dung T. K. and Thai, My T.},
  title = {Linear Query Approximation Algorithms for Non-monotone Submodular Maximization under Knapsack Constraint},
  booktitle = {Proceedings of the Thirty-Second International Joint Conference on Artificial Intelligence (IJCAI)},
  pages = {4127--4135},
  publisher = {International Joint Conferences on Artificial Intelligence Organization},
  year = {2023},
  doi = {10.24963/ijcai.2023/459},
  eprint = {2305.10292},
  archivePrefix = {arXiv},
  url = {https://www.ijcai.org/proceedings/2023/459}
}

@article{Pham2025,
  author = {Pham, Canh V.},
  title = {Enhanced Deterministic Approximation Algorithm for Non-monotone Submodular Maximization under Knapsack Constraint with Linear Query Complexity},
  journal = {Journal of Combinatorial Optimization},
  volume = {49},
  number = {1},
  pages = {2},
  year = {2025},
  doi = {10.1007/s10878-024-01232-9},
  url = {https://doi.org/10.1007/s10878-024-01232-9},
  note = {Article 2}
}

@misc{LiuZ2026,
  author = {Liu, Zihui and Zhang, Zhijie},
  title = {Linear-Query Deterministic Approximation for Non-monotone Submodular Maximization under a Knapsack Constraint},
  year = {2026},
  eprint = {2609.25679},
  archivePrefix = {arXiv},
  primaryClass = {cs.DS},
  doi = {10.48550/arXiv.2609.25679},
  url = {https://arxiv.org/abs/2609.25679},
  howpublished = {arXiv:2609.25679}
}

@inproceedings{feldman2011,
  author    = {Moran Feldman and Joseph Naor and Roy Schwartz},
  title     = {A Unified Continuous Greedy Algorithm for Submodular Maximization},
  booktitle = {Proceedings of the 52nd Annual IEEE Symposium on Foundations of Computer Science (FOCS)},
  pages     = {570--579},
  year      = {2011},
  doi       = {10.1109/FOCS.2011.46}
}

@inproceedings{ene2016,
  author    = {Alina Ene and Huy L. Nguyen},
  title     = {Constrained Submodular Maximization: Beyond $1/e$},
  booktitle = {Proceedings of the 57th Annual IEEE Symposium on Foundations of Computer Science (FOCS)},
  pages     = {248--257},
  year      = {2016},
  doi       = {10.1109/FOCS.2016.34}
}

@article{buchbinder2019,
  author  = {Niv Buchbinder and Moran Feldman},
  title   = {Constrained Submodular Maximization via a Nonsymmetric Technique},
  journal = {Mathematics of Operations Research},
  volume  = {44},
  number  = {3},
  pages   = {988--1005},
  year    = {2019},
  doi     = {10.1287/moor.2018.0955}
}

@article{filmus2014,
  author  = {Yuval Filmus and Justin Ward},
  title   = {Monotone Submodular Maximization over a Matroid via Non-Oblivious Local Search},
  journal = {SIAM Journal on Computing},
  volume  = {43},
  number  = {2},
  pages   = {514--542},
  year    = {2014},
  doi     = {10.1137/130920277}
}

@misc{Aggarwal2026Geometry,
      title={Geometry-Dependent Bounds for Online Non-Monotone {DR}-Submodular Maximization}, 
      author={Vaneet Aggarwal},
      year={2026},
      eprint={2610.00545},
      archivePrefix={arXiv},
      primaryClass={cs.LG},
      url={https://arxiv.org/abs/2610.00545}, 
      howpublished = {arXiv:2610.00545}
}
